\newcommand{\BC}{\mathcal{B}}
\newcommand{\HC}{\mathcal{H}}
\newcommand{\NC}{\mathcal{N}}
\newcommand{\OC}{\mathcal{O}}
\newcommand{\SC}{\mathcal{S}}
\newcommand{\UC}{\mathcal{U}}
\newcommand{\WC}{\mathcal{W}}
\renewcommand{\vec}[1]{\boldsymbol{#1}}
\newcommand{\thv}{\vec{\theta}}
\DeclareMathOperator*{\argmin}{arg\,min}
\newcommand{\Tr}{{\rm Tr}}
\newcommand{\Var}{{\rm Var}}
\newcommand{\Cov}{{\rm Cov}}
\newcommand{\poly}{\operatorname{poly}}
\newcommand{\id}{\mathbb{1}}
\DeclareRobustCommand{\rchi}{{\mathpalette\irchi\relax}}
\newcommand{\irchi}[2]{\raisebox{0.6\depth}{$#1\chi$}}
\newcommand{\dbloverline}[1]{\overline{\overline{#1}}}
\newcommand{\ket}[1]{|#1\rangle}               
\newcommand{\bra}[1]{\langle #1|}              
\newtheorem{theorem}{Theorem}
\newtheorem{lemma}{Lemma}
\newtheorem{remark}{Remark}
\newtheorem{supproposition}{Supplemental Proposition}
\newtheorem{corollary}{Corollary}
\newtheorem{proposition}{Proposition}
\newtheorem{definition}{Definition}
\begin{document}

\title{Can Error Mitigation Improve Trainability of Noisy Variational Quantum Algorithms?}

\author{Samson Wang}
\email{samsonwang@outlook.com}
\affiliation{Theoretical Division, Los Alamos National Laboratory, Los Alamos, NM 87545, USA}
\affiliation{Department of Physics, Imperial College London, London, SW7 2AZ, UK}
\orcid{0000-0003-2344-0634}

\author{Piotr Czarnik}
\affiliation{Theoretical Division, Los Alamos National Laboratory, Los Alamos, NM 87545, USA}
\affiliation{Faculty of Physics, Astronomy, and Applied Computer Science, Jagiellonian University, Krak\'ow, Poland}
\affiliation{Mark Kac Center for Complex Systems Research, Jagiellonian University, Krak\'ow, Poland}
\orcid{0000-0002-0477-1158}

\author{Andrew Arrasmith}
\affiliation{Theoretical Division, Los Alamos National Laboratory, Los Alamos, NM 87545, USA}
\affiliation{Quantum Science Center, Oak Ridge, TN 37931, USA}
\orcid{0000-0003-2674-9370}

\author{M. Cerezo}
\affiliation{Theoretical Division, Los Alamos National Laboratory, Los Alamos, NM 87545, USA}
\affiliation{Quantum Science Center, Oak Ridge, TN 37931, USA}
\affiliation{Center for Nonlinear Studies, Los Alamos National Laboratory, Los Alamos, NM 87545, USA
}
\orcid{0000-0002-2757-3170}

\author{Lukasz Cincio}
\affiliation{Theoretical Division, Los Alamos National Laboratory, Los Alamos, NM 87545, USA}
\affiliation{Quantum Science Center, Oak Ridge, TN 37931, USA}
\orcid{0000-0002-6758-4376}

\author{Patrick J. Coles}
\affiliation{Theoretical Division, Los Alamos National Laboratory, Los Alamos, NM 87545, USA}
\affiliation{Quantum Science Center, Oak Ridge, TN 37931, USA}
\orcid{}

\maketitle

\begin{abstract}
  Variational Quantum Algorithms (VQAs) are often viewed as the best hope for near-term quantum advantage. However, recent studies have shown that noise can severely limit the trainability of VQAs, e.g., by exponentially flattening the cost landscape and suppressing the magnitudes of cost gradients. Error Mitigation (EM) shows promise in reducing the impact of noise on near-term devices. Thus, it is natural to ask whether EM can improve the trainability of VQAs. In this work, we first show that, for a broad class of EM strategies, exponential cost concentration cannot be resolved without committing exponential resources elsewhere. This class of strategies includes as special cases Zero-Noise Extrapolation, Virtual Distillation, Probabilistic Error Cancellation, and Clifford Data Regression. Second, we perform analytical and numerical analysis of these EM protocols, and we find that some of them (e.g., Virtual Distillation) can make it harder to resolve cost function values compared to running no EM at all. As a positive result, we do find numerical evidence that Clifford Data Regression (CDR) can aid the training process in certain settings where cost concentration is not too severe. Our results show that care should be taken in applying EM protocols as they can either worsen or not improve trainability. On the other hand, our positive results for CDR highlight the possibility of engineering error mitigation methods to improve trainability.
\end{abstract}

\section{Introduction}

The prospect of obtaining quantum computational advantage for practical problems, such as simulating systems in chemistry and materials science, has generated much excitement. The past few years have witnessed tremendous progress towards this end, with significant focus on algorithm development for Noisy Intermediate-Scale Quantum (NISQ) computers. In particular, Variational Quantum Algorithms (VQAs) are a leading algorithmic approach because they adapt to the constraints of NISQ devices. Specifically, VQAs minimize a cost function by training a parameterized quantum circuit via a classical-quantum feedback loop \cite{mcclean2016theory, cerezo2020variationalreview}. The cost is computed efficiently on a quantum computer whilst the parameter optimization is carried out classically. Different implementations of this versatile framework have been proposed for a broad spectrum of problems from dynamical quantum simulation \cite{mcardle2019variational, grimsley2019adaptive, cirstoiu2020variational, commeau2020variational, gibbs2021long, yao2020adaptive, endo2020variational, li2017efficient, lau2021quantum,heya2019subspace,yuan2019theory} to machine learning \cite{schuld2020circuit, verdon2017quantum, romero2021variational, farhi2018classification, beer2020training, cong2019quantum, grant2018hierarchical} and beyond \cite{peruzzo2014variational,bauer2016hybrid,jones2019variational, farhi2014quantum,wang2018quantum,crooks2018performance,hadfield2019quantum, bravo2020variational, xu2019variational, koczor2020variational, meyer2020variational, anschuetz2019variational, khatri2019quantum,sharma2019noise,jones2018quantum, arrasmith2019variational,cerezo2020variational,larose2019variational,verdon2019quantum,johnson2017qvector}.

A central challenge in the NISQ regime is to combat the effects of noise as full error correction is not possible \cite{preskill2018quantum}. Decoherence, gate errors, and measurement noise all conspire to limit the complexity of quantum circuits that can be implemented on NISQ devices. While VQAs themselves offer some strategy to mitigate the impact of noise \cite{mcclean2016theory}, it is widely viewed that VQAs alone will not be enough, and additional strategies will be needed to obtain quantum advantage in the face of noise. This has spawned the field of error mitigation (EM), and many researchers believe that VQAs combined with EM techniques will be the path forward. Indeed, EM methods like Zero-Noise Extrapolation \cite{temme2017error,li2017efficient, endo2018practical,kandala2018error}, Clifford Data Regression \cite{czarnik2020error}, Virtual Distillation \cite{huggins2020virtual, koczor2020exponential}, Probabilistic Error Cancellation \cite{temme2017error, endo2018practical} and others \cite{mcclean2017hybrid,o2021error,mcardle2019error,bonet2018low,huggins2021efficient, barron2020measurement, smith2021qubit, su2021error} have been demonstrated to reduce errors of observable expectation values, sometimes by orders of magnitude. Hence, there has been hope that one can simply train the VQA in the presence of noise, and then after training, one can apply an EM method to extract the correct cost value (e.g., the ground state energy in the case of the variational quantum eigensolver~\cite{peruzzo2014variational}).

However, new challenges have recently been discovered for this approach~\cite{wang2020noise, franca2020limitations}. It is now recognized that noise impacts the trainability of VQAs, that is, the ability of the classical optimizer to find the global cost minimum. For ansatzes (i.e., parameterized quantum circuits) with depth linear or superlinear in the number of qubits and local Pauli noise, the cost function landscape exponentially flattens, leading to an exponentially vanishing cost gradient, a phenomenon known as Noise-Induced Barren Plateaus (NIBPs)~\cite{wang2020noise}. Thus, noise impedes the training process of VQAs, as in such a setting one requires an exponential number of shots per optimization step to resolve the cost landscape against finite sampling noise. As with other barren plateau effects~\cite{mcclean2018barren,cerezo2021cost},  this exponential scaling does not only arise for gradient-based optimizers but also impacts gradient-free methods~\cite{arrasmith2020effect} and optimizers that use higher-order derivatives~\cite{cerezo2020impact}. NIBPs represent a serious issue for VQA scalability, and could ultimately be a roadblock for near-term quantum advantage. It is therefore crucial to investigate potential methods to mitigate them.

Given the great success of EM methods in suppressing error in observable expectation values, it is natural to ask whether EM methods could address NIBPs. More generally, one could simply ask: does it help to use error mitigation during the training process for VQAs? This question is precisely the topic of our article. We remark that error mitigation has been successfully implemented during the VQA training process for a small-scale problem~\cite{kandala2018error}. However, it is an open question as to whether or not EM can resolve large-scale trainability issues associated with cost concentration. This is due to the fact that even 
\begin{figure}[tp!]
    \includegraphics[width=\columnwidth,trim={0cm 6.6cm 0cm 0cm},clip]{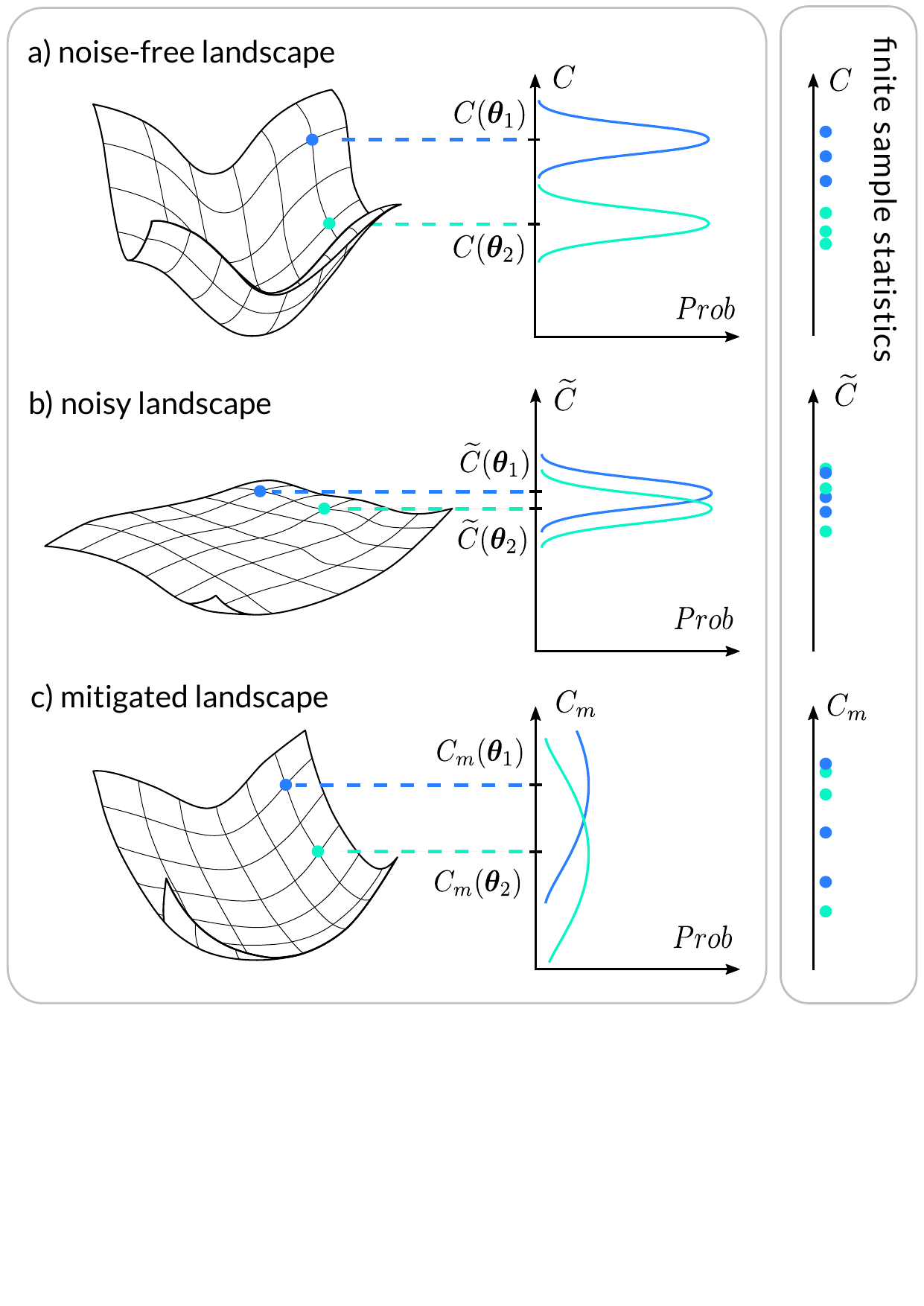}
    \caption{\textbf{Error mitigation can impair the resolvability of cost function landscapes.} (a): A central primitive in training VQAs is the task of comparing two cost function values ($C(\thv_1 )$ and $C(\thv_2 )$) on the cost landscape in parameter space. Ideally (with infinite sampling), these cost values correspond to the mean values of some probability distributions (left panel). However, in an experimental setup, one only has a finite shot budget and by collecting measurement statistics one obtains an estimate of the mean values by sampling from these distributions (right panel). (b): The effect of certain types of noise models is to concentrate cost function values. This impedes trainability as any two cost function values ($\widetilde{C}(\thv_1 )$ and $\widetilde{C}(\thv_2 )$)  have small separation and require many shots to accurately distinguish. (c): Error mitigation can mitigate many effects of noise and potentially recover key features of the noise-free cost function. In an ideal scenario, the separation of the mitigated cost values ($C_m(\thv_1 )$ and $C_m(\thv_2 )$) closely resembles that of the noise-free landscape. However, the caveat is that the variance of statistical outcomes can increase greatly. The effect of this is that the two cost function points can often require even more shots to resolve accurately, compared to the unmitigated case.}
    \label{fig:paper-summary}
\end{figure}
though EM can reverse the concentration of cost values, it also increases the statistical uncertainty in the mitigated quantities, as summarized in Figure~\ref{fig:paper-summary}. { If the statistical uncertainty increases too quickly, then error mitigation may make it harder to find cost minimizing directions, or reliably compare relative magnitudes of cost values, which are the key tasks in order to reliably train on a cost landscape. Centrally, it is a non-trivial question as to whether or not EM improves the \textit{resolvability} of cost function values.  To be clear, we are not solely quantifying the effectiveness of error mitigation in reconstructing the cost landscape given unbounded samples, which is a central question widely studied in the literature. Rather, viewing NIBPs as an exponential sample issue, in this work we compare the sample efficiency of extracting information needed for optimization over the cost landscape using error mitigation versus not using error mitigation at all.}

In this work, we investigate the effects of error mitigation on the resolvability of the cost function landscape. First, we consider a broad class of error mitigation protocols and show that, under the class of local Pauli noise that is known to cause NIBPs, in order to reverse exponential cost concentration any such protocol needs to spend resources (e.g., shot resources or number of state copies) scaling at least exponentially in the number of qubits. This suggests that NIBPs are a serious scaling issue that cannot be simply resolved with error mitigation.

Second, we study four specific error mitigation protocols in further detail: Zero-Noise Extrapolation, Virtual Distillation, Probabilistic Error Cancellation, and strategies that implement a linear ansatz which includes Clifford Data Regression. We find that Virtual Distillation can actually decrease the resolvability of the noisy cost landscape, and impede trainability. Under more restrictive assumptions on the cost landscape, we find a similar result for Zero-Noise Extrapolation. We also show that any improvement in the resolvability after applying Probabilistic Error Cancellation under local depolarizing noise exponentially degrades with increasing number of qubits. Finally, for strategies that use a linear ansatz such as Clifford Data Regression, we show that there is no change to the resolvability of any pair of cost values if the same ansatz is used. However, we do observe numerically that Clifford Data Regression increases trainability in some settings. This last observation provides some hope that a careful choice of error mitigation method can be useful. It also suggests that researchers could design and engineer error mitigation methods to enhance VQA trainability.

The rest of the manuscript is structured as follows. Section \ref{sec:framework} introduces the framework and notation for our work. We present our theoretical results in Section \ref{sec:theoreticalresults} and our numerical results in Section  \ref{sec:numerics}. Finally, our concluding discussions are presented in Section \ref{sec:discussions}. The proofs for our main results are presented in the Appendix.

\section{Framework}\label{sec:framework}

\subsection{Variational Quantum Algorithms}

The main goal of Variational Quantum Algorithms (VQAs) is to solve an optimization problem by minimizing a cost function that can be efficiently estimated on a quantum computer. In this work we consider settings where the cost function takes the form
\begin{equation}\label{eq:costfn}
    C(\thv) = \Tr\left[U(\thv)\rho_{in} U\dag(\thv)O\right]\,.
\end{equation}
In the above, given some Hilbert space $\HC$, we define the set of density operators $\SC(\HC)$ and set of bounded linear operators $\BC(\HC)$. We then denote $\rho_{in} \in \SC(\HC)$ as the input state, $U(\thv)\in \BC(\HC)$ as a unitary that corresponds to a parametrized quantum circuit with trainable parameters $\thv$, and $O\in \BC(\HC)$ is a Hermitian operator. The Variational Quantum Eigensolver~\cite{peruzzo2014variational}, variational quantum compiling~\cite{khatri2019quantum,sharma2019noise,jones2018quantum,heya2018variational}, quantum autoencoders~\cite{romero2017quantum}, and several other VQAs fit under the framework of Eq.~\eqref{eq:costfn}.

A quantum computer is employed to evaluate the cost function, or gradients thereof, and part of the computational complexity of the algorithm is designated to a classical computer that leverages the power of classical optimizers to solve the problem
\begin{equation}\label{eq:optimization}
\argmin_{\thv}    C(\thv)\,.
\end{equation}
The optimization task defined in Eq.~\eqref{eq:optimization} has been shown to be NP-hard~\cite{bittel2021training}. Moreover, on top of the typical difficulties associated with solving classical non-convex optimization problems, there are challenges that arise when training the parameters of a VQA due to the quantum nature of the problem itself. 

As quantum mechanics is intrinsically  a probabilistic theory, one has to deal with shot noise arising from finite sampling when estimating the cost function (or its gradient). This has led to the development of several quantum-aware optimizers that are frugal in the number of shots \cite{kubler2020adaptive,arrasmith2020operator, gu2021adaptive}.  Additionally,  it has been recently shown that certain properties of the cost function can induce so-called barren plateaus, originating due to highly expressive ansatzes~\cite{mcclean2018barren,holmes2021connecting,holmes2021barren}, global cost functions~\cite{cerezo2021cost}, high levels of entanglement~\cite{marrero2020entanglement,patti2020entanglement}, or the controllability of $U(\thv)$~\cite{larocca2021diagnosing}. When a cost function exhibits a barren plateau, with high probability the cost function partial derivatives are exponentially suppressed across the landscape. This means that an exponentially large number of shots are needed to navigate the flat landscape and determine a cost-minimizing direction~\cite{cerezo2020impact,arrasmith2020effect}. 

In this work we investigate the effect of noise and error mitigation techniques in solving the optimization task of Eq.~\eqref{eq:optimization}. For this purpose we investigate the task of resolving two points on the cost function landscape, as presented in Fig.~\ref{fig:paper-summary}. This is a central primitive in the training process that is utilized at each optimization step, regardless of whether one is using gradient-based or gradient-free methods. In gradient-based methods, a common strategy is to use the parameter shift rule, which constructs partial derivatives from two cost function values \cite{mitarai2018quantum, schuld2019evaluating}. Gradient-free methods such as simplex-based methods also compare two or more cost function values at each optimization step \cite{nelder1965simplex,powell1994direct}. Thus, this task is a key step for both gradient-based and gradient-free optimizers, and it reflects the ability of the optimizer to find a cost-minimizing direction at each step of the optimization.  As discussed below, under a finite shot budget this task becomes harder under cost concentration, leading to trainability issues.

\subsection{Effect of noise on the training landscape}\label{sec:effectsofnoise}

Hardware noise can impact the cost function landscape in a variety of ways such as changing the optimal cost function value, shifting the position of minima, and demoting a global minimum to a local minimum. All of the above present further challenges in the training of VQAs. In this section we briefly review some of the literature on the effect of noise on VQAs cost function landscapes. We summarize some of these effects in Fig.~\ref{fig:noise-effects}.

\subsubsection{Noise resilience}

Certain cost functions have been demonstrated to show optimal parameter resilience under particular noise models \cite{sharma2019noise}. This is a phenomenon where the position of the global cost minimum of the cost landscape is invariant under the action of noise. This has important consequences for trainability. There are many VQAs where the goal is to obtain optimal parameters, rather than the optimal cost value, such as when solving combinatorial optimization problems with the Quantum Approximate Optimization Algorithm (QAOA) \cite{farhi2014quantum}. If such cost landscapes display optimal parameter resilience, this leaves open the possibility of noisy training even if the cost value of the global minimum is altered by the noise. In fact, it has recently been shown that a small amount of dephasing errors can recover layerwise training of the QAOA \cite{campos2021training}. However, noise can also severely affect the trainability of the landscape in a number of ways, which we summarize below. 

\subsubsection{Noise-induced cost concentration and noise-induced barren plateaus}\label{sec:costconcentration}

Here we summarize the phenomenon of noise-induced cost concentration and noise-induced barren plateaus (NIBPs), as well as introduce some notation that we will use throughout the rest of this manuscript. 
This was formulated in Ref.~\cite{wang2020noise} for a general class of VQAs and a class of Pauli noise that includes as a special case local depolarizing noise. (See also Refs.~\cite{franca2020limitations,xue2021effects,marshall2020characterizing} for other discussions of the impact of noise.)
Consider a model of noise acting through a depth $L$ circuit with $n$-qubit input state $\rho_{in}$ as 
\begin{equation} \label{eq:noisystate}
     \widetilde{\rho} = (\mathcal{N}\circ \mathcal{U}_L\circ \cdots \circ \mathcal{N}\circ \mathcal{U}_1\circ \mathcal{N}\big) (\rho_{in}) 
\end{equation}
where $\{\UC_k\}^L_{k=1}$ denote unitary channels that describe collections of gates that act together in a layer, and $\NC=\bigotimes_{i=1}^n \NC_i$ is an instance of local Pauli channels. {In general we can consider different Pauli noise channels in each layer and our theoretical results can be simply extended to such settings, but we do not consider it here for simplicity of presentation.}
The action of $\mathcal{N}_j$ on a local Pauli operator $\sigma\in\{X,Y,Z\}$  can be expressed as 
\begin{equation}\label{eq:noisemodel}
    \mathcal{N}_j(\sigma)=q^{(j)}_{\sigma}\sigma\,,
\end{equation}
where we assume $-1< q^{(j)}_X,q^{(j)}_Y,q^{(j)}_Z<1$ for all qubit labels $j$. Here, we characterize the noise strength with a single parameter { $q=\max_{j,\sigma}\big\{\big|q^{(j)}_\sigma \big|\big\}<1$}. We denote a noisy cost function as
\begin{equation}
    \widetilde{C} = \Tr\left[O\widetilde{\rho}\right]\,,
\end{equation}
where $O$ is some Hermitian measurement operator (throughout the article we will use a tilde to denote noisy quantities). In Ref.~\cite{wang2020noise} it was shown that 
\begin{equation}\label{eq:costconcentration}
    \Big\vert \widetilde{C} - \frac{1}{2^n}\Tr[O]\Big\vert\, \leq\, D(q,n) \,,
\end{equation}
where $D(q,n)\in \mathcal{O}(q^{\alpha n})$ for some positive constant $\alpha$ if $L \in \Omega(n)$. {More generally the quantity on the left-hand side of Eq.~\eqref{eq:costconcentration} vanishes exponentially with increasing circuit depth $L$ for any fixed $n$.}  Thus, in the presence of the class of noise models considered, the noisy cost function exponentially concentrates on a fixed value if the depth scales linearly or superlinearly in the number of qubits.

The gradients across the cost function landscape show similar scaling \cite{wang2020noise}, {in that they also vanish exponentially in the number of qubits for linear depth circuits}, demonstrating a phenomenon known as NIBPs. This implies that the task of accurately determining gradients or cost function differences during the training process requires an exponential number of shots due to the need to resolve quantities to an exponentially small precision.

\begin{figure}[t]
    \includegraphics[width=\columnwidth,clip]{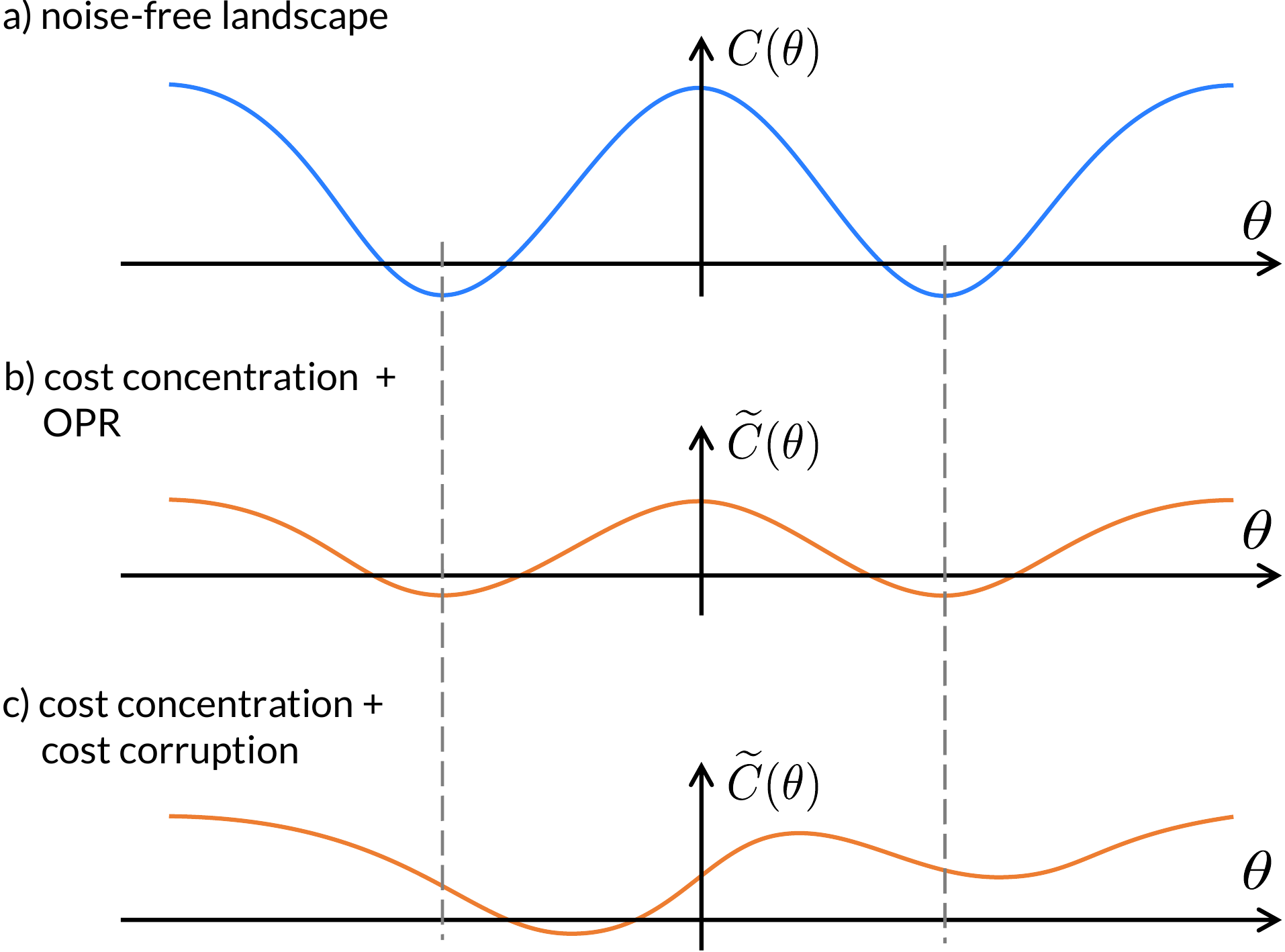}
    \caption{\textbf{Schematic of different effects due to noise on cost landscapes.} We present a 1-dimensional slice of a simplified cost landscape corresponding to a single parameter $\theta$. a) Depending on the parameterization strategy, some ansatzes can have degenerate minima. b) Certain types of local Pauli noise can cause the cost landscape to exponentially concentrate on a fixed value. Some can problems display optimal parameter resilience (OPR), where the location of the optimal parameters are invariant under action of the certain noise models. c) Aside from cost concentration, noise can also corrupt the cost landscape by breaking the degeneracy of optimal parameters, and shifting the location of minima.}\label{fig:noise-effects}
\end{figure}

\subsubsection{Cost corruption}\label{sec:costcorruption}

In general, a noise model that exhibits cost concentration and NIBPs would not simply uniformly flatten the cost landscape. Instead, we expect noise to additionally alter the cost landscape in many non-trivial ways. We refer to any additional adverse effects on the landscape as cost corruption. 
For example, it was shown in Ref.~\cite{fontana2020optimizing} that non-unital noise can break the degeneracy of exponentially-occurring global minima, thus proliferating local minima and impacting trainability. In addition, cost functions that do not exhibit optimal parameter resilience~\cite{sharma2019noise} limit the quality of noisy optimization, as the optimal parameters of $\widetilde{C}(\thv)$ do not correspond to the optimal parameters of $C(\thv)$.

\subsection{Error Mitigation Techniques}\label{sec:errormitigation}

We finish the discussion of our framework with a summary of the key features of the error mitigation techniques that we study in this article.
For a more detailed review, readers can refer to Refs.~\cite{endo2021hybrid,cerezo2020variationalreview}.

Consider the effects of noise on the cost function in Eq.~\eqref{eq:costfn}. We suppose the noise can be characterized by a single {(scalar)} parameter $\varepsilon$ and we denote the corresponding noisy state and cost function as $\widetilde{\rho}(\thv,\varepsilon)$ and $\widetilde{C}(\thv,\varepsilon) = \Tr[\widetilde{\rho}(\thv,\varepsilon)O]$ respectively. The goal of error mitigation is to construct an experimental protocol which obtains a mitigated cost function estimator $C_m(\thv)$ that approximates the noise-free value $C(\thv)$. The protocol to obtain $C_m(\thv)$ generally consists of running circuits that modify the original circuit of interest by inserting additional gates, preparing multiple copies of a state, changing the measurement operator, and classical post-processing of the expectation values of these circuits. These different utilizations of resources are summarized in a schematic in Fig.~\ref{fig:em_framework}.

Error mitigation protocols often lead to a larger variance in the statistical outcomes of each experiment, and thus more shots are required to estimate the error-mitigated cost value $C_m(\thv,\varepsilon)$ to a desired precision compared to the unmitigated noisy value $\widetilde{C}(\thv,\varepsilon)$. This is often quantified by the error mitigation cost, which is defined below. 

\begin{definition}[Error mitigation cost]\label{def:em_cost}
We define the error mitigation cost as
\begin{equation}
    \gamma (\thv,\varepsilon) = \frac{\Var[{C}_m(\thv,\varepsilon)]}{\Var[\widetilde{C} (\thv,\varepsilon)]}\,,
\end{equation}
where $\widetilde{C} (\thv,\varepsilon)$ denotes the noisy cost function value corresponding to vector of parameters $\thv$ at noise level $\varepsilon$, and ${C}_m(\thv,\varepsilon)$ denotes the corresponding error-mitigated quantity. 
\end{definition}

In certain settings we encounter in our theoretical analyses, $\gamma(\thv,\varepsilon)$ is independent of $\thv$. {In other cases, we assume that it is parameter independent, or we seek parameter-independent bounds. Thus, from hereon in this manuscript we will generally drop the parameter dependence of $\gamma(\thv, \varepsilon)$.}

We now summarize the error mitigation techniques that we study in this article. We note that recently, unified error mitigation techniques have also been proposed that combine two or more of the protocols that we discuss in this section \cite{lowe2020unified,mari2021extending,bultrini2021unifying}. Our results are also applicable to such strategies, however, we will only review the root strategies here.

\begin{figure*}[t]
    \includegraphics[width=\textwidth,clip]{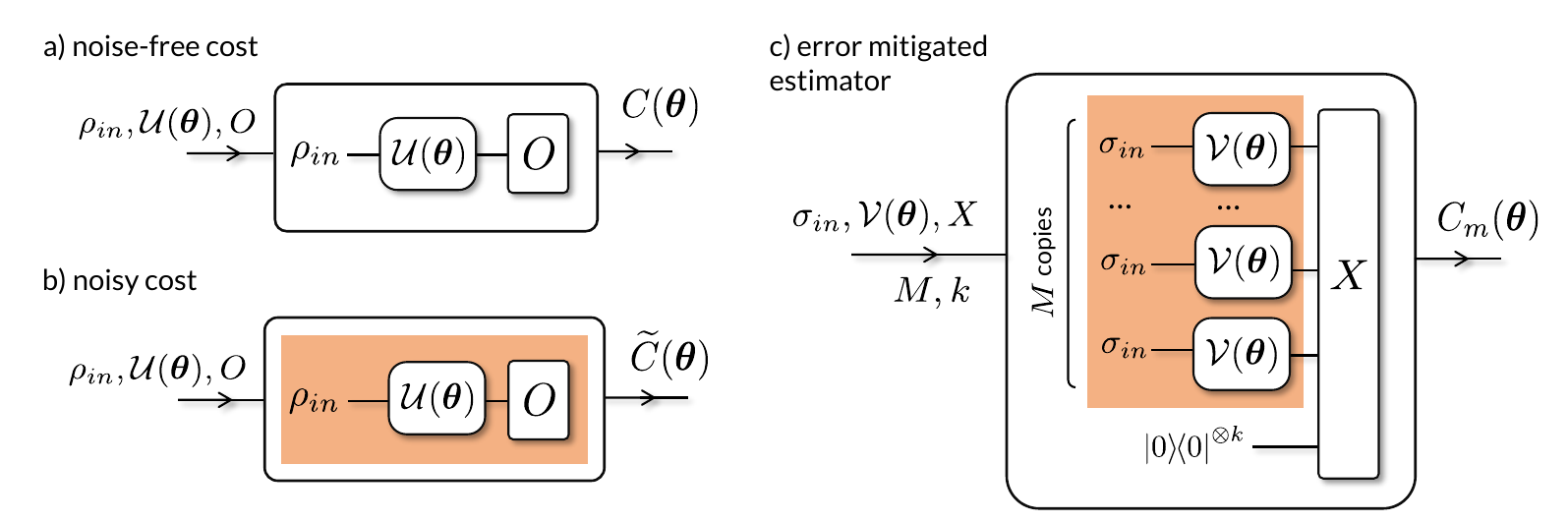}
    \caption{\textbf{Schematic of resource use in error mitigation.} Noise is indicated by the shaded orange region. (a) Cost function values are obtained by taking input state $\rho_{in}$, applying parameterized gates which we denote as a unitary channel $\mathcal{U}(\thv)$, and measuring the resulting state $\mathcal{U}(\thv)(\rho_{in})$ with observable $O$. (b) Noise can corrupt the gates in the circuit, as well as the state preparation and measurement processes. (c) Error mitigation aims to obtain a good approximation to the noise-free cost $C(\thv)$ by employing a number of strategies such as: modifying the gates implemented $\mathcal{U}(\thv) \rightarrow \mathcal{V}(\thv)$ or the input state $\rho_{in} \rightarrow \sigma_{in}$, utilizing multiple copies of the quantum circuit, modifying the measurement operator $O \rightarrow X$, and utilizing clean ancillary qubits at the end of the circuit. Many such circuits with different hyperparameters can be run, with their expectation values combined in a post-processing step, in order to construct the final error mitigated cost value $C_m(\thv)$.  {Note that here we have only indicated noise occuring in the initial part of the circuit--this reflects the assumptions of analyses in prior works} \cite{koczor2020exponential, huggins2020virtual}. {As we investigate the limitations of such error mitigation schemes, we keep these assumptions as a “best case” analysis.} One feature that distinguishes the approaches to error mitigation studied here from error correction is that error correction allows global access to the larger Hilbert space from the start of the circuit, whereas the error mitigation techniques studied here only allow the possibility for global operations at the end of the circuit. \label{fig:em_framework} } 
\end{figure*}

\subsubsection{Zero-Noise Extrapolation}

The goal of Zero-Noise Extrapolation is to run a given circuit of interest at $m+1$ increasing noise levels $\varepsilon < a_1\varepsilon < ... < a_m\varepsilon$, and to use information from the resulting expectation values to obtain an estimate of the zero-noise result. Here we summarize the key features of a protocol using Richardson extrapolation \cite{temme2017error,li2017efficient}, and exponential extrapolation \cite{endo2018practical}. 

\emph{Richardson Extrapolation.} Suppose that $\widetilde{C}(\thv_i,\varepsilon)$ admits a Taylor expansion in small noise parameter $\varepsilon$ as
\begin{equation} \label{eq:ZNE_taylor}
    \widetilde{C}(\thv_i,\varepsilon)=\widetilde{C}(\thv_i,0)+\sum_{k=1}^{m} p_{k}(\thv_i) \varepsilon^{k}+\mathcal{O}(\varepsilon^{m+1})\,,
\end{equation}
where $p_k$ are unknown parameters and $\widetilde{C}(\thv_i,0)=C(\thv)$ is the zero-noise cost function. By considering the equivalent expansion of $\widetilde{C}(\thv_i,a_1 \varepsilon)$ and combining the two equations one obtains 
\begin{align}\label{eq:ZNE_TaylorCm}
    C_m(\thv_i) &= \frac{a_1 \widetilde{C}(\thv_i,\varepsilon) - \widetilde{C}(\thv_i,a_1 \varepsilon) }{ a_1-1} \\
    &= \widetilde{C}(\thv_i,0) +\mathcal{O}(\varepsilon^2)\,,
\end{align}
which is a higher-order approximation of $\widetilde{C}(\thv_i,0)$ compared to simply using $\widetilde{C}(\thv_i,\varepsilon)$. This process can be repeated iteratively $m$ times to obtain an estimator which is accurate up to $\mathcal{O}(\varepsilon^{m+1})$ error.

\emph{Exponential extrapolation.} In some cases the noisy behavior may not be well-depicted by a Taylor expansion. As an alternative one can consider an exponential model
\begin{equation}\label{eq:ZNE_exp}
    \widetilde{C}(\thv_i,\varepsilon)= r(\thv_i,\varepsilon)^{-t(\thv_i,\varepsilon)} \Big( \sum_{k=0}^{m} p_{k}(\thv_i) \varepsilon^{k}+\mathcal{O}(\varepsilon^{m+1}) \Big),
\end{equation}
for some $r$ and $t$ which in general can be functions of $\varepsilon$. {For instance, in Ref.~\cite{endo2018practical} it is chosen that $r(\thv_i,\varepsilon)^{-t(\thv_i,\varepsilon)} = e^{-N_g \varepsilon}$ where $N_g$ is the number of gates. In this case, the noise-free cost value is $p_0(\thv_i)$.} We can also construct an extrapolation strategy that is tailored towards noisy cost function values that are dominated by NIBP scaling as in Eq.~\eqref{eq:costconcentration}, where we model the effects of noise as
\begin{align}\label{eq:ZNE_NIBP}
    \widetilde{C}(\thv_i,q)&= A + q^{L} \Big( B(\thv_i) +\sum_{k=1}^{m} p_{k} (1-q)^{k} \nonumber \\
    & \qquad\qquad\qquad\quad +\mathcal{O}\big((1-q)^{m+1}\big) \Big)\,,
\end{align}
{where $q< 1$ is the Pauli noise parameter defined in Eq.~\eqref{eq:noisemodel} which equals zero for maximal noise. Here, $A$ is the fixed point of the noise (corresponding to the maximally mixed state) and $A+ B(\thv_i)$} is the noise-free cost value.
For these two strategies we can similarly construct $C_m(\thv_i)$ as linear combinations of $\{\widetilde{C}(\thv_i,a_i\varepsilon) \}_{i=0}^{m}$ {or $\{\widetilde{C}(\thv_i,q/a_i) \}_{i=0}^{m}$} to achieve $\mathcal{O}(\varepsilon^{m+1})$ approximations of the {zero-noise cost value}. We detail these constructions in Section \ref{sec:appdx_review_ZNE} of the Appendix.

\subsubsection{Virtual Distillation}

Virtual Distillation, also known as Error
Suppression by Derangement, was proposed concurrently in Refs.~\cite{koczor2020exponential} and \cite{huggins2020virtual}. In this article we consider the two error mitigation protocols in Ref.~\cite{koczor2020exponential} (denoted ``A'' and ``B'') to respectively prepare
\begin{equation}\label{eq:VD_A}
    C_m^{(A)}(\thv_i)=\Tr[\widetilde{\rho}_i^M O]/\Tr[\widetilde{\rho}_i^M]\,,
\end{equation} and
\begin{equation}\label{eq:VD_B}
    C_m^{(B)}(\thv_i) = \Tr[\widetilde{\rho}_i^M O]/\lambda_i^M\,,
\end{equation}
where $\lambda_i$ is the dominant eigenvalue of $\widetilde{\rho}_i \equiv \widetilde{\rho}(\thv_i)$. The operator $\widetilde{\rho}_i^M$ can be obtained by preparing $M$ copies of $\widetilde{\rho}_i$ in a tensor product state $\widetilde{\rho}_i^{\otimes M}$ and applying a cyclic shift operator. We note that protocol B presumes access to the dominant eigenvalue beforehand, which could potentially be computed via the techniques of Ref.~\cite{cerezo2020variational}.

\subsubsection{Probabilistic Error Cancellation}

Probabilisitic Error Cancellation utilizes many modified circuit runs in order to construct a quasiprobability representation of the noise-free cost function \cite{temme2017error, endo2018practical}. We assume that the effect of the noise can be described by a quantum channel $\mathcal{N}$ that occurs after a gate that we denote with unitary channel $\mathcal{U}$. Here we make the simplifying assumption that this is the only gate in the circuit, and we treat the general case in Section \ref{sec:appdx_review_QP} of the Appendix, as well as provide a more detailed exposition. The goal of this protocol is to simulate the inverse map $\mathcal{N}^{-1}$. Note that, in general, this will not always correspond to a CPTP map. Despite this fact, if one has a basis of (noisy) quantum channels $\{\mathcal{B}_\alpha \}_{\alpha}$, corresponding to experimentally available channels, one can expand the inverse map in this basis as $\mathcal{N}^{-1} = \sum_{\alpha} q_{\alpha}  \mathcal{B}_{\alpha}$,  for some set of $q_{\alpha} \in \mathbb{R}$. By defining a probability distribution $p_{\alpha} = |q_{\alpha}|/G_\mathcal{N}$ where $G_\mathcal{N} = \sum_{\alpha} |q_{\alpha}|$, the noise-free expectation value can then be written as a quasiprobability distribution
\begin{align}
    {C}_{\mathcal{U}(\rho)} &= G_\mathcal{N} \sum_{\alpha} \mathrm{sgn}(q_{\alpha})\, p_{\alpha}\, \Tr \big[ \BC_{\alpha}\NC\UC(\rho_{in})O\big] \,,
\end{align}
where $\rho_{in}$ is the input state, $O$ is the measurement operator, and $\mathrm{sgn}(q_{\alpha})$ denotes the sign of $q_{\alpha}$. The idea is that if one has access to the set of CPTP maps $\{ \mathcal{B}_{\alpha}\}_\alpha$ in the noisy native hardware gate set, then one can obtain an estimate of the noise-free cost ${C}_{\mathcal{U}(\rho)}$ as follows: (1) With probability $p_\alpha$, prepare the circuit of interest with additional gate $\mathcal{B}_\alpha$ in order to obtain the expectation value $\Tr \left[ \BC_{\alpha}\NC\UC(\rho_{in})O\right]$. (2) Multiply the result by $\mathrm{sgn}(q_{\alpha})G_\mathcal{N}$. (3) Repeat process many times and sum results.

\subsubsection{Clifford Data Regression (CDR) and linear ansatz methods}

The main idea of linear ansatz methods is to assume that we can approximately reverse the effects of noise with an affine map, and thus we construct a linear ansatz of the form
\begin{equation}\label{eq:CDR_ansatz}
    C_m(\thv,\boldsymbol{a}) = a_1(\thv)\widetilde{C}(\thv) + a_2(\thv)\,,
\end{equation}
where $\boldsymbol{a}(\thv) = (a_1(\thv),a_2(\thv))$ is a vector of parameters to be determined.  In general we expect $\boldsymbol{a}$ to be highly dependent on $\thv$. In Ref.~\cite{czarnik2020error}, the authors use data regression to learn the optimal parameters $\vec{a}^*(\thv)$ with training data comprising of pairs of noise-free and corresponding noisy cost function values $\mathcal{T}_{\thv} = \{(C_j, \widetilde{C}_j )\}_j$, where the circuits are predominantly constructed from Clifford gates. The noise-free cost values can be simulated efficiently on a classical computer whilst the noisy cost values can be evaluated directly on the quantum computer. This strategy is known as Clifford Data Regression.

Other methods have been proposed to learn the optimal parameters $\vec{a}^*(\thv)$. In Ref.~\cite{montanaro2021error} the authors further develop the idea of training-based error mitigation by considering alternative training data comprising of fermionic linear optics circuits. One can also model the noise as global depolarizing noise. 
Under this assumption, $\vec{a}^*(\thv)$ has an exact solution in terms of a single noise parameter.
Subsequently, various techniques can be used to estimate the noise parameter \cite{vovrosh2021simple, rosenberg2021experimental, he2020zero, shaw2021classical, google2020observation}. {Finally, we note that an alternative learning-based method has been proposed in which Clifford data is used to learn the optimal quasi-probability distribution \cite{strikis2020learning}. In this case, our results on probabilistic error cancellation are directly applicable.}

\subsubsection{Previous results on sampling overhead}
{It is well known that error mitigation techniques require a larger shot budget than estimating unmitigated expectation values, due to the amplification of statistical variance. Indeed, this has been discussed as part of the original proposal of many error mitigation schemes} \cite{temme2017error, huggins2020virtual, koczor2020exponential, strikis2020learning, czarnik2021qubit}. {For probabilistic error cancellation, in Ref.~\cite{xiong2020sampling} Xiong et al.~investigate the sampling overhead of probabilistic error cancellation in further detail for various noise channels. 
We stress that all of the aforementioned analysis quantifies the sampling overhead in recovering individual expectation values to constant precision. We note that, to the best of our knowledge, prior to our work the effects of error mitigation in resolving trainability issues due to noise have not been studied.
}

\section{Theoretical Results}\label{sec:theoreticalresults}

We present two sets of theoretical results.
First, in Section \ref{sec:estimator_conc} we show that a broad class of error mitigation techniques cannot undo the exponential resource requirement that exponential cost concentration presents. This has implications for both the trainability of noisy VQAs, as well as the accurate estimation of noise-free cost function values in general. Second, in Section \ref{sec:protocol_specific}, we work predominantly in the non-asymptotic  regime (in terms of scaling in $n$) and investigate to what extent different error mitigation strategies can improve the resolvability of the noisy cost landscape, assuming that some cost concentration has occurred. For these purposes we introduce a class of quantities which quantify the improvement of the resolvability of the cost function landscape after error mitigation, which we call the relative resolvability (see Defs. \ref{def:resolvability}-\ref{def:av_resolvabilityII}). Using these quantities we study Zero-Noise Extrapolation (Sec.~\ref{sec:ZNE}), Virtual Distillation (Sec.~\ref{sec:VD}), Probabilistic Error Cancellation (Sec.~\ref{sec:QP}) and linear ansatz methods which include Clifford Data Regression (Sec.~\ref{sec:CDR}).  In the settings that we consider, we find that in many cases error mitigation impedes the optimizer's ability to find good optimization steps, and is worse than performing no error mitigation.   

\subsection{Asymptotic scaling results (exponential estimator concentration)}\label{sec:estimator_conc}

In this section we show that full mitigation of exponential cost concentration is not possible for a general class of error mitigation strategies. Specifically, we show that one cannot remove the exponential scaling that local Pauli noise incurs without investing exponential resources elsewhere in the mitigation protocol.

We start by remarking that, as summarized in Fig.~\ref{fig:em_framework}, all of the strategies presented in Sec.~\ref{sec:errormitigation} consist of preparing linear combinations of expectation values of the form
\begin{align}\label{eq:thm1_estimator}
    E_{\sigma,X,M,k} = \Tr\left[X \left(\sigma^{\otimes M}\otimes \ket{0}\bra{0}^{\otimes k}\right)\right]\,,
\end{align}
for some  $n$-qubit quantum state $\sigma \in S(\mathcal{H})$ that in general can be prepared by a different circuit to that of the state of interest, for $\ket{0}\bra{0} \in S(\mathcal{H}')$ and for some $X\in B(\mathcal{H}^{\otimes M}\otimes \mathcal{H}'^{\otimes k})$. That is, one can prepare multiple copies of a state, prepare different quantum circuits, and apply general measurement operators. In order to generalize the setting further, we also allow the possibility to utilize multiple clean ancillary qubits at the end of the circuit. By considering linear combinations of such quantities, we also account for the ability to post-processing measurement results classically with a linear map, such as is the case with Probabilistic Error Cancellation. In the following theorem we show how quantities of the form \eqref{eq:thm1_estimator} concentrate under local Pauli noise of the form \eqref{eq:noisystate}.

\begin{theorem}\label{thm:fullmitigation}
Consider an error mitigation strategy that, as a step in its protocol, estimates $E_{\sigma,X,M,k}$ as defined in Eq.~\eqref{eq:thm1_estimator}.
Suppose that $\sigma$ is prepared with a depth $L_{\sigma}$ circuit and experiences local Pauli noise according to Eq.~\eqref{eq:noisystate}. Under these conditions, $E_{\sigma,X,M,k}$ exponentially concentrates with increasing circuit depth on a state-independent fixed point as
\begin{align}
    &\left| E_{\sigma,X,M,k} - \Tr\left[X \left( \frac{\id^{\otimes M}}{2^{Mn}} \otimes \ket{0}\bra{0}^{\otimes k}\right)\right]\right| \\ 
    &\hspace{11.5em} \leq\, G_{\sigma,X,M}(n) \,, \label{eq:prop1_conc} 
\end{align}
where $\id \in B(\mathcal{H})$ is the $n$-qubit identity operator and 
\begin{equation}
    G_{\sigma,X,M}(n) =  \sqrt{\ln 4}\,\big\|X\big\|_\infty M n^{1/2} q^{{c}(L_{\sigma}+1)}\,,
\end{equation}
with noise parameter $q\in[0,1)$ and constant $c = 1/(2\ln 2) \approx 0.72$. 
\end{theorem}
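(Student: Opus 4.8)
The plan is to reduce the multi-copy quantity $E_{\sigma,X,M,k}$ to the single-copy statement that the noisy state $\sigma=\widetilde{\rho}$ concentrates on $\id/2^n$, and then to control that single-copy deviation through relative entropy. Write $\tau=\id/2^n$. First I would peel off the clean ancillas and the operator $X$ with a H\"older bound: the left-hand side of \eqref{eq:prop1_conc} equals $|\Tr[X\,((\sigma^{\otimes M}-\tau^{\otimes M})\otimes\ket{0}\bra{0}^{\otimes k})]|$, and using $|\Tr[AB]|\le\|A\|_\infty\|B\|_1$, multiplicativity of the trace norm under tensor products, and $\|\ket{0}\bra{0}^{\otimes k}\|_1=1$, this is at most $\|X\|_\infty\,\|\sigma^{\otimes M}-\tau^{\otimes M}\|_1$. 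The clean ancillas thus drop out entirely, and the role of $X$ is reduced to the single factor $\|X\|_\infty$.

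Next I would collapse the $M$ copies to one via the telescoping identity $\sigma^{\otimes M}-\tau^{\otimes M}=\sum_{j=1}^{M}\tau^{\otimes(j-1)}\otimes(\sigma-\tau)\otimes\sigma^{\otimes(M-j)}$. Because $\sigma$ and $\tau$ are both density operators, each summand has trace norm $\|\sigma-\tau\|_1$, so the triangle inequality gives $\|\sigma^{\otimes M}-\tau^{\otimes M}\|_1\le M\,\|\sigma-\tau\|_1$. Combined with the previous step this already accounts for the prefactors $\|X\|_\infty$ and $M$ in $G_{\sigma,X,M}(n)$, reducing everything to the single-copy bound $\|\widetilde{\rho}-\tau\|_1\le\sqrt{\ln 4}\,n^{1/2}q^{L_\sigma+1}$.

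For this core estimate I would invoke Pinsker's inequality, $\|\widetilde{\rho}-\tau\|_1\le\sqrt{2\,D(\widetilde{\rho}\,\|\,\tau)}$, and bound the relative entropy to the maximally mixed state. Since $\tau$ is the fixed point of each depolarizing layer $\NC$ and is invariant under each unitary $\UC_k$, $D(\cdot\,\|\,\tau)$ is nonincreasing along the channel in \eqref{eq:noisystate}; the strong data-processing inequality for local depolarizing noise upgrades this to a strict multiplicative contraction by $q^2$ per noise layer. With $L_\sigma+1$ noise layers and the initial bound $D(\rho_{in}\,\|\,\tau)=n\ln 2 - S(\rho_{in})\le n\ln 2$, this gives $D(\widetilde{\rho}\,\|\,\tau)\le q^{2(L_\sigma+1)}\,n\ln 2$, and Pinsker then yields exactly $\sqrt{2\ln 2}\,n^{1/2}q^{L_\sigma+1}=\sqrt{\ln 4}\,n^{1/2}q^{L_\sigma+1}$. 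Chaining the three steps reproduces $G_{\sigma,X,M}(n)=\sqrt{\ln 4}\,\|X\|_\infty M n^{1/2}q^{L_\sigma+1}$.

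I expect the relative-entropy contraction to be the main obstacle. Data processing alone only gives monotonicity, so the $q^2$-per-layer factor requires the strong data-processing inequality for the qubit depolarizing channel (relative-entropy contraction coefficient $(1-p_i)^2$) together with its tensorization, so that the worst case is set by the least-noisy qubit, $q=\max_i(1-p_i)$. It is tempting to instead use the clean Hilbert--Schmidt contraction $\|\NC(\Delta)\|_2\le q\|\Delta\|_2$ for the traceless part $\Delta=\widetilde{\rho}-\tau$, but converting $\|\cdot\|_2$ into $\|\cdot\|_1$ costs a factor $2^{n/2}$ rather than the desired $n^{1/2}$; the advantage of the relative-entropy route is precisely that its initial value is $\le n\ln 2$ rather than $\sim 2^{n}$, which is what turns the exponential prefactor into a polynomial one. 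I would therefore keep the key estimate expressed in terms of relative entropy throughout, rather than passing through the Hilbert--Schmidt norm.
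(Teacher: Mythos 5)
Your proposal is correct and follows essentially the same route as the paper's proof: a Hölder bound to extract $\|X\|_\infty$, a telescoping decomposition of $\sigma^{\otimes M}-(\id/2^n)^{\otimes M}$ giving the factor $M$, and the single-copy estimate $\|\sigma-\id/2^n\|_1\le\sqrt{2\ln 2}\,n^{1/2}q^{L_\sigma+1}$ via Pinsker plus the strong data-processing inequality for local depolarizing noise with initial value $D(\rho_{in}\|\id/2^n)\le n\ln 2$ (the paper's Lemma~\ref{lem:tracedistance}). The only difference is organizational — you telescope inside the trace norm before applying Hölder, while the paper telescopes the trace itself and applies Hölder term by term — and your closing remark about why the relative-entropy route beats the Hilbert–Schmidt contraction matches the reason the paper's bound has a polynomial rather than exponential prefactor.
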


Theorem \ref{thm:fullmitigation} shows that quantities of the form \eqref{eq:thm1_estimator} exponentially concentrate in the depth of the circuit. As we summarize in the schematic in Fig.~\ref{fig:em_framework}, such quantities generalize expectation values that are prepared by many different error mitigation protocols. {We provide a proof of the theorem in Appendix \ref{sec:appdx_asymptotics}.} We now explicitly demonstrate how Theorem \ref{thm:fullmitigation} affects the mitigated cost values that these protocols output.

\begin{corollary}[Exponential estimator concentration]\label{cor1}
Consider an error mitigation protocol that approximates the noise-free cost value $C(\thv)$ by estimating the quantity 
\begin{equation}\label{eq:cor1_estimator}
    C_m(\thv) = \sum_{(\sigma(\thv),X,M,k)\in T} a_{X,M,k}\,E_{\sigma(\thv),X,M,k}\,,
\end{equation}
where each $E_{\sigma,X,M,k}$ takes the form \eqref{eq:thm1_estimator} {and each $a_{X,M,k} \in \mathbb{C}$}. We denote $M_{max}$ and $a_{max}$ as the maximum values of $M$ and $a_{X,M,k}$ respectively accessible from a set $T$ defined by the given protocol. Assuming $\|X\|_\infty \in \OC(\poly(n))$, there exists a fixed point $F$ independent of $\thv$ such that
\begin{equation}
    \big|C_m(\thv)-F\big|\in\OC(2^{-\beta n}a_{max}|T|M_{max})\,,
\end{equation}
for some constant $\beta\geq 1$ if the circuit depths satisfy 
\begin{equation}
    L_{\sigma(\thv)} \in \Omega(n)\,,
\end{equation}
for all $\sigma(\thv)$ in the construction \eqref{eq:cor1_estimator}. That is, if the depths of the circuits scale linearly or superlinearly in $n$ then one requires at least exponential resources to distinguish $C_m$ from its fixed point, for instance by requiring an exponential number of shots, or by requiring an exponential number of state copies $M_{max}$.
\end{corollary}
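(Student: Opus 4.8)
The plan is to derive Corollary \ref{cor1} as a direct consequence of Theorem \ref{thm:fullmitigation} by applying the triangle inequality to the linear combination \eqref{eq:cor1_estimator}. First I would define the candidate fixed point by replacing each term $E_{\sigma(\boldsymbol{\theta}),X,M,k}$ with its state-independent limit from Theorem \ref{thm:fullmitigation}, namely
\begin{equation}
    F = \sum_{(\sigma(\boldsymbol{\theta}),X,M,k)\in T} a_{X,M,k}\,\Tr\left[X \left( \frac{\id^{\otimes M}}{2^{Mn}} \otimes \ket{0}\bra{0}^{\otimes k}\right)\right]\,.
\end{equation}
The crucial observation, which I would state and justify, is that this quantity is independent of $\boldsymbol{\theta}$: although the set $T$ and the circuits $\sigma(\boldsymbol{\theta})$ depend on the parameters, each summand's limiting value depends only on $X$, $M$, $k$ and the dimension, not on the particular state $\sigma(\boldsymbol{\theta})$ that was prepared. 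Thus $F$ is genuinely a $\boldsymbol{\theta}$-independent constant, as claimed.

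Next I would bound $|C_m(\boldsymbol{\theta})-F|$ term by term. Subtracting, the difference is $\sum_{T} a_{X,M,k}\big(E_{\sigma(\boldsymbol{\theta}),X,M,k} - \Tr[X(\id^{\otimes M}/2^{Mn}\otimes\ket{0}\bra{0}^{\otimes k})]\big)$, so the triangle inequality gives
\begin{equation}
    \big|C_m(\boldsymbol{\theta})-F\big| \leq \sum_{(\sigma,X,M,k)\in T} |a_{X,M,k}|\, G_{\sigma,X,M}(n)\,.
\end{equation}
I would then bound each factor uniformly: $|a_{X,M,k}|\leq a_{max}$, the number of terms is $|T|$, and each $G_{\sigma,X,M}(n) = \sqrt{\ln 4}\,\|X\|_\infty M n^{1/2} q^{L_\sigma+1}$. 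Using $M\leq M_{max}$, $\|X\|_\infty\in\OC(\poly(n))$ by hypothesis, and the depth assumption $L_{\sigma(\boldsymbol{\theta})}\in\Omega(n)$ for every $\sigma$ appearing in the sum, the factor $q^{L_\sigma+1}$ is bounded by $q^{\Omega(n)} = 2^{-\Omega(n)}$ since $q\in[0,1)$. Collecting these, the bound becomes $\OC\big(|T|\,a_{max}\,M_{max}\,\poly(n)\,2^{-\Omega(n)}\big)$, and absorbing the polynomial prefactor into the exponential yields $\OC(2^{-\beta n}a_{max}|T|M_{max})$ for a suitable constant $\beta\geq 1$, completing the argument.

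The only subtle point, and the one I would be most careful about, is establishing that $F$ is truly $\boldsymbol{\theta}$-independent despite the parameter-dependence hidden in $T$ and in the circuits preparing $\sigma(\boldsymbol{\theta})$; one must argue that the structure of the protocol (the multiset of $(X,M,k)$ tuples and the coefficients $a_{X,M,k}$) is fixed across the landscape even though the states themselves vary, so that only the limiting values — which do not see $\sigma$ — enter $F$. The remaining steps are routine applications of the triangle inequality and uniform bounding of each factor, so I do not expect any genuine obstacle there; the work is in bookkeeping the $\poly(n)$ and exponential factors to extract the constant $\beta\geq 1$.
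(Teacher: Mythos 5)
Your proposal is correct and follows essentially the same route as the paper's own proof: both define the fixed point $F$ as the sum of the state-independent limits from Theorem~\ref{thm:fullmitigation}, apply the triangle inequality term by term, bound each summand by $a_{max}\, G_{\sigma,X,M}(n)$, and use the depth assumption $L_{\sigma(\boldsymbol{\theta})}\in\Omega(n)$ together with $\|X\|_\infty\in\OC(\poly(n))$ to absorb the polynomial prefactor into $2^{-\beta n}$. Your extra care about the $\boldsymbol{\theta}$-independence of $F$ is a point the paper states only in passing, but the argument is the same.
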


We note that the assumption $\|X\|_\infty \in \OC(\poly(n))$ is satisfied in most settings, and in particular is satisfied for all error mitigation protocols discussed in Sec.~\ref{sec:errormitigation}. For instance, in the case of Virtual Distillation, $X$ corresponds to a cyclic shift operator followed by a Pauli observable, and thus $\|X\|_\infty \in \OC(1)$. {Corollary \ref{cor1} implies that under conditions that generate a NIBP, in order to distinguish any two cost values with constant probability, one requires resource consumption (in shots or number of state copies) that scales exponentially in the number of qubits. Thus, if one views NIBPs as an exponential resource (shots) issue, Corollary \ref{cor1} shows that the class of error mitigation schemes considered cannot circumvent this issue.}  In Appendix \ref{sec:appdx_asymptotics} we present a more detailed statement that explains how such resources may be consumed.

Whilst the use of clean ancillary qubits as part of an error mitigation protocol, utilized as in Equation \eqref{eq:thm1_estimator} and Fig.~\ref{fig:em_framework}, has not been widely studied, Corollary \ref{cor1} rules out the possibility that such resources used at the end of the circuit would offer advantage in countering the exponential scaling effects due to cost concentration. Indeed, upon inspecting \eqref{eq:prop1_conc}, the ancilla appear explicitly in the form of the fixed point. This highlights a key difference between many error mitigation strategies and error correction, as error correction utilizes resources (such as a larger Hilbert space) in the middle of the computation, whilst the error mitigation protocols considered here are based on processing states obtained at the end of a noisy computation. Our result leaves open the possibility that novel error mitigation protocols that move beyond the framework of \eqref{eq:cor1_estimator} and Fig.~\ref{fig:em_framework} can have hope of countering the exponential scaling of exponential cost concentration and NIBPs.

\subsection{Non-asymptotic protocol-specific results}\label{sec:protocol_specific}

In this section we present predominantly non-asymptotic results for Zero-Noise Extrapolation (Sec.~\ref{sec:ZNE}), Virtual Distillation (Sec.~\ref{sec:VD}), Probabilistic Error Cancellation (Sec.~\ref{sec:QP}), and methods which use a linear ansatz such as Clifford Data Regression (Sec.~\ref{sec:CDR}). For each protocol, we
investigate the effect of error mitigation on the resolvability of the cost landscape, for different classes of noisy states. To {this} end, we first define a class of resolvability measures which quantify {how many shots it takes to resolve the cost landscape after applying error mitigation,} compared to no mitigation at all. {We provide proofs of these theoretical statements, as well as some extensions thereof, in Appendix \ref{appendix:protocol-specific}.}

\subsubsection{Definitions}

\begin{definition}[Relative resolvability for two points] \label{def:resolvability}
Consider two locations in parameter space $\thv_{1}$, $\thv_{2}$ and their corresponding points on the cost landscape. Denote the number of shots to resolve these two points from each other {to precision proportional to their cost difference} with and without error mitigation as $N_\text{EM}$ and $N_\text{noisy}$ respectively. We define the relative resolvability for $\thv_{1}$ and $\thv_{2}$ at error level $\varepsilon$ as 
\begin{align} 
    \rchi({\thv_{1,2}},\varepsilon) &= \frac{N_\text{noisy}({\thv_{1,2}},\varepsilon)}{N_\text{EM}({\thv_{1,2}},\varepsilon)} \label{eq:chi-pt1} \\
    &= \frac{1}{\gamma(\varepsilon)}\, \left(\frac{\Delta C_m({\thv_{1,2}},\varepsilon)}{\Delta \widetilde{C}({\thv_{1,2}},\varepsilon)}\right)^2 \label{eq:chi-pt2}\,,
\end{align}
where we have used the shorthand notation {for functional dependence on ${\thv_{1}},{\thv_{2}}$ as $f({\thv_{1,2}}) = f({\thv_{1}},{\thv_{2}})$}, $\gamma$ is the error mitigation cost as defined in Definition $\ref{def:em_cost}$, and where we denote
\begin{align}
    \Delta \widetilde{C}({\thv_{1,2}},\varepsilon) &= \widetilde{C}({\thv_{1}},\varepsilon) - \widetilde{C}({\thv_{2}},\varepsilon)\,, \\
    \Delta {C}_m({\thv_{1,2}},\varepsilon) &= {C}_m({\thv_{1}},\varepsilon) - {C}_m({\thv_{2}},\varepsilon)\,. 
\end{align}
\end{definition}

{Our definition of relative resolvability is centered around quantifying the sample overhead of the operational task of distinguishing two states (points in parameter space) via their corresponding cost values. The relative resolvability compares this task for the noisy setting with the error-mitigated setting. In order to obtain Eq.~\eqref{eq:chi-pt2} from Eq.~\eqref{eq:chi-pt1}, we consider the usual formula for the standard error of the sample mean, that is, $\sqrt{\operatorname{Var}[{C}_m(\thv, \varepsilon)]/ N_{EM} }$ for mitigated cost values and $\sqrt{\operatorname{Var}[\widetilde{C}(\thv, \varepsilon)]/ N_{noisy} }$ for non-mitigated cost values. Specifically, we suppose that successful resolution of the two points corresponds to achieving a small enough sample mean error proportional to the difference in exact cost value corresponding to those two points. Said differently, we ask an error mitigated optimizer and a non-mitigated optimizer to resolve a length scale proportional to the separation between two cost function values on the mitigated cost landscape and non-mitigated cost landscape, respectively. The proportionality constant can be thought of as being chosen arbitrarily, as in Eq.~\eqref{eq:chi-pt1} we consider a ratio of shots (thus whatever proportionality constant is chosen cancels out).} 

We see that if $\rchi({\thv_{1,2}},\varepsilon) > 1$, {then ${N_\text{EM}({\thv_{1,2}},\varepsilon)} < {N_\text{noisy}({\thv_{1,2}},\varepsilon)}$. Thus,} error mitigation has successfully increased the resolvability of the cost values corresponding to the cost values at $\thv_1$ and $\thv_2$. Note that this criterion is a necessary but not sufficient condition for error mitigation to reverse the effects of cost concentration on the cost landscape. Namely, it does not require the mitigated landscape to accurately reflect the noise-free landscape, and it does not account for other trainability issues such as proliferation of minima. If $\rchi({\thv_{1,2}},\varepsilon) < 1$ then this implies that error mitigation has exacerbated the resolvability issues associated with cost concentration and NIBPs, and it has been counterproductive in fixing these trainability issues.

For a general cost {function}, the relative resolvability of cost function points after mitigation may vary significantly across the landscape, or be different for different choices of ansatzes and noise models. {Specifically, we seek a more representative length scale to resolve rather than the cost difference between two arbitrary cost values.}  This motivates us to seek averaged measures of resolvability. We consider two types of averaging: first, an {(``operationally-motivated'')} average over cost function points across a given cost landscape; second, a {(``physically motivated'')} average over a set of noisy states.

\begin{definition}[{Average relative resolvability across cost landscape}]\label{def:av_resolvability}
Denote {$\thv_{*} = \textrm{argmin}_{\thv}\,\widetilde{C}(\thv, \varepsilon)$} the vector of parameters that corresponds to the global {noisy} cost minimum at noise parameter $\varepsilon$. We then define the averaged relative resolvability as
\begin{equation} \label{eq:av_resolvability}
    \overline{\rchi}(\varepsilon) = \frac{1}{\gamma(\varepsilon)}\, \left( \frac{\langle\Delta C_m({\thv_{i,*}},\varepsilon)\rangle_i}{\langle\Delta \widetilde{C}({\thv_{i,*}},\varepsilon)\rangle_i} \right)^2\,,
\end{equation}
where $\langle\cdot\rangle_i$ denotes the mean over all parameter vectors $\thv_{i}$ accessible with the given ansatz of consideration, and where we denote
\begin{align}
    \Delta {C}_m({\thv_{i,*}},\varepsilon) &= {C}_m({\thv_{i}},\varepsilon) - {C}_m({\thv_{*}},\varepsilon)\,,\\
    \Delta \widetilde{C}({\thv_{i,*}},\varepsilon) &= \widetilde{C}({\thv_{i}},\varepsilon) - \widetilde{C}({\thv_{*}},\varepsilon)\,.
\end{align}
\end{definition}

{Definition \ref{def:av_resolvability} provides a quantity which evaluates the average performance of a given error mitigation protocol across some given cost landscape.} Averaging across a given cost landscape gives a result that is particular to the choice of ansatz, measurement operator and noise model.   {Alternatively, we can view cost concentration as ultimately physically originating from a concentration of states to the maximally mixed state. In order to evaluate the performance of error mitigation in aiding the resolution of states, we consider a physically-motivated average over the basis of a noisy state. Specifically, we} consider an average over noisy states that have the same spectrum. This choice of class of noisy states is also motivated by the fact that a central mechanism of cost concentration and NIBPs under unital Pauli noise is the loss of purity. When we consider the {basis-averaged} relative resolvability for Virtual Distillation in Section \ref{sec:VD}, it will turn out to be bounded by a function of the purity for such states.

\begin{definition}[{Basis-averaged relative resolvability}]\label{def:av_resolvabilityII}
Consider a  normalized spectrum $\boldsymbol{\lambda} \in \mathbb{R}_+^{2^n}$ which corresponds to the eigenspectrum of some noisy state. We define the 2-design-averaged relative resolvability as
\begin{equation}
    \dbloverline{\rchi}_{\boldsymbol{\lambda}} = \frac{1}{\gamma(\boldsymbol{\lambda})} \frac{\big\langle\big(C_m(\rho_{\boldsymbol{\lambda}},U_i) - \Tr[O]/2^n \big)^2\big\rangle_{U_i}}{\big\langle\big(\widetilde{C}(\rho_{\boldsymbol{\lambda}},U_i) - \Tr[O]/2^n \big)^2\big\rangle_{U_i} }\,, 
\end{equation}
where $\langle\cdot\rangle_{U_i}$ denotes an average over unitaries $U_i$ drawn from a unitary 2-design, $\rho_{\boldsymbol{\lambda}}$ is an arbitrarily chosen reference state with spectrum $\boldsymbol{\lambda}$, and where we denote 
\begin{align}
    \widetilde{C}(\rho_{\boldsymbol{\lambda}},U_i) &= \Tr[U_i\rho_{\boldsymbol{\lambda}}U^\dag_i O]\,, \\
    {C}_m(\rho_{\boldsymbol{\lambda}},U_i) &= \Tr[\mathcal{M}(U_i\rho_{\boldsymbol{\lambda}}U^\dag_i) O]\,,
\end{align}
where $\mathcal{M}: S(\mathcal{H})\mapsto B(\mathcal{H})$ is a map that describes the action of the error mitigation protocol.
\end{definition}
In Fig.~\ref{fig:averagegame} we present {a} schematic of our {basis-averaged} relative resolvability. 

\begin{figure}[t]
    \includegraphics[width=\columnwidth,clip]{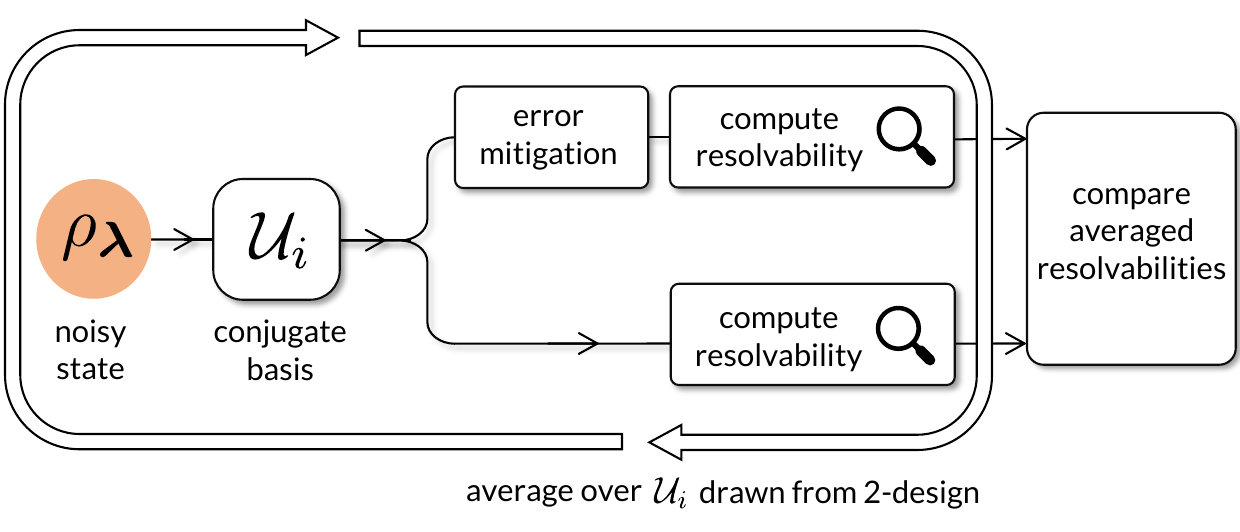}
    \caption{\textbf{{Schematic for basis-averaged relative resolvability.}} In Definition \ref{def:av_resolvabilityII} we consider a broader averaged resolvability measure where the average is taken over a class of noisy states, rather than a cost landscape generated by a particular ansatz. This is constructed from the following game: (1) Prepare a reference noisy state $\rho_{\boldsymbol{\lambda}}$ with spectrum $\boldsymbol{\lambda}$ and conjugate by unitary $U_i$ drawn from a 2-design. (2) Pass the resulting state through the considered error mitigation protocol, and evaluate the resolvability from the fixed point of the noise. (3) Do the same, without error mitigation. (4) Average over the 2-design and compare the averaged resolvabilities.  }
    \label{fig:averagegame}
\end{figure}

In the results that follow we do not constrain ourselves to investigating specific models of Pauli noise. Instead, we simply suppose that there exists a noise model that causes some concentration of the cost function onto some state-independent fixed point. Any further assumptions on the effects of the noise model are explicitly specified in the statement of the particular result in question. As a precursor to looking at more generic models of noisy states, we will also find it useful to investigate the performance of error mitigation strategies in the presence of global depolarizing noise of the form
\begin{equation}\label{eq:gdepol_noise}
    \rho \xrightarrow{\mathcal{D}} \widetilde{\rho} = (1-p) \rho + p \frac{\id}{2^n}\,,
\end{equation}
where $\mathcal{D}$ is the global depolarizing channel and $p$ is the depolarizing probability. Our justification for studying this noise model is twofold. First, global depolarizing noise provides a clean model of cost concentration with no other cost corrupting effects of the noise. Therefore, if a given error mitigation strategy is to mitigate the effects of cost concentration and NIBPs, we expect it to be able to perform well on this noise model. Second, the structure of many error mitigation strategies is directly motivated by the model of global depolarizing noise \cite{czarnik2020error, montanaro2021error, vovrosh2021simple, rosenberg2021experimental, he2020zero, shaw2021classical, google2020observation}. Indeed, many such strategies have been shown to achieve good or perfect performance with this noise model in mitigating noisy cost function values \cite{koczor2020exponential,huggins2020virtual, czarnik2020error, vovrosh2021simple}. However, we stress that trainability may simultaneously get worse, which is what we will now investigate.

\subsubsection{Zero-Noise Extrapolation}\label{sec:ZNE}

In this section we present our results on Zero-Noise Extrapolation. Throughout this section, in order to estimate the sample cost of error mitigation we will make the simplifying assumption that
\begin{equation}\label{eq:ZNE_assumption}
    \Var[\widetilde{C}(\thv, a\varepsilon)] \geq \Var[\widetilde{C}(\thv,\varepsilon)]\,,
\end{equation}
for all $\thv$ and $a\geq 1$ that is, the statistical fluctuations in measurement outcomes at the boosted noise level are no smaller than that at the base noise level. We note that similar assumptions are made in the literature for Zero-Noise Extrapolation and Quasi-Probability Methods \cite{endo2021hybrid, endo2018practical}. In Appendix \ref{sec:appdx_ZNE} we provide intuition as to why we expect this assumption to be true for large $a$ for noise models whose fixed point is the maximally mixed state.

First, we consider the simple model of global depolarizing noise.

\begin{proposition}[Relative resolvability of Zero-Noise Extrapolation with global depolarizing noise, 2 noise levels]\label{prop:ZNE_depol}
Consider a circuit with $L$ instances of global depolarizing noise of the form Eq.~\eqref{eq:gdepol_noise}.
Consider a Richardson extrapolation strategy based on Eq.~\eqref{eq:ZNE_taylor}, an exponential extrapolation strategy based on Eq.~\eqref{eq:ZNE_exp} and a NIBP extrapolation strategy based on Eq.~\eqref{eq:ZNE_NIBP}. We presume access to an augmented noisy circuit where the error probability is exactly increased by factor $a_1>1$ as $p\rightarrow a_1p$. Then, we have
\begin{align}
    &{\rchi_{depol}}\leq \frac{\left(c-\frac{(1-a_1p)^L}{(1-p)^L}\right)^2}{c^2+1}\,,
\end{align}
where $\rchi_{depol}$ is the relative resolvability (see Definition \ref{def:resolvability}) for global depolarizing noise, and where 
\begin{equation}\label{eq:c-cases}
    c = \begin{cases}
    \quad a_1\quad &\text{for Richardson extrapolation,} \\
    \frac{a_1r(\varepsilon)^{t(\varepsilon)}}{r(a_1\varepsilon)^{t(a_1\varepsilon)}} \quad &\text{for exponential extrapolation,} \\
    \; a_1^{-(L+1)} \quad &\text{for NIBP extrapolation}\,.
    \end{cases}
\end{equation}
Thus, $\rchi_{depol}\leq 1$ for all of the above extrapolation strategies with access to 2 noise levels.
\end{proposition}

We see that for all the above techniques, Zero-Noise Extrapolation with access to 2 noise levels decreases the resolvability of the cost function under global depolarizing noise. Further, if one attempts to directly reverse the exponential scaling of NIBPs that global depolarizing noise incurs, one obtains an exponentially worse relative resolvability. We now consider how resolvability behaves under Zero-Noise Extrapolation on average across the cost landscape, given a generic noise model.

\begin{proposition}[Average relative resolvability of Zero-Noise Extrapolation, 2 noise levels]\label{prop:ZNE}
Consider a Richardson extrapolation strategy based on Eq.~\eqref{eq:ZNE_taylor}, an exponential extrapolation strategy based on Eq.~\eqref{eq:ZNE_exp} and a NIBP extrapolation strategy based on Eq.~\eqref{eq:ZNE_NIBP}. We presume perfect access to an augmented noisy circuit where the noise rate is increased by factor $a_1>1$. We denote $\thv_{\varepsilon*}$ as the parameter corresponding to the global cost minimum at base noise parameter $\varepsilon$. 
Further denote $ \frac{\langle\Delta\widetilde{C}(\thv_{i,\varepsilon*}, a_1\varepsilon)\rangle_{i}}{\langle\Delta\widetilde{C}(\thv_{i,\varepsilon*},\varepsilon)\rangle_{i}} = z$. 
Any such noise model has an average relative resolvability
\begin{equation}
    \overline{\rchi} \leq  \frac{(z-c)^2}{c^2+1}\,,
\end{equation}
{ where $c$ is defined in Eq.~\eqref{eq:c-cases}.}
Thus, under the assumption that $z\leq1$ and $\langle\Delta \widetilde{C}(\thv_{i,\varepsilon*},a_1\varepsilon)\rangle_i\geq0$, $\overline{\rchi}\leq 1$ for all of the above extrapolation strategies with access to 2 noise levels.
\end{proposition}

Proposition \ref{prop:ZNE} shows that under mild assumptions of the effect of the noise on the cost landscape, Zero-Noise Extrapolation with access to 2 noise levels impairs the resolvability of the cost landscape. These assumptions have physical meaning: $z\leq1$ implies that on average the cost concentrates when the noise parameter is boosted, whilst $\langle\Delta \widetilde{C}(\thv_{i,\varepsilon*},a_1\varepsilon)\rangle\geq0$ implies that the landscape is not heavily corrupted after boosting the noise parameter so that the global minimum at the base noise level remains below the average cost value. We also see that in the presence of exponential cost concentration and NIBPs, the relative resolvability is exponentially small if one attempts to directly reverse the exponential scaling of NIBPs.

In the Appendix, we study a modification of the averaged resolvability in Definition \ref{def:av_resolvabilityII} and find that this is bounded by a function of the purity of the noisy states, such that the resolvability decreases if purity decreases with increasing noise level.
This result, along with the proofs of the above propositions, can be found in Appendix \ref{sec:appdx_ZNE}. 
Finally, we remark that in the above results we consider a scenario where the Richardson, exponential or NIBP extrapolation strategies utilize expectation values from only two noise levels. In Appendix \ref{sec:appdx_ZNE3} we show that similar results may be obtained for Richardson extrapolation with access to 3 distinct noise levels.

\subsubsection{Virtual Distillation}\label{sec:VD}

Here we present our results on Virtual Distillation. Similar to our results for Zero-Noise Extrapolation, throughout this section we make the assumption that 
the statistical uncertainty of the measurement outcomes of the circuit that prepares $\Tr[\widetilde{\rho}^M_iO]$ are no smaller than that of $\Tr[\widetilde{\rho}_iO]$ for any choice of parameters $\thv_i$. In Appendix \ref{sec:appdx_VD} we provide some intuition for this assumption.

In the following proposition we start again with the simple model of global depolarizing noise.

\begin{proposition}[Relative resolvability of Virtual Distillation with global depolarizing noise]\label{prop:VD_depol}
Consider global depolarizing noise of the form in Eq.~\eqref{eq:gdepol_noise}
acting on some $n$-qubit pure state $\rho$ with error probability $p$. We consider the two error mitigation protocols of Ref.~\cite{koczor2020exponential} (denoted "A" and "B") to respectively prepare \eqref{eq:VD_A} and \eqref{eq:VD_B}, using $M$ copies of a quantum state. The relative resolvabilities to resolve any two arbitrary cost function points satisfy
\begin{align}
    \rchi_{depol}^{(A)} \leq \rchi_{depol}^{(B)} = \Gamma(n,M,p)\,,
\end{align}
for all $n\geq1$, $M\geq2$, $p \in [0,1]$, and where 
\begin{equation}
    \Gamma(n,M,p)\leq 1 \,,
\end{equation}
is a monotonically decreasing function in $M$ (with asymptotically exponential decay) for $n\geq 1$, $M\geq 2$. Within this region the bound is saturated as $\Gamma(1,2,p)= 1$ for all $p$.
\end{proposition}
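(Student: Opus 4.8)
The plan is to reduce the whole statement to the spectrum of the noisy state and then carry out an elementary algebraic analysis. Writing the pure state as $\rho=\ketbra{\psi}{\psi}$, the channel in Eq.~\eqref{eq:gdepol_noise} gives $\widetilde{\rho}=(1-p)\rho+p\,\id/2^n$, whose dominant eigenvalue is $\lambda=1-p(1-2^{-n})$ (eigenvector $\ket{\psi}$) and whose remaining $2^n-1$ eigenvalues all equal $\mu=p/2^n$. First I would raise this to the $M$-th power, $\widetilde{\rho}^M=(\lambda^M-\mu^M)\ketbra{\psi}{\psi}+\mu^M\,\id$, so that $\Tr[\widetilde{\rho}^M O]=(\lambda^M-\mu^M)C+\mu^M\Tr[O]$ and $\Tr[\widetilde{\rho}^M]=\lambda^M+(2^n-1)\mu^M$, where $C=\Tr[\rho O]$ is the noise-free cost. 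The key point is that $\lambda$ and $\mu$ depend only on $n$ and $p$, so for any two points $\thv_1,\thv_2$ the point-independent offsets $\mu^M\Tr[O]$ and $p\,\Tr[O]/2^n$ cancel in the differences, leaving
\begin{equation}
\frac{\Delta C_m^{(B)}}{\Delta\widetilde{C}}=\frac{\lambda^M-\mu^M}{(1-p)\lambda^M},\quad \frac{\Delta C_m^{(A)}}{\Delta\widetilde{C}}=\frac{\lambda^M-\mu^M}{(1-p)\bigl(\lambda^M+(2^n-1)\mu^M\bigr)},
\end{equation}
independent of the chosen pair; this is exactly what lets the proposition hold for arbitrary cost function points.

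Second, I would obtain the error-mitigation cost $\gamma$ from the single-shot variances of the derangement-circuit estimators. For protocol B the normalization $\lambda^M$ is a known constant, so only the numerator $\widehat{N}$ estimating $\Tr[\widetilde{\rho}^M O]$ fluctuates and $\Var[C_m^{(B)}]=\Var[\widehat{N}]/\lambda^{2M}$; reducing the ratio of the single-shot variance of $\widehat{N}$ against that of the unmitigated estimator (the computation deferred to the Appendix) gives $\gamma^{(B)}=\lambda^{-2M}$. Combining with the signal ratio yields the exact value
\begin{equation}
\rchi_{depol}^{(B)}=\frac{1}{\gamma^{(B)}}\Bigl(\frac{\Delta C_m^{(B)}}{\Delta\widetilde{C}}\Bigr)^{2}=\frac{(\lambda^M-\mu^M)^2}{(1-p)^2},
\end{equation}
which we take as the definition of $\Gamma(n,M,p)$. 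For protocol A the normalization $\Tr[\widetilde{\rho}^M]\geq\lambda^M$ is itself estimated: this shrinks the signal by the factor $\lambda^M/\Tr[\widetilde{\rho}^M]$ and simultaneously shrinks the variance by $1/\Tr[\widetilde{\rho}^M]^2$, so that to leading order the two effects cancel and $\rchi^{(A)}$ matches $\rchi^{(B)}$, while the residual fluctuations of the denominator estimator and its covariance with $\widehat{N}$ can only degrade resolvability, establishing $\rchi_{depol}^{(A)}\leq\rchi_{depol}^{(B)}$.

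Third, the properties of $\Gamma$ follow from two identities: $\lambda-\mu=1-p$ and $\lambda+\mu=1-p(1-2^{1-n})\leq1$ for $n\geq1$. Factoring $\lambda^M-\mu^M=(\lambda-\mu)\sum_{k=0}^{M-1}\lambda^{M-1-k}\mu^k$ and bounding the sum by $(\lambda+\mu)^{M-1}\leq1$ gives $\lambda^M-\mu^M\leq1-p$, hence $\Gamma\leq1$. For monotonicity, $\Gamma(n,M+1,p)\leq\Gamma(n,M,p)$ reduces to $\lambda^M(1-\lambda)\geq\mu^M(1-\mu)$, which holds because $\lambda^{M-1}\geq\mu^{M-1}\geq0$ and $\lambda(1-\lambda)-\mu(1-\mu)=(\lambda-\mu)(1-\lambda-\mu)\geq0$; the same factorization shows $\Gamma\sim\lambda^{2M}/(1-p)^2$, i.e.\ asymptotically exponential decay. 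Saturation $\Gamma(1,2,p)=1$ is immediate since $n=1$ forces $\lambda+\mu=1$ and $M=2$ collapses the sum to exactly $\lambda+\mu$.

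The main obstacle I expect is the variance analysis of the second step, and especially protocol A. For B one must verify that the single-shot variance of the derangement observable reduces correctly against the unmitigated variance so that $\gamma^{(B)}=\lambda^{-2M}$ holds for all $p$. For A the leading-order cancellation means the inequality $\rchi_{depol}^{(A)}\leq\rchi_{depol}^{(B)}$ is decided entirely by the subleading terms, so one must propagate the statistical error through the ratio $\widehat{N}/\widehat{D}$ and control the covariance $\Cov[\widehat{N},\widehat{D}]$ in order to show rigorously that estimating the normalization never improves resolvability. Once these variance facts are secured, the remaining algebra for $\Gamma$ in the third step is routine.
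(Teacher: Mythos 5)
Your proposal follows the same overall skeleton as the paper's proof: diagonalize the depolarized state as $\widetilde{\rho}=\lambda\ketbra{\psi}{\psi}+\mu(\id-\ketbra{\psi}{\psi})$, compute $\Tr[\widetilde{\rho}^M O]$ and $\Tr[\widetilde{\rho}^M]$ to get state-independent signal ratios, bound the error-mitigation cost from the estimator variances, and then analyze $\Gamma$. The variance facts you defer are exactly the paper's Lemma \ref{lem:VD_EM_cost}: $\gamma^{(B)}=\lambda^{-2M}$ follows from the assumed relation $\Var[\mathrm{prob}_M]\geq\Var[\mathrm{prob}_1]$, and for protocol A the paper treats $\widehat{N}$ and $1/\widehat{D}$ as independent (so the covariance you worry about does not appear), applies the product-variance formula, drops the positive cross terms, and uses Jensen's inequality on $\mathbb{E}[1/\widehat{D}]$ to get $\gamma^{(A)}\geq 1/(\Tr[\widetilde{\rho}^M])^2$ — precisely the "residual fluctuations can only degrade resolvability" argument you sketch. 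Where you genuinely depart from the paper is in establishing the properties of $\Gamma$: the paper only verifies $\Gamma\leq 1$ directly at $M=2$ and then proves monotonic decrease in $M$ via a fairly involved computation of $\partial\Gamma/\partial M$ using logarithm inequalities and a truncated binomial expansion; your factorization $\lambda^M-\mu^M=(\lambda-\mu)\sum_{k}\lambda^{M-1-k}\mu^k$ together with the identities $\lambda-\mu=1-p$ and $\lambda+\mu\leq 1$ gives $\Gamma\leq 1$ for all $M$ at once, and your reduction of monotonicity to $(\lambda-\mu)(1-\lambda-\mu)\geq 0$ is considerably more elementary (and proves integer monotonicity directly, which is what is physically relevant, rather than treating $M$ as continuous). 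Both routes reach the same conclusions; yours buys a shorter and more transparent algebraic argument for $\Gamma$, while the paper's continuous-derivative treatment is what you would need if you wanted monotonicity for non-integer $M$.
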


Proposition \ref{prop:VD_depol} shows that Virtual Distillation decreases the resolvability of cost landscapes suffering from global depolarizing noise. Moreover, as the number of state copies $M$ increases, the effect worsens. We find similar results in the following proposition when considering averaged resolvabilities over a class of noisy states.

\begin{proposition}[Average relative resolvability of Virtual Distillation]\label{prop:VD_av}
Consider an error mitigation protocol that prepares estimator $C_m(\thv_i) = \Tr[\widetilde{\rho}_i^M O]/\Tr[\widetilde{\rho}_i^M]$ from some noisy parameterized quantum state $\widetilde{\rho}_i \equiv \widetilde{\rho}(\thv_i)$. Consider the average relative resolvability $\dbloverline{\rchi}_{\boldsymbol{\lambda}}$ for noisy states of some spectrum $\boldsymbol{\lambda}$ with purity $P$ as defined in Definition \ref{def:av_resolvability}. We have
\begin{equation}
    \dbloverline{\rchi}_{\boldsymbol{\lambda}}\leq G(n,M,P)\,\leq\,1\,,
\end{equation}
where $G(n,M,P)$ is a monotonically decreasing function in $M$ (with asymptotically exponential decay) for all $n\geq 1$, $M\geq 2$.
Within this region the bound is saturated as $G(1,2,P)=1$ for all $P$, and as $G(n,M,1)=1$ for all $n\geq 1, M\geq 2$.
\end{proposition}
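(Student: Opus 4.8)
The plan is to reduce every quantity appearing in $\dbloverline{\rchi}_{\boldsymbol{\lambda}}$ (Definition \ref{def:av_resolvabilityII}) to a second moment of a \emph{linear} cost function over the $2$-design, which can then be evaluated in closed form by a single twirl. The crucial simplifying observations are that traces of powers are unitarily invariant, so $\Tr[(U_i\rho_{\boldsymbol{\lambda}}U_i^\dag)^M]=\Tr[\rho_{\boldsymbol{\lambda}}^M]$ is \emph{independent} of $U_i$, and $(U_i\rho_{\boldsymbol{\lambda}}U_i^\dag)^M=U_i\rho_{\boldsymbol{\lambda}}^MU_i^\dag$. Hence the virtual-distillation estimator is $C_m(\rho_{\boldsymbol{\lambda}},U_i)=\Tr[U_i\rho_{\boldsymbol{\lambda}}^MU_i^\dag O]/\Tr[\rho_{\boldsymbol{\lambda}}^M]$, i.e.\ a linear functional of $U_i\rho_{\boldsymbol{\lambda}}^MU_i^\dag$ divided by a $U_i$-independent constant. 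Writing $\bar O=O-\Tr[O]\,\id/2^n$ for the traceless part of $O$ and using $\Tr[U_iYU_i^\dag]=\Tr[Y]$, both deviations from the fixed point collapse onto the same shape: $\widetilde C-\Tr[O]/2^n=\Tr[U_i\rho_{\boldsymbol{\lambda}}U_i^\dag\bar O]$ and $C_m-\Tr[O]/2^n=\Tr[U_i\rho_{\boldsymbol{\lambda}}^MU_i^\dag\bar O]/\Tr[\rho_{\boldsymbol{\lambda}}^M]$.

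Next I would evaluate the two $2$-design averages. Using the standard second-moment twirl $\langle U_i^{\otimes2}(\,\cdot\,)U_i^{\dag\otimes2}\rangle_{U_i}=\alpha\,\id+\beta\,\Sbb$ with $\Sbb$ the swap operator, and the fact that $\Tr[\bar O]=0$ kills the identity ($\alpha$) contribution, each average reduces to the coefficient $\beta$ times $\Tr[\bar O^2]$. Abbreviating the power sums (purities) $P_k=\Tr[\rho_{\boldsymbol{\lambda}}^k]$ and $d=2^n$, this gives $\langle(\widetilde C-\Tr[O]/2^n)^2\rangle_{U_i}\propto (dP_2-1)$ and $\langle(C_m-\Tr[O]/2^n)^2\rangle_{U_i}\propto (dP_{2M}-P_M^2)/P_M^2$, with a common factor $\Tr[\bar O^2]/[d(d^2-1)]$ that cancels in the ratio. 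The signal ratio is therefore $\tfrac{1}{P_M^2}\cdot\tfrac{dP_{2M}-P_M^2}{dP_2-1}$.

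It remains to fold in the error-mitigation cost $\gamma(\boldsymbol{\lambda})$ of Definition \ref{def:em_cost}, comparing the shot-noise variance of the virtual-distillation estimator against the unmitigated one. The dominant effect is the $1/\Tr[\rho_{\boldsymbol{\lambda}}^M]^2=1/P_M^2$ amplification incurred by dividing the measured numerator by the small constant $\Tr[\rho_{\boldsymbol{\lambda}}^M]$, together with the bounded single-shot variance of the derangement measurement. Assembling $\dbloverline{\rchi}_{\boldsymbol{\lambda}}=\gamma^{-1}\times(\text{signal ratio})$ yields a rational function of $d,P_2,P_M,P_{2M}$, which I would take as the definition of $G(n,M,P)$ after bounding the variance factors so the spectrum enters only through $n$, $M$ and the purity $P=P_2$. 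Two of the three claimed facts follow from convexity of power sums: $G\leq1$ and nonnegativity of both second moments come from the log-convexity of $k\mapsto P_k$ (Cauchy--Schwarz gives $dP_{2M}\geq P_M^2$ and $dP_2\geq1$); the saturation $G(n,M,1)=1$ is immediate because a pure $\rho_{\boldsymbol{\lambda}}$ has $P_k=1$ for every $k$, forcing $C_m=\widetilde C$ and $\gamma=1$; and $G(1,2,P)=1$ is the boundary case $d=2$, $M=2$ checked directly on the two-eigenvalue spectrum.

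The main obstacle I expect is twofold and lives in establishing the \emph{monotone}, asymptotically exponential decrease of $G$ in $M$, and in pinning down $\gamma(\boldsymbol{\lambda})$ cleanly enough that it depends on the spectrum only through $n,M,P$. The monotonicity is delicate precisely because the factor $1/P_M^2$ appears in \emph{both} the signal ratio and in $\gamma$, so the two $M$-dependencies partially cancel and one must show the residual dependence through $P_{2M}/P_M^2$ (and the variance bounds) is net decreasing; this needs sharp inequalities relating consecutive power sums of a sub-normalized eigenvalue distribution, which I would obtain from majorization/log-convexity arguments and from isolating the dominant eigenvalue to read off the exponential rate. Reducing the generic-spectrum variance to a purity-only bound is the second delicate point, and it is what forces the statement to be an inequality $\dbloverline{\rchi}_{\boldsymbol{\lambda}}\leq G(n,M,P)$ rather than an equality.
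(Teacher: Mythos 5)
Your reduction is exactly the paper's: you use unitary invariance to write $C_m=\Tr[U_i\rho_{\boldsymbol{\lambda}}^M U_i^\dag O]/\Tr[\rho_{\boldsymbol{\lambda}}^M]$, evaluate first and second moments with the standard 2-design twirl, lower-bound the mitigation cost by $\gamma\geq 1/\Tr[\rho_{\boldsymbol{\lambda}}^M]^2$, and observe that the $P_M^{-2}$ factors cancel, leaving
$\dbloverline{\rchi}_{\boldsymbol{\lambda}}\leq \big(\Tr[\rho_{\boldsymbol{\lambda}}^{2M}]-\tfrac{1}{2^n}\Tr[\rho_{\boldsymbol{\lambda}}^M]^2\big)\big/\big(\Tr[\rho_{\boldsymbol{\lambda}}^2]-\tfrac{1}{2^n}\big)$, i.e.\ the ratio of eigenvalue-population variances $\Var[\boldsymbol{\lambda}^{(M)}]/\Var[\boldsymbol{\lambda}^{(1)}]$. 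Your closed-form checks of $G(n,M,1)=1$ and $G(1,2,P)=1$ are also fine.

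The gap is in everything after that point, which is where the actual content of the proposition lives. First, your claim that $G\leq 1$ ``comes from the log-convexity of $k\mapsto P_k$ (Cauchy--Schwarz gives $dP_{2M}\geq P_M^2$ and $dP_2\geq 1$)'' is not correct: those inequalities only establish that numerator and denominator are separately nonnegative, not that the numerator is smaller than the denominator. The comparison $\Var[\boldsymbol{\lambda}^{(M)}]\leq\Var[\boldsymbol{\lambda}^{(1)}]$ is genuinely delicate --- the paper in fact finds that its natural first purity-only bound \emph{exceeds} $1$ for $M=2$, $n\geq 2$ at purities near $1/(2^n-1)$, and has to derive a separate, tighter $M=2$ bound to rescue the claim. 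Second, producing a bound that depends on the spectrum only through $(n,M,P)$ and is monotonically decreasing in $M$ with exponential decay requires an explicit construction that you only gesture at: the paper takes a minimum of a high-purity bound ($P^M(1-2^{-n})/(P-2^{-n})$, obtained by adding and dropping cross terms in the power sums) and a low-purity bound obtained by expanding $X^M=((X-\mu)+\mu)^M$, bounding $\Var[(X-\mu)^k]$ via Popoviciu's inequality, extremizing $|\hat\lambda_{\max}|^k+|\hat\lambda_{\min}|^k$ over spectra of fixed range by a majorization argument, and converting the range constraint into a purity constraint via $b\leq\sqrt{(P-2^{-n})/2}$; monotonicity in $M$ is then read off from the explicit functional forms (and, for the exact $n=1$ case of the related two-point result, from a separate derivative computation). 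Without carrying out some version of this program, the three central claims of the proposition --- existence of a purity-only bound, $G\leq 1$, and monotone exponential decay in $M$ --- remain unproved, and you have correctly identified but not overcome the main obstacle.
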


We present the explicit forms of $\Gamma(n,M,p)$ and $G(n,M,P)$ as well as a proof of the above propositions in Appendix \ref{sec:appdx_VD}. In Appendix \ref{sec:appdx_VD_av} we also show that outside of the highly mixed regime, the bound in Proposition \ref{prop:VD_av} decreases with decreasing noisy state purity. This indicates that within such settings, the greater the loss of purity due to noise, the worse the impact on resolvability is after error mitigation with Virtual Distillation.

\subsubsection{Probabilistic Error Cancellation }\label{sec:QP}

Here we present our results for Probabilistic Error Cancellation. We utilize the optimal quasiprobability decompositions studied in Ref.~\cite{takagi2020optimal}, and the proofs can be found in Section \ref{sec:appdx_QP} of the Appendix.

\begin{proposition}[Relative resolvability of Probabilistic Error Cancellation under global depolarizing noise]\label{prop:QP_gdepol}
Consider a quasi-probability method that corrects global depolarizing noise of the form \eqref{eq:gdepol_noise}. For any pair of states corresponding to points on the cost function landscape, the optimal quasiprobability scheme gives
\begin{equation}
    \rchi_{depol} = \frac{2^{2n}}{2^{2n}-p(2-p)} \geq 1\,,
\end{equation}
for all $n\,\geq1,\,  p\in[0,1]$, which is achieved with access to noisy Pauli gates.           
\end{proposition}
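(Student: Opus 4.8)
The plan is to evaluate the relative resolvability of Definition~\ref{def:resolvability} directly, $\rchi_{depol} = \frac{1}{\gamma}\big(\Delta C_m/\Delta\widetilde{C}\big)^2$, by computing its two ingredients separately: the ratio of mitigated-to-noisy cost gaps, and the error-mitigation cost $\gamma$ of Definition~\ref{def:em_cost}. I would dispose of the cost-gap ratio first, as it is the easy factor. Since composing $L$ instances of the channel in Eq.~\eqref{eq:gdepol_noise} again yields a global depolarizing channel, it suffices to treat a single effective $\mathcal{D}$ with parameter $p$. The noisy cost is then affine in the noise-free one, $\widetilde{C}(\thv) = (1-p)C(\thv) + p\,\Tr[O]/2^n$, so the state-independent offset cancels in the difference and $\Delta\widetilde{C}(\thv_{1,2}) = (1-p)\,\Delta C(\thv_{1,2})$. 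Because probabilistic error cancellation is unbiased by construction (its signed average reproduces the noise-free expectation exactly, as reviewed in Sec.~\ref{sec:QP}), one has $C_m(\thv) = C(\thv)$ and hence $\Delta C_m = \Delta C$. Together these give $\Delta C_m/\Delta\widetilde{C} = 1/(1-p)$, independent of the chosen pair of points, which already explains why the final answer can hold for \emph{any} pair of states.

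The second ingredient is the quasiprobability representation and its overhead. I would write $\mathcal{D}$ in the Pauli (Liouville) basis, where it multiplies every nontrivial Pauli component by $(1-p)$ and fixes the identity component, so its inverse rescales nontrivial components by $1/(1-p)$. The key subtlety, signalled by the phrase ``with access to noisy Pauli gates,'' is that the corrective Pauli operations are themselves corrupted by the same channel; each inserted noisy Pauli contributes an extra factor of $\mathcal{D}$, so cancelling the original noise requires representing the \emph{double} inverse $\mathcal{D}^{-2}$ in ideal Pauli channels. A direct computation gives $\mathcal{D}^{-2}(\sigma) = \frac{1}{(1-p)^2}\sigma - \frac{p(2-p)}{(1-p)^2}\frac{\id}{2^n}\Tr[\sigma]$, which is precisely where the $p(2-p)$ of the claim originates. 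Expanding $\id/2^n$ as a uniform Pauli twirl yields explicit coefficients $\{q_\alpha\}$, and I would invoke the optimality of this decomposition from Ref.~\cite{takagi2020optimal} to certify that no cheaper scheme exists, establishing the ``optimal'' in the statement.

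The crux is the error-mitigation cost $\gamma = \Var[C_m]/\Var[\widetilde{C}]$. Here I would compute the single-shot variance of the signed quasiprobability estimator built from the $\mathcal{D}^{-2}$ decomposition, using the twirl identity $\sum_P P A P = 2^n\Tr[A]\,\id$ to collapse the sum over the $2^{2n}-1$ nontrivial Paulis, with the goal of showing that the state-dependent pieces organize into the purity-independent form $\gamma = \frac{2^{2n}-p(2-p)}{2^{2n}(1-p)^2}$. The main obstacle is exactly this state-independence: the raw single-shot variance carries terms proportional to the cost value and to $\langle O^2\rangle$, and the real work is to show that, for the optimal decomposition, these either cancel or factor out of the variance \emph{ratio}, leaving a function of $n$ and $p$ alone. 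I expect the bulk of the calculation, together with the appeal to the variance-optimal (rather than merely one-norm-optimal) decomposition, to live in this step.

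Finally I would assemble the pieces, $\rchi_{depol} = \frac{1}{\gamma}\,\frac{1}{(1-p)^2} = \frac{2^{2n}(1-p)^2}{2^{2n}-p(2-p)}\cdot\frac{1}{(1-p)^2} = \frac{2^{2n}}{2^{2n}-p(2-p)}$. The inequality $\rchi_{depol}\geq 1$ then follows immediately, since $p(2-p)\geq 0$ for all $p\in[0,1]$ means the denominator never exceeds the numerator. As a sanity check I would note $\rchi_{depol} = 1 + p(2-p)/\big(2^{2n}-p(2-p)\big) \to 1$ exponentially in $n$, matching the paper's narrative that any resolvability gain from probabilistic error cancellation under depolarizing noise vanishes exponentially with system size.
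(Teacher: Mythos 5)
Your overall skeleton is right—compute $\Delta C_m/\Delta\widetilde{C}=1/(1-p)$ from unbiasedness and the affine action of $\mathcal{D}$, then divide by $\gamma$—and your final assembly matches the paper's. But there is a genuine error in how you obtain $\gamma$. The paper does \emph{not} invert the noise twice: it takes the optimal decomposition of the single inverse, $\mathcal{D}^{-1}=\bigl(1+\tfrac{(2^{2n}-1)p}{2^{2n}(1-p)}\bigr)\mathcal{I}-\sum_{i=1}^{2^{2n}-1}\tfrac{p}{2^{2n}(1-p)}\mathcal{P}_i$ (equivalently, a representation of the ideal identity in terms of noisy Paulis, which works because $\mathcal{D}$ commutes with Pauli channels), and computes $\gamma=\sum_\alpha q_\alpha^2=\tfrac{(2^{2n}-p)^2+(2^{2n}-1)p^2}{2^{4n}(1-p)^2}=\tfrac{2^{2n}-2p+p^2}{2^{2n}(1-p)^2}$. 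The $p(2-p)=2p-p^2$ in the proposition originates from this sum of squares, not from $\mathcal{D}^{-2}$. If you instead decompose $\mathcal{D}^{-2}$ as you propose, you are simply decomposing $\mathcal{D}_s^{-1}$ with $s=p(2-p)$, which gives $\gamma=\tfrac{2^{2n}-s(2-s)}{2^{2n}(1-s)^2}$ and hence $\rchi_{depol}=\tfrac{2^{2n}(1-p)^2}{2^{2n}-s(2-s)}$; this is strictly less than $1$ for small $p$ and $n\geq1$ and does not reproduce the stated $\tfrac{2^{2n}}{2^{2n}-p(2-p)}$. So the "double inverse" step would derail the proof.

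The second gap is the variance itself. You correctly identify that a genuine single-shot variance of the signed estimator carries state-dependent terms, and you defer the "real work" of showing they cancel in the ratio—but they do not cancel in general, and the proposition's clean formula is not obtained that way. The paper's convention (Appendix~\ref{sec:appdx_QP_gammatot}) is that the error-mitigation cost of a quasiprobability scheme is the sum of squared coefficients, $\gamma=\sum_\alpha q_\alpha^2$, applied under the standing assumption that the per-circuit shot noise is comparable across the sampled circuits. Without adopting that modeling choice (or an equivalent one), the crux step of your plan cannot be completed as written.
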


Proposition \ref{prop:QP_gdepol} shows that for the special case of global depolarizing noise, Probabilistic Error Cancellation actually improves the resolvability of the noisy cost landscape. However, this improvement is generally small and is decreasing quickly with the number of qubits $n$. For instance, for $n=1$, $\rchi_{depol}$ has maximum value $4/3$ (achieved in the limit of maximum depolarization probability). {For} $n=2$, $\rchi_{depol}$ has maximum value $\approx 1.07$. In the limit of large $n$, $\rchi_{depol}$ tends to 1.

We extend this study to local depolarizing noise in Appendix \ref{sec:appdx_QP}. 
We find that for a single instance of local depolarizing noise, Probabilistic Error Correction can either improve resolvability or worsen it, depending on the strength of concentration of the cost. 
In addition, we show in the following proposition that if one wishes to mitigate all the noisy gates in the circuit and one has NIBP scaling, the improvement due to Probabilistic Error Cancellation degrades exponentially, and ultimately for large problem sizes this impairs resolvability.

\begin{proposition}[Scaling of Probabilistic Error Cancellation with local depolarizing noise]\label{prop:PEC}
Consider local depolarizing noise with depolarizing probability $p$ acting in $L$ layers through a depth $L$ circuit as in Eq.~\eqref{eq:noisystate}. Suppose that the effect of this noise is to cause cost concentration 
\begin{equation}\label{eq:QP_localscaling-assumption}
    \langle\Delta \widetilde{C}(\thv_{i,*})\rangle_i = Aq^L \langle\Delta C(\thv_{i,*})\rangle_i\,,
\end{equation}
for some constant $A$ and noise parameter $q\in [0,1)$. The optimal quasiprobability method to mitigate the depolarizing noise in the circuit yields 
\begin{equation}
    \overline{\rchi} = \frac{1}{A^2q^{2L}}\left(Q(p) \right)^{nL} \,,
\end{equation}
where $Q(p)=1 - \frac{3p(2-p)}{4-p(2-p)} \in [0,1)$ for $p\in (0,1]$.
\end{proposition}

{We note that it is known that noisy cost differences under local depolarizing noise are known to be at best as large as $\langle\Delta \widetilde{C}(\thv_{i,*})\rangle_i \propto (1-p)^L \langle\Delta C(\thv_{i,*})\rangle_i$ \cite{wang2020noise}. Thus,  } Eq.~\eqref{eq:QP_localscaling-assumption} gives the best possible scaling of noisy cost differences allowed by \eqref{eq:costconcentration}. {Proposition \ref{prop:PEC} shows that if this exponential scaling is no worse than Eq.~\eqref{eq:QP_localscaling-assumption}, then under local depolarizing noise the relative resolvability has unfavourable scaling with respect to system size, for any depth circuit.}

\subsubsection{Linear ansatz methods}\label{sec:CDR}

In Proposition \ref{prop:CDR} we consider a scenario where the same linear ansatz \eqref{eq:CDR_ansatz} is applied to two points on the noisy cost landscape. For Clifford Data Regression this is a reasonable assumption in scenarios where one is comparing two points that are close in parameter space, for instance, when a simplex-based optimizer is exploring a small local region. However, we remark this is not always true in general settings.

\begin{proposition}[Linear ansatz methods]\label{prop:CDR}
Consider any error mitigation strategy that mitigates noisy cost function value $\widetilde{C}(\thv)$ by constructing an estimator $C_m(\thv)$ of the form
\eqref{eq:CDR_ansatz}. For any two noisy cost function points to which the same ansatz is applied, we have
\begin{equation}
    \rchi = 1\,,
\end{equation}
for any noise process.
\end{proposition}

\begin{corollary}[Linear ansatz methods under global depolarizing noise]\label{cor:CDR}
Under global depolarizing noise, the optimal linear ansatz gives $\rchi = 1$ for any pair of cost function points.
\end{corollary}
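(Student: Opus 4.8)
The plan is to reduce the corollary directly to Proposition \ref{prop:CDR} by showing that, under global depolarizing noise, the optimal linear ansatz uses parameter-independent coefficients. First I would compute the explicit action of the channel \eqref{eq:gdepol_noise} on the cost function. Since $\widetilde{\rho} = (1-p)\rho + p\,\id/2^n$, linearity of the trace immediately gives
\[
\widetilde{C}(\thv) = (1-p)\,C(\thv) + p\,\frac{\Tr[O]}{2^n}\,.
\]
Inverting this relation recovers the noise-free cost exactly as an affine function of the noisy cost,
\[
C(\thv) = \frac{1}{1-p}\,\widetilde{C}(\thv) - \frac{p}{1-p}\,\frac{\Tr[O]}{2^n}\,.
\]

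Next I would observe that the coefficients $a_1 = 1/(1-p)$ and $a_2 = -\,p\,\Tr[O]/\big((1-p)2^n\big)$ appearing in this affine recovery depend only on the depolarizing probability $p$ and the fixed observable $O$, and are manifestly independent of $\thv$. Because this choice of ansatz of the form \eqref{eq:CDR_ansatz} reproduces $C(\thv)$ with zero mitigation bias, it is the optimal choice (any $\thv$-dependent modification could only introduce bias or additional variance, never improve upon exact recovery). Consequently the \emph{same} ansatz, with identical $a_1$ and $a_2$, is applied at every point of the landscape, and in particular at any pair $\thv_1,\thv_2$ whose cost values we wish to resolve.

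Finally, having established that a common ansatz is used at both points, the hypotheses of Proposition \ref{prop:CDR} are met, and I would invoke it directly to conclude $\rchi = 1$. The content being imported is the short observation that the shared additive constant $a_2$ cancels in the difference $\Delta C_m(\thv_{1,2}) = a_1\,\Delta \widetilde{C}(\thv_{1,2})$, while the multiplicative factor contributes $\gamma = a_1^2$ to the error mitigation cost (Definition \ref{def:em_cost}); the two factors of $a_1$ then cancel in $\rchi = \gamma^{-1}\big(\Delta C_m/\Delta \widetilde{C}\big)^2$.

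There is essentially no hard step here: the only substantive point is recognizing that global depolarizing noise makes the optimal affine map uniform across parameter space, after which the claim is immediate from Proposition \ref{prop:CDR}. The one detail worth stating carefully is the optimality claim, namely that no $\thv$-dependent choice of $(a_1,a_2)$ can outperform exact recovery; but since the uniform ansatz already achieves zero mitigation error, optimality in the relevant sense is automatic, so I expect no genuine obstacle.
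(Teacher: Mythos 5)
Your proposal is correct and follows essentially the same route as the paper: the paper likewise observes that the optimal linear ansatz under global depolarizing noise is exact and state-independent (with $a_1 = 1/(1-p)^L$, $a_2 = -(1-(1-p)^L)\Tr[O]/((1-p)^L 2^n)$ for $L$ noise instances), so the same affine map is applied at every point and Proposition~\ref{prop:CDR} gives $\rchi=1$. As a minor aside, your $\gamma = a_1^2$ is the correct value required for the cancellation (the paper's appendix writes $\gamma = a_1$, which appears to be a typo).
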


Corollary \ref{cor:CDR} comes simply by noting that the optimal choice of linear ansatz under global depolarizing noise corrects the noise exactly and is state independent \cite{czarnik2020error}. The above results imply that in some settings CDR has a neutral effect on the resolvability of the cost function landscape. This opens up the possibility that in practical settings CDR can improve the trainability of cost landscapes, if it can remedy other cost corrupting effects due to noise outside of cost concentration. This motivates our numerical studies of CDR, which we present in the following section. 

\section{Numerical Results}\label{sec:numerics}

As discussed in Sec.~\ref{sec:protocol_specific}, in many settings, current state-of-the-art error mitigation methods do not mitigate the effects of cost concentration. Nevertheless, as discussed in Sec.~\ref{sec:costcorruption}, trainability of VQAs is also affected by other cost-corrupting effects. We expect that error mitigation can reverse some of the effects due to cost corruption that affect the trainability of VQAs when the effects of cost concentration are not too severe.
In this section, we numerically investigate the effects of error mitigation on trainability in such a setting to provide possible evidence towards beneficial effects of error mitigation. To this end, we focus on CDR as in some settings it does not worsen the effects of cost concentration, as shown in Sec.~\ref{sec:CDR}. {While we use this result as a guiding heuristic for the choice of the error mitigation method, we note that a direct comparison of the performance of various methods would be necessary to establish the optimal method for a particular optimization task.}

\begin{figure}[tb!]
    \includegraphics[width=\columnwidth]{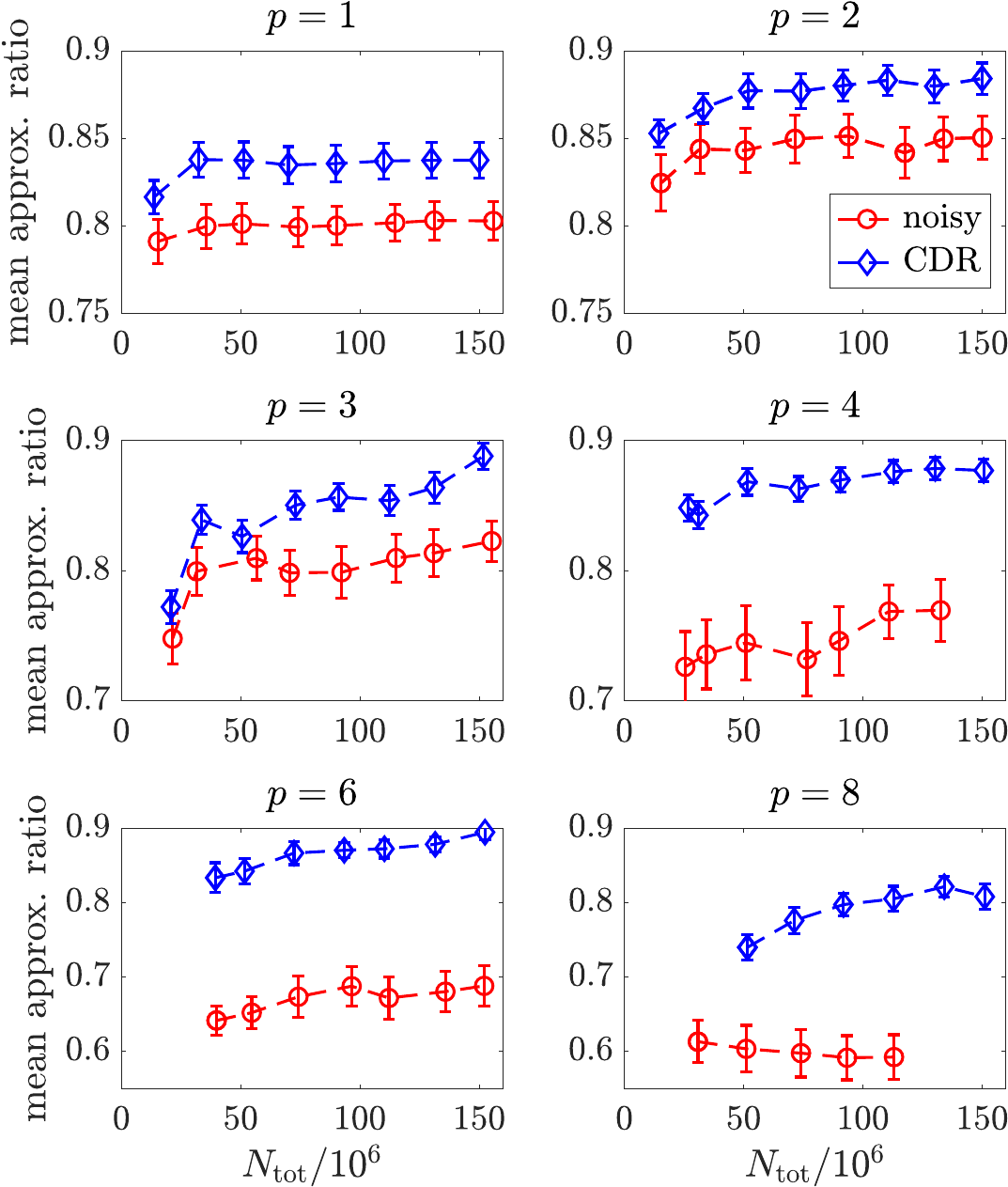}
    \caption{\textbf{Comparing CDR-mitigated and noisy optimization for  $\boldsymbol{5}$-qubit Max-Cut QAOA.} We plot the approximation ratio of solutions for noisy (red circles)  and CDR-mitigated  (blue diamonds) optimization of Max-Cut QAOA for $5$ qubits. Different panels show results for different numbers of QAOA rounds $p$ plotted versus total number of shots $N_{\rm tot}$ spent on the optimization of a MaxCut problem. Here, we compute the approximation ratios using exact $H_{\rm MaxCut}$ energies  to benchmark quality of the noisy and CDR-mitigated optimization. {The approximation ratio is defined as the ratio of a given solution's energy to the true ground state energy. A higher approximation ratio indicates better solution quality.}  For each $p$ we average  the approximation ratio over $36$ MaxCut graphs chosen randomly   from   Erd\"os-R\'enyi ensemble. {The error bars show a standard deviation of the mean computed as a standard deviation of the ratio for a graph sample divided by a square root of the number of graphs.} For all  $p$ and $N_{\rm tot}$ values we see an advantage of the CDR-mitigated optimization over noisy optimization.      }
    \label{fig:CDR-numerics}
\end{figure}

We perform our numerical experiments by simulating the Quantum Approximate Optimization Algorithm (QAOA)~\cite{farhi2014quantum} for 5-qubit {and 8-qubit} MaxCut problems.
{For $n=5$ we use a realistic noise model of an IBM quantum computer~\cite{cincio2021machine}, which has been obtained by gate set tomography of IBM's Ourense quantum device. In the case of $n=8$ we modify the noise model taking a convex combination of the IBM's Ourense and noiseless process matrices to  reduce  the  noise strength by a factor of $5$ with respect to the real device. This noise reduction was necessary to ensure trainability of the ansatz as for our problem larger $n$ implies   more layers of native gates in the optimized circuits and consequently larger cost concentration.} Furthermore, we assume here linear connectivity of the simulated quantum computer.

A MaxCut problem is defined for a graph $G=(V,E)$ of nodes $V$ and edges $E$. The problem is to find a bipartition of the nodes into two sets which maximizes the number of edges connecting the sets. This problem can be reformulated as finding the ground state of a  Hamiltonian
\begin{equation} 
H_{\rm MaxCut}= -\frac{1}{2} \sum_{ij\in E}   (\id - Z_i Z_j), 
\end{equation} 
where $Z_i,Z_j$ are Pauli $Z$ matrices.
Here we consider graphs with $n=5$ {($n=8$)} vertices, and with $36$ {($30$)} randomly generated instances, respectively. The instances are obtained according to the Erd\"os-R\'enyi model~\cite{erdos1959random}, where for each pair of vertices in the graph there is a connecting edge with probability $0.5$.

To  approximate the ground state of $H_{\rm MaxCut}$ we simulate the QAOA for number of rounds ranging from $p=1$ to $8$. The QAOA ansatz applied to the input state is given as
\begin{equation}
    \prod_{j=p,p-1\dots,1} e^{i \beta_j H_M} e^{i \gamma_j H_{\rm MaxCut}}  (|+\rangle)^{\otimes n}, 
\end{equation}
where $H_M =   \sum_{j} X_j$, $X_j$ are Pauli $X$ matrices, we denote  $\ket{+}=(\ket{0}+\ket{1})/\sqrt{2}  $ , and $\beta_j$, $\gamma_j$ are variational parameters. We minimize the cost function $\langle H_{\rm MaxCut} \rangle$ using the Nelder-Mead algorithm \cite{nelder1965simplex} {and choose the initial values of $\beta_j, \gamma_j$ randomly}. We perform the optimization with shot budgets ranging from $N_{\rm tot}=10^7$ to $1.5\times 10^8$ {for $n=5$  and from $N_{\rm tot}= 10^7$ to $7\times 10^7$ for $n=8$.}   We define $N_{\rm tot}$ as total number of shots spent on the optimization. We detail the optimization procedure in Appendix~\ref{appendix:numerics}.  In our numerics, the values of $N_{\rm tot}$ are chosen to enable  {an}  {optimization runtime} {which is feasible} with current quantum computers.\footnote{{To approximately estimate the time required to run
the optimization we assume a delay time of $250$~${\rm \mu s}$
in between shots which is a default setting of current IBM quantum
computers~\cite{wack2021quality}.  Furthermore, we assume that the circuit 
compilation and execution time is negligible. This assumption is 
justified for sufficiently shallow circuits and sufficiently
large numbers of shots per circuit~\cite{wack2021quality}. 
Under such assumptions for shot budgets used in our  5-qubit MaxCut QAOA 
numerics we obtain times ranging from $1$ to $14$~hours for $n=5$ and $14$ to $140$~hours for $n=8$.}}  To quantify the quality of the noisy or CDR-mitigated optimization we compute approximation ratios of the solutions using the exact expectation value of $\langle H_{\rm MaxCut} \rangle$. The approximation ratio is defined here as the ratio of a given solution's energy to the true ground state energy.

We gather our numerical results for CDR  {at $n=5$} in Fig.~\ref{fig:CDR-numerics}.  In the figure we plot the approximation ratio averaged over $36$ randomly chosen graphs versus $N_{\rm tot}$.  We compare the quality of the solutions of noisy (unmitigated) and CDR-mitigated optimization and find that  CDR-mitigated optimization outperforms noisy optimization for all considered $p$ and  $N_{\rm tot}$ values. We observe that the solutions for  $p=2$  outperform those for $p=1$ for both CDR-mitigated and noisy optimization. The quality of $p>2$ solutions decline with increasing $p$ for noisy optimization, while it remains approximately the same for $p=2$ to $6$ for CDR-mitigated optimization. With CDR-mitigated optimization we see a decrease in quality of solution for the largest considered $p=8$.

{In Fig.~\ref{fig:CDR-numerics-n8} we gather CDR results for $30$ instances of Erd\"os-Renyi   graphs at $n=8$ and $p=1-4$ plotting again the  approximation ratio averaged over instances versus $N_{\rm tot}$. Similar to the case of $n=5$ we typically see an advantage of the CDR mitigated optimization  over noisy optimization, although the improvement in approximation ratio is smaller than observed for $n=5$. This 
result underscores the need for more detailed investigation of the properties of an optimization problem which make it favorable for error-mitigated optimization. We leave such an investigation for future work. 
For the deepest $p=3,4$ ansatze we see that performance of both the noisy and CDR-mitigated optimization is degraded in comparison to the shallower $p=2$ case.}

The numerical results presented here are obtained for circuits shallow enough to be trainable while using the CDR-mitigated cost function. Therefore they are outside of the NIBP scaling regime.  As discussed in Section~\ref{sec:effectsofnoise}, even outside the NIBP regime noise may adversely impact trainability by corrupting the cost function landscape, which error mitigation has a chance to remedy. 
Our results give hope that  CDR-mitigated optimization  may overall offer a trainability advantage for problems with such cost function landscape corruption.

As  discussed in Section~\ref{sec:protocol_specific}  optimizing  an error mitigated cost function is not guaranteed to  outperform its noisy optimization even outside the NIBP regime. { Indeed, in Appendix~\ref{appendix:VD_numerics} we find  that for the  $p=2,4$  MaxCut graphs and moderate $N_{\rm tot}$ used here, VD-mitigated optimization does not outperform noisy optimization. In Appendix \ref{appendix:ZNE_numerics} we reach a similar conclusion for optimization mitigated with Zero-Noise Extrapolation when increasing noise strength digitally by widely-used CNOT identity insertions~\cite{giurgica2020digital}.}    

{We note that this conclusion may be problem-dependent. In particular, for a sufficiently large $N_{\rm tot}$, an idealized error mitigation method that perfectly corrects the expectation values in the limit of an infinite shot number should improve optimization quality as the cost corruption effects will become dominant. For a realistic case, error mitigation has a bias that may depend on a problem choice, noise, and even error mitigation method's implementation details. For example, it has been found that Zero-Noise Extrapolation's bias depends on the method of increasing the noise strength~\cite{kim2021scalable}. Therefore, we expect all those factors to be relevant when the shot number is large enough that the cost corruption limits VQA's performance. Consequently, caution is necessary when judging the power of a particular error mitigation approach compared to others in removing the optimization landscape corruption based on a few test cases.
}

\begin{figure}[t]
    \includegraphics[width=\columnwidth]{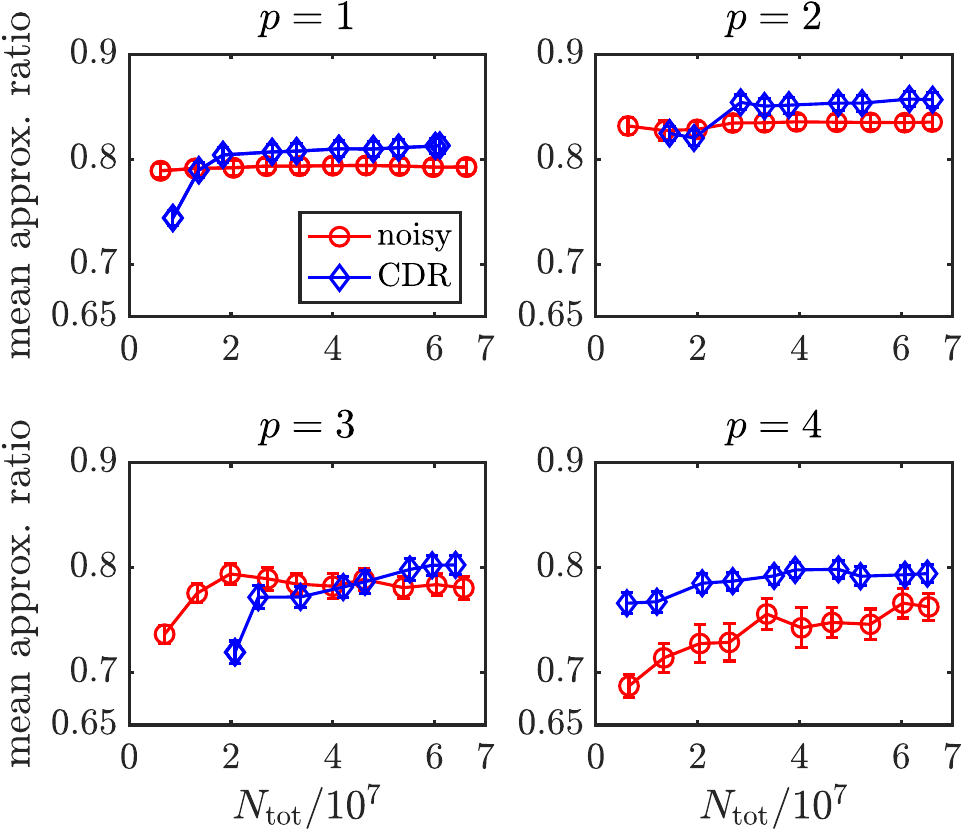}
    \caption{\textbf{ {Comparing CDR-mitigated and noisy optimization for  $\boldsymbol{8}$-qubit Max-Cut QAOA.}} {Similar to Fig.~\ref{fig:CDR-numerics},  here we plot the approximation ratio of solutions for noisy  and CDR-mitigated  optimization averaged over $30$ randomly chosen  graphs  from   Erd\"os-R\'enyi ensemble. The error bars are computed as in Fig.~\ref{fig:CDR-numerics}.} }    
    \label{fig:CDR-numerics-n8}
\end{figure}

\section{Discussion}\label{sec:discussions}

Noise can exponentially degrade the trainability of linear (or superlinear) depth Variational Quantum Algorithms (VQAs) by flattening the cost landscape, thus requiring an exponential precision in system size {(and therefore exponential shot budget)} to resolve its features \cite{wang2020noise, franca2020limitations}. This limits the scope for achieving possible quantum advantage with VQAs. At present there are no known strategies to avoid this exponential scaling completely aside from pursuing algorithms with sublinear circuit depth, and current strategies to mitigate this effect consist only of reducing hardware noise rates. Thus, {it is} a pressing challenge to search for possible solutions to this problem. Error mitigation strategies emerge as a natural candidate to tackle this problem under near-term constraints.

In this work we investigate the effects of error mitigation on the trainability of noisy cost function landscapes in two regimes.  {We note that despite the fact it is known that error mitigation increases shot budgets, it has been a priori unclear whether error mitigation strategies in general can have a positive or negative contribution towards this problem, as it relies on a careful balance of how effectively errors are mitigated, compared to how quickly statistical uncertainty is amplified.} First, we work in the asymptotic regime (in terms of scaling with system size) and find that if a VQA is suffering from exponential cost concentration, requiring an exponential number of shots to accurately resolve cost values, then a broad class of error mitigation strategies (including as special cases Zero-Noise Extrapolation, Virtual Distillation, Probabilistic Error Cancellation, Clifford Data Regression) cannot remove this exponential scaling. Within the considered paradigm, this exponential scaling implies that at least an exponential number of resources needs to be spent in order to extract accurate information from the cost landscape in order to find a cost-minimizing optimization direction. In Corollary \ref{cor1} we identify circuit samples (or shots) as well as number of copies of a quantum state as two such resources.

Second, we move out of the asymptotic regime and investigate whether or not particular error mitigation protocols can improve the resolvability of noisy cost landscapes. Should such a landscape be burdened with exponential cost concentration, this would correspond to an improvement in the coefficient in the exponential scaling. Our results indicate that some error mitigation protocols can worsen the resolvability, and ultimately the trainability, of cost landscapes in certain settings. In particular, in Propositions \ref{prop:VD_depol} and \ref{prop:VD_av} we show analytically that Virtual Distillation impairs resolvability with worsening resolvability as the number of state copies increases. We obtain similar results for Zero-Noise Extrapolation in Propositions \ref{prop:ZNE_depol} and \ref{prop:ZNE} under some assumptions of the cost landscape. Numerical analysis of a particular MaxCut problem {for a moderate shot number} indicates that trainability is impaired for Virtual Distillation {and} {for some Zero-Noise Extrapolation strategies}.

{For the considered problem,} Clifford Data Regression (CDR) distinguishes itself from the other error mitigation techniques considered in this article, as in contrast to the other protocols it does not necessarily increase the statistical uncertainty of cost values more than it reverses their concentration {(Probabilistic Error Cancellation also improves the resolvability under a global depolarizing noise assumption, but the scaling quickly deteriorates under a local depolarizing noise assumption)}. This is reflected in the fact that as it only uses a linear ansatz, CDR has neutral impact on resolvability (Proposition \ref{prop:CDR}). However, it is also known that CDR can remedy the effects of more complex noise models. This {suggests} that CDR could resolve trainability issues arising due to corruptions of the cost function outside of cost concentration, whilst having a neutral effect on cost concentration itself, and thus overall improve trainability. In the numerical example studied, presented in Fig.~\ref{fig:CDR-numerics}, we observe this to be the case. This points to deeper future work studying the mechanisms that allow error mitigation to improve the trainability of noisy cost landscapes. {Such work should consider a wide range of problems and state-of-the-art hardware implementations as the bias of error mitigation methods (which may limit reversing cost function corruption) has been shown to depend on the problem choice and details of the  {methods' implementation.}} \cite{kim2021scalable, cirstoiu2022volumetric}. {We note that here we have also disregarded the burden of training data for CDR, which is an important consideration.}

Finally, we identify that the broad class of error mitigation protocols we study in our asymptotic analysis all only consist of post-processing expectation values of noisy circuits, as summarized in Fig.~\ref{fig:em_framework}. This gives intuition as to why they cannot escape the exponential scaling of noise-induced barren plateaus (NIBPs). However, the theory of error correction indicates that with sufficient resources NIBPs can indeed be avoided. This gives hope that there can exist novel error mitigation strategies that move beyond the framework of the protocols considered in this article and thereby avoid the exponential impairment to trainability that NIBPs present.

\section{Code avaialability}

{Further implementation details are available from the
authors upon request.}

\section{Acknowledgements}

This work was supported by the Quantum Science Center (QSC), a National Quantum Information Science Research Center of the U.S. Department of Energy (DOE). SW was supported by the U.S. DOE, Office of Science, Office of Advanced Scientific Computing Research, under the Quantum Computing Application Teams~program. SW was also partially supported by the Samsung GRP grant. Piotr C. and AA were supported by the Laboratory Directed Research and Development (LDRD) program of LANL under project numbers 20190659PRD4 (Piotr C.) and 20210116DR (AA and Piotr. C). MC acknowledge support from the Center for Nonlinear Studies at Los Alamos National Laboratory. MC and LC were also initially supported by the LDRD program of LANL under project number 20190065DR. PJC also acknowledges initial support from the LANL ASC Beyond Moore's Law project. This research used resources provided by the Los Alamos National Laboratory Institutional Computing Program, which is supported by the U.S. Department of Energy National Nuclear Security Administration under Contract No. 89233218CNA000001.

\noindent \textit{Note added.} Concurrently, a related work, Ref.~\cite{takagi2021fundamental} appeared in the literature. {They establish a similar result to Theorem \ref{thm:fullmitigation}, bounding the sample overhead to mitigate local depolarizing noise in an $L$-layered circuit. More broadly, the authors establish fundamental bounds on the sampling overhead of error mitigation schemes. In contrast, the rest of our results focuses on the ability of error mitigation to aid optimization of variational quantum algorithms, which depends on a trade-off between sample overheads and mitigation ability. }

\bibliography{quantum.bib}

\bibliographystyle{quantum}

\onecolumn
\appendix

\section{Road map of appendices} 
In Appendix
\ref{appendix:preliminaries} we present some notation and definitions that we need in order to prove our main results, as well as provide further details on the error mitigation protocols studied in this article. In Appendix \ref{appendix:lemmas} we derive some useful lemmas that are required for our proofs. In Appendix \ref{sec:appdx_asymptotics} we present the proof for our asymptotic results on the exponential concentration of estimators. In Appendix \ref{appendix:protocol-specific} we present our protocol-specific results on the change in resolvability of the cost landscape under error mitigation. Finally, in Appendix  \ref{appendix:numerics} we discuss details of our numerical implementations for Clifford Data Regression, Virtual Distillation, {and Zero-Noise Extrapolation}.

\section{Preliminaries}\label{appendix:preliminaries}

\subsection{Further details on error mitigation techniques}\label{sec:appdx_review_EM}

In this section we expand on our discussion in Section \ref{sec:errormitigation} and provide further details on the Zero-Noise Extrapolation and Probabilistic Error Cancellation protocols.

\subsubsection{Zero-Noise Extrapolation (ZNE)}\label{sec:appdx_review_ZNE}

For convenience in this section we recall the key points of Zero-Noise Extrapolation as summarized in Section \eqref{sec:ZNE}. We also detail the explicit forms of the estimators that can be constructed for exponential extrapolation and an extrapolation strategy tailored towards NIBP effects, which will be required in order to prove our results.

\emph{Richardson Extrapolation.} We suppose that $\widetilde{C}(\boldsymbol{\theta}_i,\varepsilon)$ admits a Taylor expansion in small noise parameter $\varepsilon$ as
\begin{equation} \label{eq:appdx_ZNE_taylor}
    \widetilde{C}(\boldsymbol{\theta}_i,\varepsilon)=\widetilde{C}(\boldsymbol{\theta}_i,0)+\sum_{k=1}^{m} p_{k}(\boldsymbol{\theta}_i) \varepsilon^{k}+\mathcal{O}(\varepsilon^{m+1})\,,
\end{equation}
where $p_k$ are unknown parameters and $\widetilde{C}(\boldsymbol{\theta}_i,0)$ is the zero-noise cost function. By considering the equivalent expansion of $\widetilde{C}(\boldsymbol{\theta}_i,a_1 \varepsilon)$ and combining the two equations we can obtain 
\begin{equation}\label{eq:appdx_ZNE_TaylorCm}
    C^{(2)}_m(\boldsymbol{\theta}_i) = \frac{a_1 \widetilde{C}(\boldsymbol{\theta}_i,\varepsilon) - \widetilde{C}(\boldsymbol{\theta}_i,a_1 \varepsilon) }{ a_1-1} = \widetilde{C}(\boldsymbol{\theta}_i,0) +\mathcal{O}(\varepsilon^2)\,,
\end{equation}
which is a higher-order approximation of $\widetilde{C}(\boldsymbol{\theta}_i,0)$ compared to simply using $\widetilde{C}(\boldsymbol{\theta}_i,\varepsilon)$. This process can be repeated iteratively $m$ times to obtain an estimator which is accurate up to $\mathcal{O}(\varepsilon^{m+1})$ error. It can be shown that the general form for the estimator that uses $k$ noise levels can be written as
\begin{equation}\label{eq:appdx_ZNE_general_estimator}
    C^{(k)}_m(\boldsymbol{\theta}_i) = \sum^k_{j=0} \beta_j \widetilde{C}(\thv_i,a_j\varepsilon)\,,
\end{equation}
where the coefficients $\beta_j$ satisfy the linear system of equations $\sum_{j=0}^k\beta_j=1$ and $\sum_{l=0}^k\beta_la_l^t=0$ for all $t\in \{1,...,k \}$ \cite{sidi2003practical}.
For 3 noise levels, \eqref{eq:appdx_ZNE_general_estimator} explicitly gives
\begin{equation}\label{eq:appdx_ZNE_TaylorCm3}
    C^{(3)}_m(\boldsymbol{\theta}_i) = \frac{a_1a_2(a_2-a_1) \widetilde{C}(\boldsymbol{\theta}_i,\varepsilon) - a_2(a_2-1)\widetilde{C}(\boldsymbol{\theta}_i, a_1\varepsilon) + a_1(a_1-1)\widetilde{C}(\boldsymbol{\theta}_i, a_2\varepsilon) }{(a_1-1)(a_2-1)(a_2-a_1)} = \widetilde{C}(\boldsymbol{\theta}_i,0) +\mathcal{O}(\varepsilon^3)\,.
\end{equation}

\emph{Exponential extrapolation.} We can also consider an exponential model
\begin{equation}\label{eq:appdx_ZNE_exp}
    \widetilde{C}(\boldsymbol{\theta}_i,\varepsilon)= r(\boldsymbol{\theta}_i,\varepsilon)^{-t(\boldsymbol{\theta}_i,\varepsilon)} \left( 
    \sum_{k=0}^{m} p_{k}(\boldsymbol{\theta}_i) \varepsilon^{k}+\mathcal{O}(\varepsilon^{m+1}) \right)\,,
\end{equation}
for some $r$ and $t$, which in general can be functions of $\varepsilon$. Following in similar steps to Richardson extrapolation we can consider the same expansion at an augmented noise level $a_1 \varepsilon$. This enables us to construct the estimator
\begin{align}\label{eq:appdx_ZNE_expCm}
    C_m(\boldsymbol{\theta}_i) = \frac{1}{a_1-1} \Big( &a_1 r(\boldsymbol{\theta}_i,\varepsilon)^{t(\boldsymbol{\theta}_i,\varepsilon)} \widetilde{C}(\boldsymbol{\theta}_i,\varepsilon) \nonumber \\
    &-r(\boldsymbol{\theta}_i,a_1\varepsilon)^{t(\boldsymbol{\theta}_i,a_1\varepsilon)} \widetilde{C}(\boldsymbol{\theta}_i,a_1 \varepsilon) \Big)\,,
\end{align}
which approximates $\widetilde{C}(\boldsymbol{\theta}_i,0)$ to a higher order in $\varepsilon$ compared to $\widetilde{C}(\boldsymbol{\theta}_i,\varepsilon)$.

\emph{NIBP extrapolation.} We can construct an alternative Zero-Noise Extrapolation strategy that is tailored towards noisy cost function values that are dominated by NIBP scaling as in Eq.~\eqref{eq:costconcentration}. We model the effects of noise as
\begin{equation}\label{eq:appdx_ZNE_NIBP}
    \widetilde{C}(\boldsymbol{\theta}_i,q)= A + q^{L} \left( B(\boldsymbol{\theta}_i) +\sum_{k=1}^{m} p_{k} (1-q)^{k}+\mathcal{O}\left((1-q)^{m+1}\right) \right)\,,
\end{equation}
for all noisy cost function points $\widetilde{C}(\boldsymbol{\theta}_i,q)$, where $A = \widetilde{C}(\boldsymbol{\theta}_i,q=0)$ is the fixed point of the noise (corresponding to the maximally mixed state) and $A + B(\boldsymbol{\theta}_i)$ is the noise-free cost value. (Note that for NIBPs we cannot consider lower-order polynomials of $q$, as else the NIBP condition would be broken for small $q$.) 
We construct estimators for any given parameter $\boldsymbol{\theta}_i$, as 
\begin{equation}\label{eq:appdx_ZNE_Cm_NIBP}
    {C}_m(\boldsymbol{\theta}_i) = \frac{ a^{L+1}q^{-L} (\widetilde{C}(\boldsymbol{\theta}_i,q/a)-\widetilde{C}(\boldsymbol{\theta}_i,q=0)) -  q^{-L}(\widetilde{C}(\boldsymbol{\theta}_i,q)-\widetilde{C}(\boldsymbol{\theta}_i,q=0))}{a-1} + K\,,
\end{equation}
where $K=A-\sum_k p_k$ is an additive constant. As we are only interested in cost function differences for our results, this will cancel out.

\subsubsection{Probabilistic Error Cancellation}\label{sec:appdx_review_QP}

\emph{General idea.} Probabilisitic Error Cancellation utilizes many modified circuit runs in order to construct a quasiprobability representation of the noise-free cost function \cite{temme2017error, endo2018practical}. We assume that the effect of the noise can be described by a quantum channel $\mathcal{N}$ that occurs after a gate that we denote with unitary channel $\mathcal{U}$. For now we assume this is the only instance of noise in the circuit, however, we will later generalize to many instances of noise. The goal of Probabilistic Error Cancellation is to simulate the inverse map $\mathcal{N}^{-1}$. Note in general this will not always correspond to a CPTP map. Despite this, if we have a basis of (noisy) quantum channels $\{\mathcal{B}_\alpha \}_{\alpha}$, corresponding to experimentally available channels, we can expand the inverse map in this basis as 
\begin{equation}
    \mathcal{N}^{-1} = \sum_{\alpha} q_{\alpha}  \mathcal{B}_{\alpha} \,,
\end{equation}
for some set of $q_{\alpha} \in \mathbb{R}$. Then, the channel that describes the noiseless gate can be written as 
\begin{align}
   \mathcal{U} &= \NC^{-1}\circ \NC\circ \UC \\
   &=\sum_{\alpha} q_{\alpha} \mathcal{K}_{\alpha} \,,\label{eq:appdx_QuasiNoiselessgate}
\end{align}
where we have defined $\mathcal{K}_{\alpha} = \mathcal{B}_{\alpha}\circ\mathcal{N}\circ\mathcal{U}$. Denote the input state to the gate as $\rho_{in}$ and a measurement operator as $O$. The expectation value can be written 
\begin{equation}\label{eq:appdx_QuasiEV}
    {C}_{\mathcal{U}(\rho)} = \Tr \left[ \mathcal{U}(\rho_{in})O\right] = \sum_{\alpha} q_{\alpha} \widetilde{C}_{\mathcal{K}_{\alpha}(\rho)} \,,
\end{equation}
where for simplicity we first assume that $\mathcal{U}$ is the only gate in the circuit, and $\widetilde{C}_{\mathcal{K}_{\alpha}(\rho)} \equiv \Tr \left[ \mathcal{K}_{\alpha}(\rho)O\right]$. Finally, we can explicitly define a probability distribution $p_{\alpha} = |q_{\alpha}|/G_\mathcal{N}$ where $G_\mathcal{N} = \sum_{\alpha} |q_{\alpha}|$. This gives us an alternative way to write \eqref{eq:appdx_QuasiNoiselessgate} and \eqref{eq:appdx_QuasiEV} as
\begin{align}
    \mathcal{U} &= G_\mathcal{N} \sum_{\alpha} \mathrm{sgn}(q_{\alpha})\, p_{\alpha}\, \mathcal{K}_{\alpha} \,,\label{eq:appdx_QuasiNoiselessgate2} \\
    {C}_{\mathcal{U}(\rho)} &= G_\mathcal{N} \sum_{\alpha} \mathrm{sgn}(q_{\alpha})\, p_{\alpha}\, \widetilde{C}_{\mathcal{K}_{\alpha}(\rho)} \,,
\end{align}
where $\mathrm{sgn}(q_{\alpha})$ denotes the sign of $q_{\alpha}$. We call this the quasiprobability representation of the gate $\mathcal{U}$. The idea is that if we have access to the set of CPTP maps $\{ \mathcal{B}_{\alpha}\}_\alpha$ in our noisy native hardware gate set, then we can obtain an estimate of the noiseless expectation value ${C}_{\mathcal{U}(\rho)}$ as follows: (1) With probability $p_\alpha$, prepare the circuit corresponding to $\mathcal{K}_{\alpha}(\rho)$ and obtain $\widetilde{C}_{\mathcal{K}_{\alpha}(\rho)}$. (2) Multiply the measurement result by $\mathrm{sgn}(q_{\alpha})G_\mathcal{N}$. (3) Repeat process many times and sum results.

\emph{Correcting many gates.} So far we have only considered a circuit with a single gate $\mathcal{U}$. We can generalize \eqref{eq:appdx_QuasiNoiselessgate2} to a general circuit $\prod_k^{N_g} \mathcal{U}_k$ with ${N_g}$ gates with the quasiprobability representation
\begin{equation}\label{eq:appdx_QuasiCircuit}
    \prod_k \mathcal{U}_k = G^{tot}_\mathcal{N} \sum_{\vec{i}} \mathrm{sgn}(q_{\vec{i}}) p_{\vec{i}} \mathcal{K}_{\vec{i}} \,,
\end{equation}
where $G^{tot}_\mathcal{N} = \prod_k G_k$,\; $\vec{i}=(i_1, ..., i_{N_g})$,\; $q_{\vec{i}} = \prod_k q_{i_k}$,\; $p_{\vec{i}} = \prod_k p_{i_k}$,\; $\mathcal{K}_{\vec{i}} = \prod_k \mathcal{K}_{i_k}$. Thus, a similar procedure can be carried out in order to mitigate the noise on each individual gate in the circuit.

\section{Useful Lemmas}\label{appendix:lemmas}

\subsection{Noise Induced Cost Concentration}

\begin{lemma}\label{lem:NIBP}
Consider a parameterized noisy cost function $\widetilde{C}(\boldsymbol{\theta})=\Tr\left[\widetilde{\rho}(\boldsymbol{\theta})O\right]$, where $\widetilde{\rho}(\boldsymbol{\theta})$ is an $n$-qubit noisy state given by Eq.~\eqref{eq:noisystate} and $\thv \in \Theta$ is drawn from some set of accessible parameters $\Theta$.  Suppose the cost is suffering from exponential cost concentration according to Ref.~\emph{\cite{wang2020noise}}, that is
\begin{equation}\label{eq:appdx_costconcentration}
    \left| \widetilde{C}(\boldsymbol{\theta}) - \frac{\Tr[O]}{2^n} \right| \leq q^L B(n)\left\|\rho_{in} - \frac{ \id}{2^n} \right\|_1 \,,
\end{equation}
for all $\thv \in \Theta$, where $\rho_{in}$ is the input state in Eq.~\eqref{eq:noisystate}, $0\leq q<1$ is some noise parameter, $B(n) \in \poly(n)$, {$L$ is the number of layers of gates}, and $\|\cdot\|_1$ denotes the Schatten 1-norm (trace norm).  Then, $\exists A(n) \in \poly(n)$ such that
\begin{equation}
    \Big|\widetilde{C}(\boldsymbol{\theta}_1,q) - \widetilde{C}(\boldsymbol{\theta}_2,q)\Big| \leq A(n) q^L\,,
\end{equation}
for any two sets of parameters $\thv_1, \thv_2 \in \Theta$.
\end{lemma}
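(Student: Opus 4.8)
The plan is to reduce the statement to a direct application of the cost-concentration hypothesis \eqref{eq:appdx_costconcentration} via the triangle inequality, anchored at the common fixed point $\Tr[O]/2^n$. First I would rewrite the cost difference by inserting and subtracting this fixed point,
\begin{equation}
    \widetilde{C}(\thv_1,q) - \widetilde{C}(\thv_2,q) = \left(\widetilde{C}(\thv_1,q) - \frac{\Tr[O]}{2^n}\right) - \left(\widetilde{C}(\thv_2,q) - \frac{\Tr[O]}{2^n}\right)\,,
\end{equation}
so that the fixed-point term cancels and the difference is expressed purely in terms of two deviations from the concentration point. The key structural observation is simply that both noisy cost values concentrate onto the \emph{same} state-independent value, so their separation is controlled by twice the concentration radius.

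Next I would apply the triangle inequality and invoke the hypothesis \eqref{eq:appdx_costconcentration} separately to each deviation, both of which are valid since $\thv_1,\thv_2 \in \Theta$. This immediately yields
\begin{equation}
    \Big|\widetilde{C}(\thv_1,q) - \widetilde{C}(\thv_2,q)\Big| \leq 2\, q^L B(n) \left\|\rho_{in} - \frac{\id}{2^n}\right\|\,.
\end{equation}

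The only remaining point is to verify that the prefactor multiplying $q^L$ is polynomially bounded in $n$, so that it can be absorbed into a single function $A(n)\in\poly(n)$. Since $B(n)\in\poly(n)$ by assumption, it suffices to argue that $\|\rho_{in}-\id/2^n\|$ is $\OC(1)$: because $\rho_{in}$ is a normalized density operator, the triangle inequality for the norm gives $\|\rho_{in}-\id/2^n\| \leq \|\rho_{in}\| + \|\id/2^n\|$, and each term is bounded by a constant independent of $n$ (at most $1$ for the trace, operator, or Hilbert--Schmidt norms). Setting $A(n) = 2B(n)\|\rho_{in}-\id/2^n\|\in\poly(n)$ then completes the argument.

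I do not anticipate any genuine obstacle here, as the result is essentially a corollary of the concentration bound \eqref{eq:appdx_costconcentration}. The one place to exercise mild care is pinning down which norm is intended in the hypothesis and confirming that it evaluates to $\OC(1)$ on a density operator; once that is settled, the claimed polynomial bound $A(n)q^L$ follows with no further work.
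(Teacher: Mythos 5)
Your proposal is correct and follows essentially the same route as the paper: insert and subtract the common fixed point $\Tr[O]/2^n$, apply the triangle inequality, invoke the concentration hypothesis at each of $\thv_1$ and $\thv_2$, and absorb the $\OC(1)$ norm factor into $A(n)$. The paper's proof is the same one-line argument (it quotes the constant as $4B(n)$ by bounding the norm term by $2$ rather than leaving it symbolic), so there is nothing to add.
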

\begin{proof}
Starting from Eq.~\eqref{eq:appdx_costconcentration} can simply write
\begin{equation}
      \Big|\widetilde{C}(\boldsymbol{\theta}_1,q) - \widetilde{C}(\boldsymbol{\theta}_2,q)\Big|\, \leq\, 4B(n) q^{L} \,,
\end{equation}
for any two sets of parameters $\boldsymbol{\theta}_1, \boldsymbol{\theta}_2$, where we have used the triangle inequality in $1D$ and the fact that the trace distance has a maximum value of 1.
\end{proof}

For the next lemma we will consider the $n$-qubit channel
    \begin{equation}\label{eq:appdx_noiseunitarylayers}
      \mathcal{W} = \mathcal{U}_k\circ \mathcal{N}\circ \cdots \circ \mathcal{N} \circ \mathcal{U}_2 \circ \mathcal{N} \circ \mathcal{U}_1\circ\mathcal{N}\,,
    \end{equation}
where $\{\UC_k\}^L_{k=1}$ denote unitary channels that describe collections of gates that act together in a layer, and $\NC=\bigotimes_{i=1}^n \NC_i$ is an instance of local Pauli channels, such that action of $\mathcal{N}_j$ on a local Pauli operator $\sigma\in\{X,Y,Z\}$  can be expressed as 
\begin{equation}\label{eq:appdx:noisemodel}
    \mathcal{N}_j(\sigma)=q^{(j)}_{\sigma}\sigma\,,
\end{equation}
where we assume $-1< q^{(j)}_X,q^{(j)}_Y,q^{(j)}_Z<1$ for all qubit labels $j$. We characterize the noise strength with a single parameter $q=\max_{j,\sigma}\big\{\big|q^{(j)}_\sigma \big|\big\}<1$.

\begin{lemma}\label{lem:tracedistance}
Consider $\mathcal{W}$ as defined in Eq.~\eqref{eq:appdx_noiseunitarylayers} acting on some input state $\rho_{in}$. Then we have    
\begin{equation}
    \left\| \mathcal{W}(\rho_{in}) - \frac{\id}{2^n} \right\|_1 \leq q^{{c}k} n^{1/2} \sqrt{2\ln 2} \,,
\end{equation}
{where $c = 1/(2\ln 2)$ is a constant}.
\end{lemma}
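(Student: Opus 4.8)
The plan is to pass from the trace distance to the quantum relative entropy $D(\rho\|\sigma)=\Tr[\rho(\ln\rho-\ln\sigma)]$ via Pinsker's inequality, and then to exploit the fact that the maximally mixed state $\id/2^n$ is a fixed point of \emph{every} channel appearing in $\mathcal{W}$. Concretely, quantum Pinsker gives $\left\|\mathcal{W}(\rho_{in})-\id/2^n\right\|_1\leq\sqrt{2\,D(\mathcal{W}(\rho_{in})\|\id/2^n)}$, so it suffices to show $D(\mathcal{W}(\rho_{in})\|\id/2^n)\leq q^{2k}\,n\ln 2$; substituting this into Pinsker reproduces the target $q^k n^{1/2}\sqrt{2\ln 2}$ exactly, which is why I am confident the relative-entropy route is the intended one (the factor $\sqrt{2\ln 2}=\sqrt{\ln 4}$ and the $n\ln 2$ cap on the relative entropy are the tell).

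To control the relative entropy I would track $D_j:=D(\tau_j\|\id/2^n)$, where $\tau_j$ is the state after $j$ of the alternating $\mathcal{N}$-then-$\mathcal{U}$ blocks, with $\tau_0=\rho_{in}$. Two easy facts drive the bound: each $\mathcal{U}_j$ is a unitary channel fixing $\id/2^n$ and relative entropy to $\id/2^n$ is unitarily invariant, so the $\mathcal{U}_j$ leave $D_j$ unchanged; and the initialization is $D_0=n\ln2-S(\rho_{in})\leq n\ln 2$. The one nontrivial ingredient is a strong data-processing inequality (SDPI) for the depolarizing layer, namely $D(\mathcal{N}(\tau)\|\id/2^n)\leq q^2\,D(\tau\|\id/2^n)$ for every $\tau$. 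Granting this, telescoping over the $k$ depolarizing layers gives $D_k\leq q^{2k}D_0\leq q^{2k}n\ln 2$, as required.

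It then remains to establish the SDPI for $\mathcal{N}=\bigotimes_{i=1}^n\mathcal{D}_i$ with constant $q^2$. I would first reduce to the uniform case: since $\mathcal{D}_i$ has survival probability $1-p_i\leq q$ and depolarizing channels compose multiplicatively in survival probabilities ($\mathcal{D}_a\circ\mathcal{D}_b=\mathcal{D}_{ab}$), one writes $\mathcal{D}_i=\mathcal{D}'_i\circ\mathcal{D}_{q}$ with $\mathcal{D}_q$ the qubit channel of survival $q$ and $\mathcal{D}'_i$ a legitimate depolarizing channel of survival $(1-p_i)/q\in[0,1]$; hence $\mathcal{N}=\mathcal{N}'\circ\mathcal{D}_{q}^{\otimes n}$. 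Because $\mathcal{N}'$ fixes $\id/2^n$, ordinary monotonicity of relative entropy strips it away, so it is enough to prove $D(\mathcal{D}_{q}^{\otimes n}(\tau)\|\id/2^n)\leq q^2\,D(\tau\|\id/2^n)$. This follows from the single-qubit relative-entropy contraction coefficient of the depolarizing channel being exactly $q^2$, together with the tensorization of the relative-entropy contraction coefficient under tensor powers with the product fixed point $\id/2^n$.

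I expect this final step to be the main obstacle. The single-qubit bound $D(\mathcal{D}_\lambda(\rho)\|\id/2)\leq\lambda^2 D(\rho\|\id/2)$ follows neither from plain data processing (which only gives factor $1$) nor from joint convexity (which only yields factor $\lambda$); it requires the sharp value $\eta_{\mathrm{KL}}(\mathcal{D}_\lambda)=\lambda^2$, which I would verify by using covariance to reduce to diagonal inputs $\rho=\mathrm{diag}(a,1-a)$ and checking that the ratio $(\ln2-H(b))/(\ln2-H(a))$, with $b=\tfrac12+\lambda(a-\tfrac12)$ and $H$ the natural-log binary entropy, is maximized in the limit $a\to\tfrac12$ where it equals $\lambda^2$. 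The genuinely delicate point is that this constant must \emph{survive} tensorization, so that the $n$-qubit layer still contracts by $q^2$ rather than degrading to $1$; here I would invoke the tensorization theorem for the relative-entropy SDPI (for the depolarizing channel the relative-entropy contraction coefficient coincides with the chi-squared contraction coefficient $q^2$, which tensorizes cleanly). Everything else — Pinsker, unitary invariance, the $n\ln2$ initialization, and assembling the product over the $k$ layers — is routine.
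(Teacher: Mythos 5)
Your proposal follows the paper's proof exactly: Pinsker's inequality, unitary invariance of the relative entropy under the $\mathcal{U}_j$, the initialization bound $D(\rho_{in}\|\id/2^n)\leq n\ln 2$, and a strong data-processing inequality contracting the relative entropy to $\id/2^n$ by a factor $q^2$ per depolarizing layer, giving $q^{2k}$ overall and hence $q^k$ after the square root. The only difference is that the paper simply cites Ref.~\cite{muller2016relative} for that SDPI --- including its tensorization to $n$ qubits, which is precisely the "genuinely delicate point" you flag and which does \emph{not} follow from an off-the-shelf quantum tensorization theorem, so you should cite it rather than re-derive it --- while your reduction of the non-uniform probabilities $p_i$ to the uniform layer $\mathcal{D}_q^{\otimes n}$ via $\mathcal{D}_a\circ\mathcal{D}_b=\mathcal{D}_{ab}$ and monotonicity is correct and is a detail the paper leaves implicit.
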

\begin{proof}
We have 
\begin{align}
    \left\| \mathcal{W}(\rho_{in}) - \frac{{\id}}{2^n} \right\|_1 &\leq \sqrt{2 \ln 2\, D\Big(\WC(\rho_{in})\Big\|\frac{\id}{2^n}\Big)} \\
    &\leq \sqrt{2 \ln 2\,q^{2{c}k} D\Big(\rho_{in}\Big\|\frac{\id}{2^n}\Big)} \\
    &\leq q^{{c}k} n^{1/2}\sqrt{2\ln 2}\,,
\end{align}
where $D(\cdot\|\cdot)$ denotes the relative entropy. The first inequality is Pinsker's \cite{ohya2004quantum}, and the second inequality comes from a direct application of Supplementary Lemma 6 in Ref.~\cite{wang2020noise} {(adapted from Corollary 5.6 of Ref.~\cite{hirche2020contraction})}.
\end{proof}

\subsection{Averages over unitary 2-designs}

\begin{lemma}\label{lem:expvalues}
Consider the cost function value $C_\sigma(U)=\Tr[U\sigma U^\dag O]$ where $U$ is a $d\times d$ unitary matrix and $\sigma\in S(\HC)$ is some quantum state. Consider expectation values over $U_i\in Y$ where $Y\subset\mathcal{U}(d)$ is a unitary 2-design and $\mathcal{U}(d)$ is the unitary group of degree $d$. Denote such expectation values as $\langle \cdot \rangle_U$.
Then, we have
\begin{align}
    &\langle C_\sigma \rangle_U = \frac{1}{d} \Tr[O]\,, \\
    &\langle C_\rho C_\sigma \rangle_U = \frac{\Tr[O^2]\left(d\Tr[\rho\sigma] -  1 \right) - \Tr[O]^2\left(\Tr[\rho\sigma]-d \right)}{d(d^2-1)} \,,
\end{align}
{for any two operators $\sigma,\rho \in B(\mathcal{H})$ which satisfy $\Tr[\rho] = \Tr[\sigma]=1$, $\mathrm{dim}(\mathcal{H})=d$}. This implies 
\begin{equation}
    \Var[C_\sigma] = \frac{\left(\Tr[O^2]- \frac{1}{d}\Tr[O]^2\right)  \left(\Tr[\sigma^2]-\frac{1}{d} \right)}{d^2-1} \,.
\end{equation}
\end{lemma}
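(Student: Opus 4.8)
The plan is to reduce every quantity to the first and second moments of the Haar measure over $\mathcal{U}(d)$, which coincide with averages over any unitary $2$-design because the integrands are polynomials of degree at most two in the entries of $U$ and their complex conjugates. First I would dispatch the mean: writing $\langle C_\sigma\rangle_U = \Tr\big[\langle U\sigma U\ad\rangle_U\,O\big]$ and invoking the $1$-design (first-moment) identity $\langle U\sigma U\ad\rangle_U = \Tr[\sigma]\,\id/d = \id/d$, where the last equality uses $\Tr[\sigma]=1$ for a normalized state, gives $\frac{1}{d}\Tr[O]$ immediately.

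The central step is the two-point function. I would lift the product of traces onto $\HC^{\ot2}$ via $C_\rho(U)C_\sigma(U) = \Tr\big[U^{\ot2}(\rho\ot\sigma)(U\ad)^{\ot2}\,(O\ot O)\big]$, so that the average reduces to computing the twirl $T(A) := \langle U^{\ot2} A (U\ad)^{\ot2}\rangle_U$. Since $T$ projects onto the commutant of $\{U\ot U\}$, which by Schur--Weyl duality is spanned by the identity $\id$ and the swap operator $S$ on $\HC^{\ot2}$, I can posit $T(A) = a\,\id + b\,S$ and pin down the coefficients by taking traces against $\id$ and $S$. Using $\Tr[\id]=d^2$, $\Tr[S]=d$, $S^2=\id$, and the invariances $\Tr[T(A)]=\Tr[A]$ and $\Tr[T(A)\,S]=\Tr[A\,S]$, this produces a $2\times2$ linear system whose solution is $a = \tfrac{d\Tr[A]-\Tr[A S]}{d(d^2-1)}$ and $b = \tfrac{d\Tr[A S]-\Tr[A]}{d(d^2-1)}$. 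Substituting $A=\rho\ot\sigma$ and applying the swap identities $\Tr[(\rho\ot\sigma)S]=\Tr[\rho\sigma]$, $\Tr[(O\ot O)S]=\Tr[O^2]$, together with $\Tr[O\ot O]=\Tr[O]^2$ and $\Tr[\rho]=\Tr[\sigma]=1$, yields the stated expression once the $\Tr[O]^2$ and $\Tr[O^2]$ contributions are collected.

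The variance then follows by specializing $\rho=\sigma$ in the two-point formula and subtracting $\langle C_\sigma\rangle_U^2 = \Tr[O]^2/d^2$; placing the two $\Tr[O]^2$ terms over the common denominator $d^2(d^2-1)$ and factoring out $\Tr[O^2]-\tfrac{1}{d}\Tr[O]^2$ gives the claimed product form $\tfrac{(\Tr[O^2]-\frac{1}{d}\Tr[O]^2)(\Tr[\sigma^2]-\frac{1}{d})}{d^2-1}$. I expect the only genuine obstacle to be the second-moment twirl, namely correctly identifying the commutant basis and bookkeeping the swap traces; the surrounding steps are routine algebra. As a consistency check one can take $\sigma$ pure, so $\Tr[\sigma^2]=1$, and confirm that the variance collapses to the familiar $2$-design expression for pure-state cost functions.
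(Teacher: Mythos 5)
Your proof is correct, and it takes a genuinely different route from the paper's. The paper works at the level of matrix elements: it quotes the standard first- and second-moment Haar integration formulas for $\int d\mu(W)\, w_{i,j}w^*_{p,k}$ and $\int d\mu(W)\, w_{i_1,j_1}w_{i_2,j_2}w^*_{i_1',j_1'}w^*_{i_2',j_2'}$ (Weingarten-type identities with Kronecker deltas) and contracts indices to get the two moments directly. You instead lift $C_\rho C_\sigma$ to a single trace on $\HC^{\otimes 2}$ and compute the twirl $T(A)=\langle U^{\otimes 2}A(U\ad)^{\otimes 2}\rangle_U$ by observing that it lands in the commutant $\mathrm{span}\{\id, S\}$ and solving the resulting $2\times 2$ system; your coefficients $a=\tfrac{d\Tr[A]-\Tr[AS]}{d(d^2-1)}$, $b=\tfrac{d\Tr[AS]-\Tr[A]}{d(d^2-1)}$ are correct, and substituting $A=\rho\otimes\sigma$ with the swap-trace identities reproduces the stated two-point function exactly, after which the variance algebra matches the paper's. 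The two approaches are mathematically equivalent (the commutant decomposition is just the coordinate-free packaging of the second-moment Weingarten formula), but yours avoids the index bookkeeping and makes the structure of the answer --- a linear combination of $\Tr[O]^2$ and $\Tr[O^2]$ weighted by $\Tr[A]$ and $\Tr[AS]$ --- transparent; the paper's is more elementary in that it only invokes a tabulated integral. One small point worth keeping explicit: the closed forms require $\Tr[\rho]=\Tr[\sigma]=1$ (despite the lemma's phrasing ``any two operators''), which you correctly invoke and which the paper also notes in its own derivation.
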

\begin{proof}
We use the following standard expressions for integration with respect to the Haar measure over the unitary group of degree $d$:
\begin{align}
    \int_{\mathcal{U}(d)} d \mu(W) w_{i, j} w_{p, k}^{*} &= \frac{\delta_{i, p} \delta_{j, k}}{d}\,, \\
    \int_{\mathcal{U}(d)} d \mu(W) w_{i_{1}, j_{1}} w_{i_{2}, j_{2}} w_{i_{1}', j_{1}'}^{*} w_{i_{2}', j_{2}'}^{*} &=\frac{1}{d^{2}-1}\left(\delta_{i_{1}, i_{1}'} \delta_{i_{2}, i_{2}'} \delta_{j_{1}, j_{1}'} \delta_{j_{2}, j_{2}'}+\delta_{i_{1}, i_{2}'} \delta_{i_{2}, i_{1}'} \delta_{j_{1}, j_{2}'} \delta_{j_{2}, j_{1}'}\right) \\
    &\quad\, -\frac{1}{d\left(d^{2}-1\right)}\left(\delta_{i_{1}, i_{1}'} \delta_{i_{2}, i_{2}'} \delta_{j_{1}, j_{2}'} \delta_{j_{2}, j_{1}'}+\delta_{i_{1}, i_{2}'} \delta_{i_{2}, i_{1}'} \delta_{j_{1}, j_{1}'} \delta_{j_{2}, j_{2}'}\right)\,, \nonumber
\end{align}
where $w_{i,j}$ are the matrix elements of the unitary $W \in \mathcal{U}(d)$. Then, the expectation values over 2-designs can be evaluated as
\begin{equation}
    \langle C_\sigma \rangle_U = \frac{1}{d}  \Tr[O]\,,
\end{equation}
and
\begin{align}
    \langle C_\rho C_\sigma \rangle_U =& \frac{1}{d^2-1}\Tr[O]^2 + \frac{1}{d^2-1}\Tr[\rho\sigma]\Tr[O^2]\\ &\ - \frac{1}{d(d^2-1)}\Tr[\rho\sigma]\Tr[O]^2 - \frac{1}{d(d^2-1)}\Tr[O^2]\,, \nonumber
\end{align}
where we have used $\Tr[\rho]=\Tr[\sigma]=1$. The final statement comes by noting that 
\begin{align}
    \Var[C_\sigma] &= \langle C_\sigma^2 \rangle_U - \langle C_\sigma \rangle_U^2 \\ 
    &= \frac{\Tr[O^2]\left(d\Tr[\sigma^2] -  1 \right) - \Tr[O]^2\left(\Tr[\sigma^2]-d \right)}{d(d^2-1)} - \frac{1}{d^2} \Tr[O]^2 \\
    &= \frac{\Tr[O^2]\left(d\Tr[\sigma^2] - 1 \right)}{d(d^2-1)} - \frac{\Tr[O]^2\left(d\Tr[\sigma^2]-1 \right)}{d^2(d^2-1)}\,,
\end{align}
which can be factorized to give the desired result.
\end{proof}

\section{Exponential estimator concentration}\label{sec:appdx_asymptotics}

\setcounter{theorem}{0}
\setcounter{proposition}{0}
\setcounter{corollary}{0}

We present a proof of Theorem \ref{thm:fullmitigation}, and restate the result here for convenience.
\begin{theorem}\label{thm:appdx_fullmitigation}
Consider an error mitigation protocol {that} prepares the quantity
\begin{align}\label{eq:appdx_thm1_estimator} 
    E_{\sigma,X,M,k} = \Tr\left[X \left(\sigma^{\otimes M}\otimes \ket{0}\bra{0}^{\otimes k}\right)\right]\,,
\end{align}
for some quantum state $\sigma \in S(\mathcal{H})$, for $\ket{0}\bra{0} \in S(\mathcal{H}')$ and for some $X\in B(\mathcal{H}^{\otimes M}\otimes \mathcal{H}'^{\otimes k})$. We suppose $\sigma$ is prepared with a depth $L_{\sigma}$ circuit and experiences noise according to Eq.~\eqref{eq:noisystate}. Under these conditions, $E_{\sigma,X,M,k}$ exponentially concentrates on a state-independent fixed point in the depth of the circuit as
\begin{align}
    \left\vert E_{\sigma,X,M,k} - \Tr\left[X \left( \frac{\id^{\otimes M}}{2^{Mn}} \otimes \ket{0}\bra{0}^{\otimes k}\right)\right]\right\vert\, &\leq\, G_{\sigma,X,M}(n) \,, \label{eq:prop1_conc_copy} 
\end{align}
where $\id \in S(\mathcal{H})$ is the $n$-qubit identity operator and 
\begin{equation}\label{eq:appdx_thm1_G}
    G_{\sigma,X,M}(n) =  \sqrt{\ln 4}\,\big\|X\big\|_\infty M n^{1/2} q^{{c}(L_{\sigma}+1)}\,,
\end{equation}
with noise parameter $q\in[0,1)$ {and constant $c = 1/(2\ln 2)$}. 
\end{theorem}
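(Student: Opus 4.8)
The plan is to reduce the left-hand side of \eqref{eq:prop1_conc_copy} to a single trace of $X$ against a tensor-product operator and then control it with a matrix H\"older inequality, exploiting the fact that the noisy state $\sigma$ prepared by the depth-$L_\sigma$ circuit is exponentially close in trace distance to the maximally mixed state. The clean ancillas are inert in this argument and will survive only inside the fixed point.

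First I would use linearity of $\Tr[X(\cdot)]$ to write the difference in \eqref{eq:prop1_conc_copy} as
\begin{equation}
\Tr\left[X\left(\left(\sigma^{\otimes M} - \frac{\id^{\otimes M}}{2^{Mn}}\right)\otimes \ket{0}\bra{0}^{\otimes k}\right)\right]\,.
\end{equation}
Applying H\"older's inequality $|\Tr[XY]| \leq \|X\|_\infty \|Y\|_1$ with $Y = (\sigma^{\otimes M} - (\id/2^n)^{\otimes M})\otimes \ket{0}\bra{0}^{\otimes k}$, and using that the trace norm is multiplicative over tensor factors together with $\|\ket{0}\bra{0}^{\otimes k}\|_1 = 1$, the estimate collapses to
\begin{equation}
\left|E_{\sigma,X,M,k} - \Tr\left[X\left(\frac{\id^{\otimes M}}{2^{Mn}}\otimes \ket{0}\bra{0}^{\otimes k}\right)\right]\right| \leq \|X\|_\infty \left\|\sigma^{\otimes M} - \frac{\id^{\otimes M}}{2^{Mn}}\right\|_1\,.
\end{equation}

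Next I would strip off the tensor power. Writing $\tau = \id/2^n$ and telescoping,
\begin{equation}
\sigma^{\otimes M} - \tau^{\otimes M} = \sum_{j=1}^{M} \sigma^{\otimes(j-1)}\otimes(\sigma - \tau)\otimes \tau^{\otimes(M-j)}\,,
\end{equation}
so the triangle inequality and submultiplicativity of $\|\cdot\|_1$ under tensor products, with $\|\sigma\|_1 = \|\tau\|_1 = 1$, give $\|\sigma^{\otimes M} - \tau^{\otimes M}\|_1 \leq M\|\sigma - \tau\|_1$. It then remains only to bound the single-copy trace distance $\|\sigma - \id/2^n\|_1$, for which I invoke Lemma \ref{lem:tracedistance}: since $\sigma$ is prepared by the depth-$L_\sigma$ noisy circuit \eqref{eq:noisystate}, which carries $L_\sigma+1$ instances of the local depolarizing channel $\NC$, the same Pinsker plus strong-data-processing chain used in that lemma yields $\|\sigma - \id/2^n\|_1 \leq q^{L_\sigma+1} n^{1/2}\sqrt{2\ln 2}$. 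Chaining the three inequalities and using $\sqrt{2\ln 2} = \sqrt{\ln 4}$ reproduces $G_{\sigma,X,M}(n)$ exactly.

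The only genuinely non-mechanical ingredients are the tensor-power telescoping estimate, which converts the $M$-copy problem into a single-copy trace distance at the cost of the linear factor $M$ appearing in $G_{\sigma,X,M}(n)$, and the bookkeeping that the model \eqref{eq:noisystate} contains $L_\sigma+1$ noise layers rather than $L_\sigma$, which is what upgrades the naive $q^{L_\sigma}$ to the stated $q^{L_\sigma+1}$. Everything else is a direct application of H\"older's inequality and Lemma \ref{lem:tracedistance}.
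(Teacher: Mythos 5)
Your proof is correct and follows essentially the same route as the paper's: matrix H\"older's inequality, a telescoping decomposition over the $M$ copies that costs a factor of $M$, and Lemma \ref{lem:tracedistance} (Pinsker plus strong data processing) applied to the $L_\sigma+1$ noise layers to get $q^{L_\sigma+1}n^{1/2}\sqrt{2\ln 2}$. The only cosmetic difference is that you telescope $\sigma^{\otimes M}-\tau^{\otimes M}$ at the operator level and apply H\"older once, whereas the paper telescopes the trace expression and applies H\"older to each of the $M$ intermediate differences.
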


\begin{proof}
Consider
\begin{align}
    \left| \Tr\Big[ \left(\sigma^{\otimes M}\otimes \ket{0}\bra{0}^{\otimes k}\right) X\Big]- \Tr \left[\left(\frac{\id}{2^n}\otimes\sigma^{\otimes {M-1}} \otimes \ket{0}\bra{0}^{\otimes k}\right)\, X \right]\right|  &= \left|\Tr \left[\left(\big(\sigma-\frac{\id}{2^n}\big)\otimes\sigma^{\otimes M-1}\otimes \ket{0}\bra{0}^{\otimes k}\right) X \right]\right|  \\
    &\leq \left\| \big(\sigma-\frac{\id}{2^n}\big)\otimes\sigma^{\otimes M-1} \otimes \ket{0}\bra{0}^{\otimes k} \right\|_1 \big\|X\big\|_\infty \\
    &=  \Big\| \sigma-\frac{\id}{2^n} \Big\|_1 \Tr[\sigma]^{M-1} \Tr[\ket{0}\bra{0}]^k \big\|X\big\|_\infty\\ \label{eq:ancillagoeshere}
    &\leq q^{{c}(L_{\sigma}+1)} n^{1/2} \sqrt{2\ln 2}\, \big\|X\big\|_\infty\,.
\end{align}

The first inequality is due to the matrix Hölder's inequality, and the second inequality follows from Lemma \ref{lem:tracedistance}. Similarly, we have $M-1$ further such equations, which we display with the original equation:

\begin{align}
    \Tr\Big[ \left(\sigma^{\otimes M}\otimes \ket{0}\bra{0}^{\otimes k}\right) X\Big] - \Tr \left[\left(\frac{\id}{2^n}\otimes\sigma^{\otimes {M-1}} \otimes \ket{0}\bra{0}^{\otimes k}\right)\, X \right]
    &\leq q^{{c}(L_{\sigma}+1)} n^{1/2} \sqrt{2\ln 2}\, \big\|X\big\|_\infty\,, \\
    \Tr \left[\left(\frac{\id}{2^n}\otimes\sigma^{\otimes {M-1}} \otimes \ket{0}\bra{0}^{\otimes k}\right)\, X \right] - \Tr \left[\left(\frac{\id}{2^n}\otimes \frac{\id}{2^n}\otimes \sigma^{\otimes{M-2}}\otimes \ket{0}\bra{0}^{\otimes k} \right)\, X \right]
    &\leq q^{{c}(L_{\sigma}+1)} n^{1/2} \sqrt{2\ln 2}\, \big\|X\big\|_\infty\,,\\
    ...\quad\quad& \nonumber\\ 
    \Tr \left[\left(\Big(\frac{\id}{2^n}\Big)^{\otimes M-1}\otimes\sigma\otimes \ket{0}\bra{0}^{\otimes k}\right)\, X \right] - \Tr \left[\left( \Big(\frac{\id}{2^n}\Big)^{\otimes M}\otimes \ket{0}\bra{0}^{\otimes k} \right)\, X \right]
    &\leq q^{{c}(L_{\sigma}+1)} n^{1/2} \sqrt{2\ln 2}\, \big\|X\big\|_\infty\,.
\end{align}
The summation of these equations gives
\begin{equation}
    \Tr\Big[ X \left(\sigma^{\otimes M}\otimes \ket{0}\bra{0}^{\otimes k}\right) \Big] - \Tr\left[X \left( \frac{{\id}^{\otimes M}}{2^{Mn}} \otimes \ket{0}\bra{0}^{\otimes k}\right)\right]
    \leq M q^{{c}(L_{\sigma}+1)} n^{1/2} \sqrt{2\ln 2}\, \big\|X\big\|_\infty\,,
\end{equation}
which gives the desired bound. 
\end{proof}

We now present a more detailed version of Corollary \ref{cor1} in the main text, which explains how one can spend an exponential number of resources in different ways in order to resolve concentrated cost values.

\begin{corollary}[Exponential estimator concentration]
Consider an error mitigation protocol that approximates the noise-free cost value $C(\thv)$ by estimating the quantity 
\begin{equation}\label{eq:appdx_cor1_estimator}
    C_m(\thv) = \sum_{(\sigma(\boldsymbol{\theta}),X,M,k)\in T} a_{X,M,k}E_{\sigma(\boldsymbol{\theta}),X,M,k}\,,
\end{equation}
where each $E_{\sigma,X,M,k}$ takes the form \eqref{eq:appdx_thm1_estimator}. We denote $M_{max}$ and $a_{max}$ as the maximum values of $M$ and $a_{X,M,k}$ respectively accessible from a set $T$ defined by the given protocol. Assuming $\|X\|_\infty \in \OC(\poly(n))$, there exists a fixed point $F$ independent of $\boldsymbol{\theta}$ such that
\begin{equation}\label{eq:appdx_cor1_estimator_conc}
    \big|C_m(\boldsymbol{\theta})-F\big|\in\OC(2^{-\beta n}a_{max}|T|M_{max})\,,
\end{equation}
for some constant $\beta\geq 1$ if the circuit depths satisfy 
\begin{equation}
    L_{\sigma(\boldsymbol{\theta})} \in \Omega(n)\,,
\end{equation}
for all $\sigma(\boldsymbol{\theta})$ in the construction \eqref{eq:appdx_cor1_estimator}. That is, if the depth of the circuits scale linearly or greater then one requires at least exponential resources to distinguish $C_m$ from its fixed point, for instance in one of the following ways:
\begin{itemize}
    \item $a_{max}|T|M_{max} \in \OC(\poly(n))$ and an exponentially large number of shots are used to distinguish two quantities with exponentially small separation
    \item $a_{max}|T| \in \Omega(2^{\beta' n})$ for some constant $\beta' \geq 1$ and an exponentially large number of shots are required to distinguish two quantities with exponentially large statistical uncertainty, as measurement outcomes are multiplied by $a_{max}|T|$.
    \item $M_{max} \in \Omega(2^{\beta'' n})$ for some constant $\beta'' \geq 1$ and an exponentially large number of copies of some quantum state $\sigma$ are required.
\end{itemize}
\end{corollary}

\begin{proof}
Explicitly applying the results of Theorem \eqref{thm:appdx_fullmitigation} to the construction of $C_m(\boldsymbol{\theta})$ in \eqref{eq:appdx_cor1_estimator} we have
\begin{align}
    \left|C_m(\boldsymbol{\theta})-\sum_{(\sigma(\boldsymbol{\theta}),X,M,k)\in T} a_{X,M,k}\,\Tr\left[X \left( \frac{\id^{\otimes M}}{2^{Mn}} \otimes \ket{0}\bra{0}^{\otimes k}\right)\right] \right| &\leq \sum_{(\sigma(\boldsymbol{\theta}),X,M,k)\in T} a_{X,M,k}\, G_{\sigma(\thv),X,M}(n)\\
    & \hspace{-5em} \in \OC\Bigg(\sum_{(\sigma(\boldsymbol{\theta}),X,M,k)\in T} a_{X,M,k} \big\|X\big\|_\infty M n^{1/2} q^{{c}(L_{\sigma(\thv)}+1)} \Bigg)\,,
\end{align}
where in the second line we have used \eqref{eq:appdx_thm1_G}. If $L_{\sigma(\boldsymbol{\theta})} \in \Omega(n)$ then $q^{{c}(L_{\sigma(\thv)}+1)}\in\OC(2^{-\beta(\thv)n})$ for some $\beta(\thv)\geq1$. Thus, we can write 
\begin{equation}
    \left|C_m(\boldsymbol{\theta})-\sum_{(\sigma(\boldsymbol{\theta}),X,M,k)\in T} a_{X,M,k}\,\Tr\left[X \left( \frac{\id^{\otimes M}}{2^{Mn}} \otimes \ket{0}\bra{0}^{\otimes k}\right)\right] \right| \in \OC(2^{-\beta n}a_{max}|T|M_{max})\,,
\end{equation}
as required, where we can denote $\beta = \min_{\thv}\beta(\thv)$ and the fixed point as $F$, noting that $F$ is indeed parameter independent. From here, we can inspect the three presented cases: 
\begin{itemize}
    \item If $a_{max}|T|M_{max} \in \OC(\poly(n))$ then $C_m$ has exponentially small separation from $F$.
    \item There exists choice $\beta' \geq 1$ such that if $a_{max}|T| \in \Omega(2^{\beta' n})$ such that $C_m$ is not exponentially concentrated on $F$, however, $C_m$ now has an exponentially large statistical uncertainty, as measurement outcomes are multiplied by coefficients of order $a_{max}|T|$.
    \item There exists choice of $\beta'' \geq 1$ such that $M_{max} \in \Omega(2^{\beta'' n})$ and $C_m$ is not exponentially concentrated on $F$.
\end{itemize}
\end{proof}

\section{Protocol-specific results}\label{appendix:protocol-specific}

\subsection{Zero-Noise Extrapolation}\label{sec:appdx_ZNE}

In this section we present our results for Zero-Noise Extrapolation. As discussed in Section \ref{sec:ZNE} of the main text, we will consider a Richardson extrapolation strategy based on Eq.~\eqref{eq:appdx_ZNE_taylor}, an exponential extrapolation strategy based on Eq.~\eqref{eq:appdx_ZNE_exp} and a NIBP extrapolation strategy based on Eq.~\eqref{eq:appdx_ZNE_Cm_NIBP}. As we deal with two types of noise parameters, throughout this section we will adopt the unifying notation
\begin{equation}\label{eq:appdx_ZNE_notation}
    \widetilde{C}(\thv,a) = \begin{cases}
    \widetilde{C}(\thv,a\varepsilon) &\text{for Richardson/exponential extrapolation} \\
    \widetilde{C}(\thv,q/a) \quad &\text{for NIBP extrapolation}\,,
    \end{cases}    
\end{equation}
for all $a\geq 1$. Thus, $\widetilde{C}(\thv,a)$ denotes the noisy cost value at the boosted noise level, and $\widetilde{C}(\thv,1)$ denotes the noisy cost value at the base noise level.

As stated in the main text, in order to estimate the sample cost of error mitigation we will make the key assumption that
\begin{equation}\label{eq:appdx_ZNE_assumption}
    \Var[\widetilde{C}(\thv,a)] \geq \Var[\widetilde{C}(\thv,1)]\,,
\end{equation}
for all $\thv$ and $a\geq 1$ that is, the statistical fluctuations in measurement outcomes at the boosted noise level are no smaller than that at the base noise level. 
Indeed, for noise models with a maximally mixed fixed point, we expect that high noise rates will tend to lead to larger variances. For example, in the simple scenario of a local Pauli measurement, the variance of measurement outcomes takes the form $\frac{(1-p_0)(p_0)}{N}$, where $p_0$ is the probability of obtaining a $"0"$ outcome and $N$ is the number of shots. This variance is maximized for $p_0=\frac{1}{2}$.

\subsubsection{Relative resolvability under global depolarizing noise}

\begin{proposition}[Relative resolvability of Zero-Noise Extrapolation with global depolarizing noise, 2 noise levels]
Consider a circuit with $L$ instances of global depolarizing noise of the form \eqref{eq:gdepol_noise}.
Consider a Richardson extrapolation strategy based on Eq.~\eqref{eq:appdx_ZNE_taylor}, an exponential extrapolation strategy based on Eq.~\eqref{eq:appdx_ZNE_exp} and a NIBP extrapolation strategy based on Eq.~\eqref{eq:appdx_ZNE_NIBP}. We presume access to an augmented noisy circuit where the error probability is exactly increased by factor $a_1>1$ as $p\rightarrow a_1p$. Then we have
\begin{align}
    &{\rchi_{depol}}\leq \frac{\left(c-\frac{(1-a_1p)^L}{(1-p)^L}\right)^2}{c^2+1}\,,
\end{align}
where $\rchi_{depol}$ is the relative resolvability (see Definition \ref{def:resolvability}) for global depolarizing noise, and where 
\begin{equation}
    c = \begin{cases}
    \quad a_1\quad &\text{for Richardson extrapolation,} \\
    \frac{a_1r(\varepsilon)^{t(\varepsilon)}}{r(a_1\varepsilon)^{t(a_1\varepsilon)}} \quad &\text{for exponential extrapolation,} \\
    \; a_1^{-(L+1)} \quad &\text{for NIBP extrapolation}\,.
    \end{cases}
\end{equation}
Thus, $\rchi_{depol}\leq 1$ for all of the above extrapolation strategies with access to 2 noise levels.
\end{proposition}

\begin{proof}
Upon inspecting Eqs.~\eqref{eq:appdx_ZNE_TaylorCm}, \eqref{eq:appdx_ZNE_expCm} and \eqref{eq:appdx_ZNE_Cm_NIBP}, one can verify that the Richardson, exponential and NIBP extrapolation strategies all take the form
\begin{equation}\label{eq:ZNE_generalCm}
    C_m(\thv) = \frac{A\cdot\widetilde{C}(\thv,1)-B\cdot\widetilde{C}(\thv,a)}{D} + E\,,
\end{equation}
where $A,B\geq 0$ (note that for NIBP extrapolation $E$ contains the state-independent cost value that represents the fixed point of the noise) and where we have adopted the notation defined in \eqref{eq:appdx_ZNE_notation}. We note that under $L$ instances of global depolarizing noise (of the form \eqref{eq:gdepol_noise}) with error probability $p$, noisy cost differences are given by 
\begin{equation}
    \Delta \widetilde{C}(a) = {(1-ap)^L} \Delta C\,,
\end{equation}
for any pair of cost function points, where $\Delta C$ is the corresponding noise-free cost difference. 

The error-mitigated cost function difference $\Delta C_m = {C}_m(\boldsymbol{\theta}_1) - {C}_m(\boldsymbol{\theta}_2)$ between two arbitrary points is given by
\begin{align}
    \Delta{C}_m &= \frac{A\cdot\Delta\widetilde{C}(1)-B\cdot\Delta\widetilde{C}(a)}{D}\label{eq:appdx_ZNE_dCm} \\ 
    &= \frac{A\cdot(1-p)^L\Delta{C}-B\cdot(1-ap)^L\Delta{C}}{D}\,.
\end{align}
{Inspecting} \eqref{eq:appdx_ZNE_dCm}, we see that the error mitigation cost can be bounded simply as
\begin{align}
    \gamma &= \frac{A^2+B^2\frac{\Var[\widetilde{C}(\thv,a)]}{\Var[\widetilde{C}(\thv,1)]}}{D^2}\\
    &\geq \frac{A^2+B^2}{D^2}\,,
\end{align}
where the inequality comes from our core assumption \eqref{eq:appdx_ZNE_assumption}.
Inserting $\gamma$, $\Delta{C}_m$ and $\Delta\widetilde{C}(1)$ into Definition \ref{def:resolvability}, we have
\begin{align}
    \rchi_{depol} = \frac{1}{\gamma}\left(\frac{\Delta{C}_m}{\Delta\widetilde{C}(1)}\right)^2 &\leq \frac{\left(A(1-p)^L -B(1-ap)^L\right)^2}{(A^2+B^2)(1-p)^{2L}}\\
    &= \frac{\left(c-\frac{(1-ap)^L}{(1-p)^L}\right)^2}{c^2+1}\,,
\end{align}
where we have denoted $c=A/B$. By inspecting the specific values of $A$ and $B$ for the Richardson, exponential and NIBP extrapolation strategies respectively, we obtain the results for each strategy.
\end{proof}

\subsubsection{Average relative resolvability}

\begin{proposition}[Average relative resolvability of Zero-Noise Extrapolation, 2 noise levels]\label{prop:appdx_ZNE2}
Consider a Richardson extrapolation strategy based on Eq.~\eqref{eq:appdx_ZNE_taylor}, an exponential extrapolation strategy based on Eq.~\eqref{eq:appdx_ZNE_exp} and a NIBP extrapolation strategy based on Eq.~\eqref{eq:appdx_ZNE_NIBP}. We presume perfect access to an augmented noisy circuit where the noise rate is increased by factor $a_1>1$. We denote $\boldsymbol{\theta}_{\varepsilon*}$ as the parameter corresponding to the global cost minimum at base noise parameter $\varepsilon$. 
Further denote $ \frac{\langle\Delta\widetilde{C}(\thv_{i,\varepsilon*}, a_1\varepsilon)\rangle_{i}}{\langle\Delta\widetilde{C}(\thv_{i,\varepsilon*},\varepsilon)\rangle_{i}} = z$. 
Any such noise model has an average relative resolvability
\begin{equation}
    \overline{\rchi} \leq  \frac{(z-c)^2}{c^2+1}\,,
\end{equation}
where 
\begin{equation}
    c = \begin{cases}
    \quad a_1\quad &\text{for Richardson extrapolation,} \\
    \frac{a_1r(\varepsilon)^{t(\varepsilon)}}{r(a_1\varepsilon)^{t(a_1\varepsilon)}} \quad &\text{for exponential extrapolation,} \\
    \; a_1^{-(L+1)} \quad &\text{for NIBP extrapolation}\,.
    \end{cases}
\end{equation}
Thus, under the assumption that $z\leq1$ and $\langle\Delta \widetilde{C}(\thv_{i,\varepsilon*},a_1\varepsilon)\rangle_i\geq0$, $\overline{\rchi}\leq 1$ for all of the above extrapolation strategies with access to 2 noise levels.
\end{proposition}

\begin{proof}
As in the previous proof, we can inspect Eqs.~\eqref{eq:appdx_ZNE_TaylorCm}, \eqref{eq:appdx_ZNE_expCm} and \eqref{eq:appdx_ZNE_Cm_NIBP}, and see that the Richardson, exponential and NIBP extrapolation strategies all take the form
\begin{equation}
    C_m(\thv) = \frac{A\cdot\widetilde{C}(\thv,1)-B\cdot\widetilde{C}(\thv,a)}{D} + E
\end{equation}
where $A,B, D\geq 0$ (note that $E$ contains the state-independent cost value that represents the fixed point of the noise) and we have adopted the notation of \eqref{eq:appdx_ZNE_notation}. The average mitigated cost differences (averaged over accessible parameters $\{\thv_i\}_i$) can be written
\begin{equation}
    \langle\Delta C_m(\thv_{i,{\varepsilon*}})\rangle_i = \frac{A\cdot\langle\Delta \widetilde{C}(\thv_{i,{\varepsilon*}},1)\rangle_i- B\cdot\langle\Delta \widetilde{C}(\thv_{i,{\varepsilon*}},a)\rangle_i}{D} \,.
\end{equation}
Thus, we have
\begin{align}
    \frac{\langle\Delta C_m(\thv_{i,{\varepsilon*}})\rangle_i}{\langle\Delta \widetilde{C}(\thv_{i,{\varepsilon*}})\rangle_i}  &=   \frac{A  - B\frac{\langle\Delta \widetilde{C}(\thv_{i,{\varepsilon*}},a)\rangle_i}{\langle\Delta \widetilde{C}(\thv_{i,{\varepsilon*}},1)\rangle_i}  }{D} \\
    &=\frac{A  - Bz }{D} \,.
\end{align}
{Finally}, by noting once again that the error mitigation cost is simply bounded as $\gamma \geq \frac{A^2+B^2}{D^2}$ due to \eqref{eq:appdx_ZNE_assumption}, we have 
\begin{equation}
    \rchi \leq \frac{(A-Bz)^2}{A^2 + B^2} \,,
\end{equation}
where we can obtain the desired form by defining $c=A/B$. Finally, the specific values of $c$ for each extrapolation strategy can be read off by inspecting Eqs.~\eqref{eq:appdx_ZNE_TaylorCm}, \eqref{eq:appdx_ZNE_expCm} and \eqref{eq:appdx_ZNE_Cm_NIBP}.
\end{proof}

We now introduce a modification of the basis-averaged relative resolvability in Definition \ref{def:av_resolvabilityII} that we will use to prove an additional result for Zero-Noise Extrapolation. {Here instead of averaging over the basis of an output state, we average over the basis of the measurement observable. This can be thought of as a more natural quantity to consider for comparing Zero-Noise Extrapolation to non-mitigated optimization as the protocol calls for processing of multiple noisy states.}

\begin{definition}[{Basis-averaged relative resolvability II}]\label{def:chi_av_3}
Consider a spectrum $\boldsymbol{\lambda}\in \mathbb{R}_{\geq0}^{2^n}$ with unit $\ell_1$-norm, which corresponds to a noisy reference state. Then define the unitarily-averaged relative resolvability as
\begin{equation}
    \widehat{\overline{\rchi}}_{\boldsymbol{\lambda}} = \frac{1}{\gamma} \frac{\big\langle\big(\widehat{C}_m(\rho,U_i,O_{\boldsymbol{\lambda}}) - \Tr[O_{\boldsymbol{\lambda}}]/2^n \big)^2\big\rangle_{U_i}}{\big\langle\big(\widetilde{C}(\rho,U_i,O_{\boldsymbol{\lambda}}) - \Tr[O_{\boldsymbol{\lambda}}]/2^n \big)^2\big\rangle_{U_i} }\,, 
\end{equation}
where $\langle\cdot\rangle_{U_i}$ denotes an average over $U_i$ drawn from a unitary 2-design, and where we denote 
\begin{align}
    \widetilde{C}(\rho_{\boldsymbol{\lambda}},U_i,O_{\boldsymbol{\lambda}}) &= \Tr[U_i\rho_{\boldsymbol{\lambda}}U^\dag_i O_{\boldsymbol{\lambda}}] \\
    \widehat{C}_m(\rho_{\boldsymbol{\lambda}},U_i,O_{\boldsymbol{\lambda}}) &= \Tr[U_i\mathcal{M}(\rho_{\boldsymbol{\lambda}})U^\dag_i O_{\boldsymbol{\lambda}}]
\end{align}
where $\mathcal{M}: S(\mathcal{H})\mapsto B(\mathcal{H})$ is the map that describes the action of the error mitigation protocol.
\end{definition}

For this averaged relative resolvability we present a result for Zero-Noise Extrpolation.

\begin{supproposition}[Basis-averaged relative resolvability II with Zero-Noise Extrapolation]
Consider a Richardson extrapolation strategy based on Eq.~\eqref{eq:appdx_ZNE_taylor}, an exponential extrapolation strategy based on Eq.~\eqref{eq:appdx_ZNE_exp} and a NIBP extrapolation strategy based on Eq.~\eqref{eq:appdx_ZNE_NIBP}. We presume perfect access to an augmented noisy circuit where the noise rate is increased by factor $a>1$. Denote the output state at the base and augmented noise levels as $\rho(1)$ and $\rho(a)$ respectively. Then we have
\begin{align}
    \widehat{\overline{\rchi}}_{\boldsymbol{\lambda}} \leq \frac{c^2+\frac{ P(a)-1/2^n}{P(1)-1/2^n}}{c^2+1}\,,
\end{align}
where $\widehat{\overline{\rchi}}_{\boldsymbol{\lambda}}$ is the averaged relative resolvability defined in Definition \ref{def:chi_av_3}, $P(1)$ is the purity of $\rho(1)$, $P(a)$ is the purity of $\rho(a)$, and 
\begin{equation}
    c = \begin{cases}
    \quad a\quad &\text{for Richardson extrapolation} \\
    \frac{ar(\varepsilon)^{t(\varepsilon)}}{r(a\varepsilon)^{t(a\varepsilon)}} \quad &\text{for exponential extrapolation} \\
    \; a^{-(L+1)} \quad &\text{for NIBP extrapolation}\,.
    \end{cases}
\end{equation}
Thus, $\widehat{\overline{\rchi}}_{\boldsymbol{\lambda}}\leq1$ when $P(a)\leq P(1)$.
\end{supproposition}

\begin{proof}
We denote reference states $\widetilde{\rho}(\varepsilon)$ and $\widetilde{\rho}(a\varepsilon)$ as states with purity $P(\varepsilon)$ and $P(a\varepsilon)$ respectively. Moreover, denote the noisy cost function values $\widetilde{C}(U_i,\varepsilon) = \Tr[U_i\widetilde{\rho}(\varepsilon)U_i^\dag O]$ and $\widetilde{C}(U_i,a\varepsilon) = \Tr[U_i\widetilde{\rho}(a\varepsilon)U_i^\dag O]$ and further denote $C_m(U_i)$ as the corresponding error mitigated estimator. We start again by noting that in all three Zero-Noise Extrapolation strategies the estimator takes the form
\begin{equation}\label{eq:ZNE_generalCmIII}
    C_m(U_i) = \frac{A\cdot\widetilde{C}(U_i,1)-B\cdot\widetilde{C}(U_i,a)}{D} + E
\end{equation}
where $A,B\geq 0$ (see Eqs.~\eqref{eq:appdx_ZNE_TaylorCm}, \eqref{eq:appdx_ZNE_expCm} and \eqref{eq:appdx_ZNE_Cm_NIBP}) and we have adopted the notation of \eqref{eq:appdx_ZNE_notation}.
We first evaluate the relevant expectation values which correspond to integrals over the Haar distribution over the unitary group of degree $2^n$. We now proceed to derive the result for Richardson/exponential extrapolation, however, we note that the proof follows in a similar way for NIBP extrapolation with the simple substitution $a\varepsilon \mapsto q/a$. Utilizing Lemma \ref{lem:expvalues}, we have 
\begin{align}
    &\langle \widetilde{C}(U_i,\varepsilon) \rangle_{U_i} = \frac{1}{2^n}\Tr[\widetilde{\rho}(\varepsilon)]\Tr[O] = \frac{1}{2^n}\Tr[O]\,, \\[.2em]
    &\big\langle (\Delta\widetilde{C}(U_i,\varepsilon))^2\big\rangle_{U_i} = \big\langle\big(\widetilde{C}(U_i,\varepsilon) - \langle \widetilde{C}(U_j,\varepsilon) \rangle_{U_j} \big)^2\big\rangle_{U_i} = \frac{\left(\Tr[O^2]- \frac{1}{2^n}\Tr[O]^2\right)  \left(\Tr[\widetilde{\rho}^2(\varepsilon)]-\frac{1}{2^n}\Tr[\widetilde{\rho}(\varepsilon)]^2 \right)}{2^{2n}-1} \\
    &\hspace{22.13em} = \frac{\Tr[O^2]- \frac{1}{2^n}\Tr[O]^2 }{2^{2n}-1} \left(P(\varepsilon)-\frac{1}{2^n} \right)\,, \label{eq:ZNE_noisyvar2} \\[1em]
    &\big\langle (\Delta\widetilde{C}(U_i,\varepsilon))(\Delta\widetilde{C}(U_i,a\varepsilon))\big\rangle_{U_i} = \frac{\left(\Tr[O^2]- \frac{1}{2^n}\Tr[O]^2\right)  \left(\Tr[\widetilde{\rho}(\varepsilon)\widetilde{\rho}(a\varepsilon)]-\frac{1}{2^n}\Tr[\widetilde{\rho}(\varepsilon)]\Tr[\widetilde{\rho}(a\varepsilon)] \right)}{2^{2n}-1}\\
    &\hspace{12.6em}\geq 0 \,, \label{eq:ZNE_noisycov2}
\end{align}
where the inequality comes by observing that $\Tr[\widetilde{\rho}(\varepsilon)]=\Tr[\widetilde{\rho}(a\varepsilon)]=1$ and further applying Cauchy-Schwarz to $\Tr[\widetilde{\rho}(\varepsilon)\widetilde{\rho}(a\varepsilon)]$ and noting that the purity of an {$n$-qubit} state is lower bounded by $1/2^n$. Inspecting Eq.~\eqref{eq:ZNE_generalCmIII}  we have
\begin{align}
    &\langle C_{m}(U_i) \rangle_{U_i} = \frac{1}{2^n}\frac{A-B}{D}\Tr[O] + E \,,\\[0.2em]
    &\big\langle\big(C_{m}(U_i) - \langle C_{m}(U_j) \rangle_{U_j} \big)^2\big\rangle_{U_i} = \left\langle \left( \frac{A\cdot\widetilde{C}(U_i,\varepsilon)-B\cdot\widetilde{C}(U_i,a\varepsilon)}{D}+E - \left( \frac{1}{2^n}\frac{A-B}{D}\Tr[O]+E\right) \right)^2 \right\rangle_{U_i} \\
    &\hspace{8.5em} =\left\langle \left( \frac{ A \cdot\Delta\widetilde{C}(U_i,\varepsilon) -B\cdot\Delta\widetilde{C}(U_i,a\varepsilon)}{D} \right)^2 \right\rangle_{U_i} \\
    &\hspace{8.5em} = \frac{A^2\big\langle \big(\Delta\widetilde{C}(U_i,\varepsilon)\big)^2\big\rangle_{U_i} + B^2\big\langle \big(\Delta\widetilde{C}(U_i,a\varepsilon)\big)^2\big\rangle_{U_i} - 2AB \big\langle \Delta\widetilde{C}(U_i,\varepsilon)\Delta\widetilde{C}(U_i,a\varepsilon)\big\rangle_{U_i}}{D^2} \\
    &\hspace{8.5em} \leq \frac{\Tr[O^2]- \frac{1}{2^n}\Tr[O]^2 }{D^2(2^{2n}-1)}\left( A^2\Big(P(\varepsilon)-\frac{1}{2^n}\Big) + B^2\Big(P(a\varepsilon)-\frac{1}{2^n} \Big) \right)\,. \label{ZNE_estvar2}
\end{align}
The inequality comes by substituting in the expressions for $\big\langle \big(\Delta\widetilde{C}(U_i,\varepsilon)\big)^2\big\rangle_{U_i}$ and $\big\langle \big(\Delta\widetilde{C}(U_i,a\varepsilon)\big)^2\big\rangle_{U_i}$ obtained in \eqref{eq:ZNE_noisyvar2}, and dropping the third term in the numerator, where we have used Eq.~\eqref{eq:ZNE_noisycov2}. 
Finally, we note that Eq.~\eqref{eq:ZNE_generalCm} gives $\gamma^{-1} = \frac{D^2}{A^2 + B^2}$. Substituting the obtained expressions for $\gamma^{-1}$, \eqref{ZNE_estvar2} and \eqref{eq:ZNE_noisyvar2} into Definition \ref{def:chi_av_3} we obtain 
\begin{equation}
    \dbloverline{\chi}_{\boldsymbol{\lambda}} \leq \frac{A^2 + B^2\frac{P(a\varepsilon)-1/2^n}{P(\varepsilon)-1/2^n}}{A^2 + B^2}\,,
\end{equation}
where we can define $c=A/B$ to obtain the desired result. Further, the explicit form of $c$ for Richardson, exponential and NIBP extrapolation can be respectively found by inspecting Eqs.~\eqref{eq:appdx_ZNE_TaylorCm}, \eqref{eq:appdx_ZNE_expCm} and \eqref{eq:appdx_ZNE_Cm_NIBP}.
\end{proof}

\subsubsection{Richardson extrapolation with 3 noise levels}\label{sec:appdx_ZNE3}

In this section we focus on Richardson extrapolation (see Appendix \ref{sec:appdx_review_ZNE} for review) and investigate the change in resolvability under an extrapolation strategy that utilizes 3 distinct noise levels. 

\begin{supproposition}[Relative resolvability of Richardson extrapolation with global depolarizing noise, 3 noise levels]
Consider $L$ instances of global depolarizing noise of the form \eqref{eq:gdepol_noise} acting through a circuit.
Consider a Richardson extrapolation strategy based on Eq.~\eqref{eq:appdx_ZNE_taylor}, an exponential extrapolation strategy based on Eq.~\eqref{eq:appdx_ZNE_exp} and a NIBP extrapolation strategy based on Eq.~\eqref{eq:appdx_ZNE_NIBP} in the appendix. We presume access to two augmented noisy circuits where the error probability is perfectly increased by factors $a_2>a_1>1$ as $p\rightarrow a_1p$ and $p\rightarrow a_2p$ respectively. Then for all three extrapolation strategies and any such choices of $a_2$ and $a_1$, we have
\begin{align}
    &{\rchi_{depol}}\leq 1\,,
\end{align}
where $\rchi_{depol}$ is the relative resolvability (see Definition \ref{def:resolvability}) for global depolarizing noise.
\end{supproposition}

\begin{proof}
We start by noting that under $L$ instances of global depolarizing noise with error probability $p$ (of the form \eqref{eq:gdepol_noise}), noisy cost differences are given by 
\begin{equation}\label{eq:appdx_gdepol2}
    \Delta \widetilde{C}(a) = {(1-ap)^L}\Delta C\,,
\end{equation}
for any noise augmentation factor $a$ and any pair of cost function points, where $\Delta C$ is the corresponding noise-free cost difference. 

The error-mitigated cost function difference $\Delta C_m (\thv_{1,2}) = {C}_m(\boldsymbol{\theta}_1) - {C}_m(\boldsymbol{\theta}_2)$ between two arbitrary points constructed under Richardson extrapolation with 3 noise levels is given by
\begin{align}
    \Delta{C}_m &= \frac{a_1a_2(a_2-a_1) \Delta\widetilde{C}(p) - a_2(a_2-1)\Delta\widetilde{C}( a_1p) + a_1(a_1-1)\Delta\widetilde{C}( a_2p) }{(a_1-1)(a_2-1)(a_2-a_1)} \\ 
    &= \frac{a_1a_2(a_2-a_1) {(1-p)^L}\Delta C - a_2(a_2-1){(1-a_1p)^L}\Delta C + a_1(a_1-1){(1-a_2p)^L}\Delta C }{(a_1-1)(a_2-1)(a_2-a_1)}\,,
\end{align}
where in order to obtain the first equality we have used \eqref{eq:appdx_ZNE_TaylorCm3}. The second equality comes by substituting in \eqref{eq:appdx_gdepol2}. Inspecting \eqref{eq:appdx_ZNE_dCm}, we see that the error mitigation cost can be bounded simply as
\begin{align}
    \gamma &= \frac{a_1^2a_2^2(a_2-a_1)^2+a^2_2(a_2-1)^2 \frac{\Var[\widetilde{C}(\thv,a_1p)]}{\Var[\widetilde{C}(\thv,p)]} + a^2_1(a_1-1)^2\frac{\Var[\widetilde{C}(\thv,a_2p)]}{\Var[\widetilde{C}(\thv,p)]}}{(a_1-1)^2(a_2-1)^2(a_2-a_1)^2}\\
    &\geq \frac{a_1^2a_2^2(a_2-a_1)^2+a^2_2(a_2-1)^2 + a^2_1(a_1-1)^2 }{(a_1-1)^2(a_2-1)^2(a_2-a_1)^2}
\end{align}
for any $\thv$, where the inequality comes from our core assumption \eqref{eq:appdx_ZNE_assumption}.
Inserting our expressions for $\gamma$, $\Delta{C}_m$ and $\Delta\widetilde{C}(1)$ into Definition \ref{def:resolvability}, we have
\begin{align}
    \rchi_{depol} = \frac{1}{\gamma}\left(\frac{\Delta{C}_m}{\Delta\widetilde{C}(1)}\right)^2 &\leq \frac{\left(a_1a_2(a_2-a_1) {(1-p)^L} - a_2(a_2-1){(1-a_2p)^L} + a_1(a_1-1){(1-a_2p)^L}\right)^2}{\left(a_1^2a_2^2(a_2-a_1)^2+a^2_2(a_2-1)^2 + a^2_1(a_1-1)^2\right)(1-p)^{2L}}\\
    &= \frac{\left(a_1a_2(a_2-a_1)  - a_2(a_2-1)\frac{(1-a_1p)^L}{(1-p)^L} + a_1(a_1-1)\frac{(1-a_2p)^L}{(1-p)^L}\right)^2}{a_1^2a_2^2(a_2-a_1)^2+a^2_2(a_2-1)^2 + a^2_1(a_1-1)^2}\,.
\end{align}
The desired result can be observed by noting that $a_2(a_2-1)>a_1(a_1-1)$ and that $\frac{(1-a_1p)^L}{(1-p)^L}>\frac{(1-a_2p)^L}{(1-p)^L}$.
\end{proof}

\begin{supproposition}[Average resolvability of Richardson extrapolation, 3 noise levels]
Consider a Richardson extrapolation strategy based on Eq.~\eqref{eq:appdx_ZNE_taylor}, an exponential extrapolation strategy based on Eq.~\eqref{eq:appdx_ZNE_exp} and a NIBP extrapolation strategy based on Eq.~\eqref{eq:appdx_ZNE_NIBP} in the appendix. We presume perfect access to two augmented noisy circuits where the noise rate is increased by factors $a_2>a_1>1$. We denote $\boldsymbol{\theta}_{\varepsilon*}$ as the parameter corresponding to the global cost minimum at base noise parameter $\varepsilon$. Further denote $ \frac{\langle\Delta\widetilde{C}(\thv_{i,{\varepsilon*}},a_1\varepsilon)\rangle_{i}}{\langle\Delta\widetilde{C}(\thv_{i,{\varepsilon*}},\varepsilon)\rangle_{i}} = z_1$ and $\frac{\langle\Delta\widetilde{C}(\thv_{i,{\varepsilon*}},a_2\varepsilon)\rangle_{i}}{\langle\Delta\widetilde{C}(\thv_{i,{\varepsilon*}},\varepsilon)\rangle_{i}} = z_2$. Any such noise model has an average relative resolvability
\begin{equation}
    \overline{\rchi} \leq  \frac{\left(a_1a_2(a_2-a_1)  - a_2(a_2-1)z_1 + a_1(a_1-1)z_2\right)^2}{a_1^2a_2^2(a_2-a_1)^2+a^2_2(a_2-1)^2 + a^2_1(a_1-1)^2}\,,
\end{equation}
where $\overline{\rchi}$ is the averaged relative resolvability (see Definition \ref{def:resolvability}).
Thus, under the assumption that $z_2\leq z_1\leq1$ (on average the cost concentrates with increasing noise level) and $\langle\Delta \widetilde{C}_{i,{\varepsilon*}}(a_1\varepsilon)\rangle_i, \langle\Delta \widetilde{C}_{i,{\varepsilon*}}(a_2\varepsilon)\rangle_i \geq0$ (boosting the noise level does not shift the cost value of the global minimum above the average cost value), then $\overline{\rchi}\leq 1$.
\end{supproposition}

\begin{proof}
The averaged error-mitigated cost function difference $\langle\Delta C_m(\thv_{i,{\varepsilon*}})\rangle_i = \langle{C}_m(\boldsymbol{\theta}_i) - {C}_m(\boldsymbol{\theta}_{\varepsilon*})\rangle_i$ between two arbitrary points constructed under Richardson extrapolation with 3 noise levels is given by
\begin{align}
    \langle\Delta C_m(\thv_{i,{\varepsilon*}})\rangle_i &= \left\langle\left(\frac{a_1a_2(a_2-a_1) \Delta\widetilde{C}(\thv_{i,{\varepsilon*}},p) - a_2(a_2-1)\Delta\widetilde{C}(\thv_{i,{\varepsilon*}}, a_1p) + a_1(a_1-1)\Delta\widetilde{C}(\thv_{i,{\varepsilon*}}, a_2p) }{(a_1-1)(a_2-1)(a_2-a_1)}\right)_{i,{\varepsilon*}}\right\rangle_i \\ 
    &= \frac{a_1a_2(a_2-a_1) \langle\Delta\widetilde{C}(\thv_{i,{\varepsilon*}},p)\rangle_i - a_2(a_2-1)\langle\Delta\widetilde{C}((\thv_{i,{\varepsilon*}},a_1p)\rangle_i + a_1(a_1-1)\langle\Delta\widetilde{C}(\thv_{i,{\varepsilon*}},a_2p)\rangle_i }{(a_1-1)(a_2-1)(a_2-a_1)}\,.
\end{align}
As in the previous proof, we can inspect \eqref{eq:appdx_ZNE_dCm} and we see that the error mitigation cost can be bounded simply as
\begin{align}
    \gamma &= \frac{a_1^2a_2^2(a_2-a_1)^2+a^2_2(a_2-1)^2 \frac{\Var[\widetilde{C}(\thv,a_1p)]}{\Var[\widetilde{C}(\thv,p)]} + a^2_1(a_1-1)^2\frac{\Var[\widetilde{C}(\thv,a_2p)]}{\Var[\widetilde{C}(\thv,p)]}}{(a_1-1)^2(a_2-1)^2(a_2-a_1)^2}\\
    &\geq \frac{a_1^2a_2^2(a_2-a_1)^2+a^2_2(a_2-1)^2 + a^2_1(a_1-1)^2 }{(a_1-1)^2(a_2-1)^2(a_2-a_1)^2}
\end{align}
for any $\thv$, where the inequality comes from our core assumption \eqref{eq:appdx_ZNE_assumption}.
Inserting our expressions for $\gamma$ and $\Delta{C}_m$ into Definition \ref{def:resolvability}, we have
\begin{align}
    \overline{\rchi} = \frac{1}{\gamma}\left(\frac{\Delta{C}_m}{\Delta\widetilde{C}(1)}\right)^2 &\leq \frac{\left(a_1a_2(a_2-a_1)  - a_2(a_2-1)\frac{\langle\Delta\widetilde{C}(\thv_{i,{\varepsilon*}},a_1\varepsilon)\rangle_{i}}{\langle\Delta\widetilde{C}(\thv_{i,{\varepsilon*}},\varepsilon)\rangle_{i}} + a_1(a_1-1)\frac{\langle\Delta\widetilde{C}(\thv_{i,{\varepsilon*}},a_2\varepsilon)\rangle_{i}}{\langle\Delta\widetilde{C}(\thv_{i,{\varepsilon*}},\varepsilon)\rangle_{i}}\right)^2}{a_1^2a_2^2(a_2-a_1)^2+a^2_2(a_2-1)^2 + a^2_1(a_1-1)^2}\,,
\end{align}
and the desired result comes by denoting $ \frac{\langle\Delta\widetilde{C}(\thv_{i,{\varepsilon*}},a_1\varepsilon)\rangle_{i}}{\langle\Delta\widetilde{C}(\thv_{i,{\varepsilon*}},\varepsilon)\rangle_{i}} = z_1$ and $\frac{\langle\Delta\widetilde{C}(\thv_{i,{\varepsilon*}},a_2\varepsilon)\rangle_{i}}{\langle\Delta\widetilde{C}(\thv_{i,{\varepsilon*}},\varepsilon)\rangle_{i}} = z_2$.
\end{proof}

As with the results of Proposition \ref{prop:appdx_ZNE2} we see that $\overline{\rchi}$ decreases with increasing cost concentration.

\subsection{Virtual Distillation}\label{sec:appdx_VD}

\subsubsection{Bounds on error mitigation cost}

We recall the two error mitigation protocols of Ref.~\cite{koczor2020exponential}, denoted "A" and "B" respectively, to prepare
\begin{equation}\label{eq:VD_A_copy}
    C_m^{(A)}(\boldsymbol{\theta}_i)=\Tr[\widetilde{\rho}_i^M O]/\Tr[\widetilde{\rho}_i^M]\,,
\end{equation} and
\begin{equation}\label{eq:VD_B_copy}
    C_m^{(B)}(\boldsymbol{\theta}_i) = \Tr[\widetilde{\rho}_i^M O]/\lambda_i^M\,,
\end{equation}
where $\lambda_i$ is the dominant eigenvalue of $\widetilde{\rho}_i \equiv \widetilde{\rho}(\boldsymbol{\theta}_i)$. The protocols considered explicitly construct these quantities as
\begin{align}
    \Tr[\widetilde{\rho}_iO] &= 2\mathrm{prob}_{1,i} -1 \,,\\
    \Tr[\widetilde{\rho}_i^MO] &= 2\mathrm{prob}_{M,i} -1 \,,\\
    \Tr[\widetilde{\rho}_i^M] &= 2\mathrm{prob}'_{M,i} -1 \,,
\end{align}
where $\mathrm{prob}_{1,i},\ \mathrm{prob}_{M,i}$ and $\mathrm{prob}'_{M,i}$ are expectation values of a Pauli-$Z$ measurement on a qubit ancillary subsystem. In order to obtain our results we will hereon make the core assumption
\begin{align}
    \Var[\mathrm{prob}_{M,i}]&\geq \Var[\mathrm{prob}_{1,i}] \quad \forall i,\, M\geq 2\,,
\end{align}
that is, 
the statistical uncertainty of the measurement outcomes of the circuit that prepares $\widetilde{\rho}^M_i$ are at best equal to that of $\widetilde{\rho}_i$. In the case of large $M$ we expect $\Var[\mathrm{prob}_{M,i}]$ to be large, as for any (non-pure) $\widetilde{\rho}$, the quantity $\Tr[\widetilde{\rho}^MO]$ is close to zero for large $M$. This corresponds to $\mathrm{prob}_{M,i}=\frac{1}{2}$, which maximizes the variance for a binomial distribution.

\begin{lemma}[Bounds on error mitigation cost of virtual distillation]\label{lem:VD_EM_cost}
Denote the error mitigation cost (see Definition \ref{def:em_cost}) corresponding to \eqref{eq:VD_A_copy} and \eqref{eq:VD_B_copy} as $\gamma^{(A)}$ and $\gamma^{(B)}$ respectively. We have
\begin{align} \label{VD_gammas}
    \gamma^{(A)}\geq\frac{1}{(\Tr[\widetilde{\rho}^M])^2}\,,\quad \gamma^{(B)}{\geq}\frac{1}{\lambda^{2M}}\,.
\end{align}
\end{lemma}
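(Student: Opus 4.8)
The plan is to reduce each error mitigation cost to a ratio of per-shot sampling variances of the measured probabilities $\mathrm{prob}_{1}$, $\mathrm{prob}_{M}$, $\mathrm{prob}'_{M}$, and then invoke the three core assumptions stated above the lemma. The common denominator for both $\gamma^{(A)}$ and $\gamma^{(B)}$ is the variance of the unmitigated noisy cost: since $\widetilde{C} = \Tr[\widetilde{\rho} O] = 2\mathrm{prob}_{1}-1$, its sampling variance is $\Var[\widetilde{C}] = 4\Var[\mathrm{prob}_{1}]$. With this fixed, the whole problem becomes a matter of propagating sampling error through the two estimators \eqref{eq:VD_A_copy} and \eqref{eq:VD_B_copy}.

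I would treat protocol B first, as it is the cleaner case. Here the only estimated quantity is the numerator $\Tr[\widetilde{\rho}^M O] = 2\mathrm{prob}_{M}-1$, while $\lambda^M$ is a classically known constant (protocol B presumes prior access to the dominant eigenvalue). Pulling the constant out of the variance gives $\Var[C_m^{(B)}] = \Var[\Tr[\widetilde{\rho}^M O]]/\lambda^{2M} = 4\Var[\mathrm{prob}_{M}]/\lambda^{2M}$, so that $\gamma^{(B)} = \Var[\mathrm{prob}_{M}]/(\lambda^{2M}\Var[\mathrm{prob}_{1}])$. Adopting the best-case value permitted by the third core assumption, namely $\Var[\mathrm{prob}_{M}] = \Var[\mathrm{prob}_{1}]$ (the ``at best equal'' case, which is also the most generous scenario for error mitigation), collapses the variance ratio to unity and yields $\gamma^{(B)} = 1/\lambda^{2M}$.

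Protocol A is harder because the estimator is a ratio $C_m^{(A)} = X/Y$ of two \emph{estimated} quantities, $X = \Tr[\widetilde{\rho}^M O]$ and $Y = \Tr[\widetilde{\rho}^M]$, with means $\Tr[\widetilde{\rho}^M O]$ and $\Tr[\widetilde{\rho}^M]$. I would apply the delta method (first-order error propagation), giving $\Var[C_m^{(A)}] \approx \Var[X]/(\Tr[\widetilde{\rho}^M])^2 + (\Tr[\widetilde{\rho}^M O])^2\Var[Y]/(\Tr[\widetilde{\rho}^M])^4 - 2\Tr[\widetilde{\rho}^M O]\,\Cov[X,Y]/(\Tr[\widetilde{\rho}^M])^3$. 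Assuming the numerator and denominator are estimated from independent measurement settings so that $\Cov[X,Y]=0$, the surviving denominator-variance term is non-negative and can be dropped to obtain the lower bound $\Var[C_m^{(A)}] \geq \Var[X]/(\Tr[\widetilde{\rho}^M])^2 = 4\Var[\mathrm{prob}_{M}]/(\Tr[\widetilde{\rho}^M])^2$. Dividing by $\Var[\widetilde{C}] = 4\Var[\mathrm{prob}_{1}]$ and again using $\Var[\mathrm{prob}_{M}] \geq \Var[\mathrm{prob}_{1}]$ to send the variance ratio to $\geq 1$ then delivers $\gamma^{(A)} \geq 1/(\Tr[\widetilde{\rho}^M])^2$.

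The main obstacle is the protocol-A lower bound: the variance of a ratio is genuinely nonlinear, so the delta method is only a leading-order expansion and the direction of the inequality must be argued rather than assumed. The crucial point is justifying that incorporating the sampling error of the denominator $Y$ can only \emph{inflate} $\Var[C_m^{(A)}]$ relative to the idealized case where $Y$ equals its mean exactly. This is precisely where the independence of the numerator and denominator measurements is needed, since a nonzero $\Cov[X,Y]$ could in principle reduce the variance and spoil the clean bound; once independence (or at least non-negativity of the residual terms) is in hand, the remaining steps are the routine substitutions above.
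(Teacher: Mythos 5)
Your treatment of protocol B and your overall reduction to the ratio of variances of $\mathrm{prob}_1$ and $\mathrm{prob}_M$ match the paper, but your protocol-A argument has a genuine gap that you yourself flag and then do not close. The delta method gives only a leading-order \emph{approximation} to $\Var[X/Y]$; the neglected higher-order terms have no controlled sign, so even after assuming $\Cov[X,Y]=0$ and dropping the non-negative $\Var[Y]$ term you cannot conclude $\Var[C_m^{(A)}]\geq \Var[X]/(\Tr[\widetilde{\rho}^M])^2$ as an actual inequality. Saying that ``once independence (or at least non-negativity of the residual terms) is in hand, the remaining steps are routine'' is circular: the non-negativity of the residual terms is precisely the claim that needs proof, and independence alone does not supply it through the delta-method expansion.

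The paper avoids this by never expanding the ratio at all. It writes $C_m^{(A)} = \bigl(2\mathrm{prob}_M-1\bigr)\cdot\bigl(2\mathrm{prob}'_M-1\bigr)^{-1}$ as a product of two \emph{independent} random variables and uses the exact identity $\Var[UV]=\Var[U]\left(\mathbb{E}[V]\right)^2+\left(\mathbb{E}[U]\right)^2\Var[V]+\Var[U]\Var[V]$, all three terms of which are manifestly non-negative, so dropping the last two yields a rigorous lower bound $\Var[C_m^{(A)}]\geq 4\Var[\mathrm{prob}_M]\left(\mathbb{E}\bigl[(2\mathrm{prob}'_M-1)^{-1}\bigr]\right)^2$. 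The remaining step is Jensen's inequality for the convex map $y\mapsto 1/y$, which gives $\left(\mathbb{E}\bigl[(2\mathrm{prob}'_M-1)^{-1}\bigr]\right)^2\geq 1/\left(\mathbb{E}[2\mathrm{prob}'_M-1]\right)^2 = 1/(\Tr[\widetilde{\rho}^M])^2$; note your delta-method target replaces $\left(\mathbb{E}[1/Y]\right)^2$ by the smaller quantity $1/\mu_Y^2$ from the outset, which is exactly the substitution Jensen is needed to justify. If you replace your error-propagation step with the exact product-variance identity plus Jensen, the rest of your argument (including the use of $\Var[\mathrm{prob}_M]\geq\Var[\mathrm{prob}_1]$) goes through as written.
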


\begin{proof}
For $\gamma^{(A)}$ and $\gamma^{(B)}$ we need to compute the variances of the estimators of $C_m^{(A)}$, $C_m^{(B)}$ respectively and likewise $\widetilde{C}=\Tr[\widetilde{\rho} O]$. We have
\begin{align}
    \Var[\widetilde{C}] =  \Var[\Tr[\widetilde{\rho} O]] &= \Var[2\textrm{prob}_1 -1]\,,\\
    &=4\Var[\textrm{prob}_1]\,,
\end{align}
\begin{align}
    \Var[C_m^{(B)}] = \Var\Big[\frac{\Tr[\widetilde{\rho}^M O]}{\lambda^M}\Big] &= \frac{1}{\lambda^{2M}}\Var[2\textrm{prob}_M -1]\,,\\
    &=\frac{4}{\lambda^{2M}}\Var[\textrm{prob}_M]\,,
\end{align}
\begin{align}
    \Var[C_m^{(A)}] = \Var\Big[\frac{\Tr[\widetilde{\rho}^M O]}{\Tr[\widetilde{\rho}^M]}\Big] &= 4\Var[\textrm{prob}_{M}]\, \left(\mathbb{E}\left[\frac{1}{2\textrm{prob}'_{M}-1}\right]\right)^2+ 4\Tr[\widetilde{\rho}^MO]^{2}\,\Var\left[\frac{1}{2\textrm{prob}'_{M}-1}\right] \label{eq:VD_varEA} \\  
        &\quad\ + 4\Var[\textrm{prob}_{M}]\,\Var\left[\frac{1}{2\textrm{prob}'_{M}-1}\right]  \nonumber\\
    &\geq 4\Var[\textrm{prob}_{M}]\, \left(\mathbb{E}\left[\frac{1}{2\textrm{prob}'_{M}-1}\right]\right)^2 \\
    &\geq 4\Var[\textrm{prob}_{M}]\, \frac{1}{\left(\mathbb{E}\left[2\textrm{prob}'_{M}-1\right]\right)^2} \\
    &= 4\Var[\textrm{prob}_{M}]\, \frac{1}{\left(\Tr[\widetilde{\rho}^M]\right)^2}\,.
\end{align}
Equation \eqref{eq:VD_varEA} comes from the standard formula for the variance of the product of two independent random variables. To obtain the first inequality we simply drop the second and third terms, which are positive. The second inequality is an application of Jensen's inequality. Recalling the definition of error mitigation cost (Definition \ref{def:em_cost}), the above three equations enable us to write
\begin{align}
    \gamma^{(A)}\geq\frac{1}{(\Tr[\widetilde{\rho}^M])^2}\,,\quad \gamma^{(B)}{\geq}\frac{1}{\lambda^{2M}}\,,
\end{align}
where we have used our core assumption that $\Var[\textrm{prob}_1]\leq\Var[\textrm{prob}_M]$.
\end{proof}

\subsubsection{Relative resolvability for global depolarizing noise}\label{sec:appdx_VD_depol}

Here we present a proof of Proposition \ref{prop:VD_depol}, in which we upper bound the relative resolvability for Virtual Distillation, for any two cost function points under global depolarizing noise. 

\begin{proposition}[Relative resolvability of Virtual Distillation with global depolarizing noise]
Consider $l$ instances of global depolarizing noise $\mathcal{D}$ of the form 
\begin{equation}
    \rho \xrightarrow{\mathcal{D}} \widetilde{\rho} = q^l \rho + (1-q^l) \frac{\id}{2^n}
\end{equation}
acting on some pure state $\rho$ with some noise parameter $q\in [0,1)$. We consider the two error mitigation protocols of Ref.~\emph{\cite{koczor2020exponential}} (denoted "A" and "B") to respectively prepare \eqref{eq:VD_A_copy} and \eqref{eq:VD_B_copy}. The relative resolvability of any pair of arbitrary cost function points satisfies
\begin{align}
    \rchi^{(A)} \leq \rchi^{(B)} = \Gamma(n,M,q^l)
\end{align}
for all $n\geq1$, $M\geq2$, $q^l \in [0,1]$, and where 
\begin{equation}
    \Gamma(n,M,q^l)\leq 1 \,,
\end{equation}
is a monotonically decreasing function in $M$ (with asymptotically exponential decay) in the quadrant $n\geq 1$, $M\geq 2$. The bound is saturated as $\Gamma(1,2,p)= 1$ for all $p$.
\end{proposition}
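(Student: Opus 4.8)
The plan is to exploit the defining feature of global depolarizing noise: every point on the landscape yields a noisy state with the \emph{same} spectrum, so the two-point problem collapses to a single scalar function of that spectrum. First I would diagonalize $\widetilde{\rho}$. Since $\rho$ is pure, $\widetilde{\rho} = q^l\rho + (1-q^l)\id/2^n$ has a nondegenerate top eigenvalue $\lambda = q^l + (1-q^l)/2^n$ (on the support of $\rho$) and a $(2^n-1)$-fold eigenvalue $\mu = (1-q^l)/2^n$. Two facts follow immediately and will be used throughout: $\lambda - \mu = q^l$, and $\lambda + (2^n-1)\mu = 1$, the latter giving $\lambda + \mu \leq 1$ for all $n\geq 1$. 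Writing $\widetilde{\rho}^M = (\lambda^M-\mu^M)\rho + \mu^M\id$, I would then read off $\Tr[\widetilde{\rho}^M O] = (\lambda^M-\mu^M)C + \mu^M\Tr[O]$ and $\Tr[\widetilde{\rho}^M]=\lambda^M+(2^n-1)\mu^M$, where $C=\Tr[\rho O]$ is the noise-free cost.

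Second, because $\lambda$, $\mu$, and $\Tr[\widetilde{\rho}^M]$ are identical at $\boldsymbol{\theta}_1$ and $\boldsymbol{\theta}_2$, all cost differences factor through the noise-free difference $\Delta C$: one finds $\Delta\widetilde{C}=q^l\,\Delta C$, $\Delta C_m^{(A)} = \tfrac{\lambda^M-\mu^M}{\Tr[\widetilde{\rho}^M]}\,\Delta C$, and $\Delta C_m^{(B)}=\tfrac{\lambda^M-\mu^M}{\lambda^M}\,\Delta C$. Inserting these into Definition~\ref{def:resolvability} together with the error-mitigation-cost estimates $\gamma^{(A)}\geq 1/(\Tr[\widetilde{\rho}^M])^2$ and $\gamma^{(B)}=1/\lambda^{2M}$ from Lemma~\ref{lem:VD_EM_cost}, the factors of $\Tr[\widetilde{\rho}^M]$ (respectively $\lambda^M$) cancel and $\Delta C$ drops out entirely. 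This gives the exact value $\rchi^{(B)} = (\lambda^M-\mu^M)^2/q^{2l} =: \Gamma(n,M,q^l)$ and the inequality $\rchi^{(A)}\leq\Gamma$, establishing the chain $\rchi^{(A)}\le\rchi^{(B)}=\Gamma$.

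The analytic core, which I expect to be the main obstacle, is showing $\Gamma\leq 1$ together with monotone decay in $M$. I would rewrite $\sqrt{\Gamma}=(\lambda^M-\mu^M)/(\lambda-\mu)=\sum_{k=0}^{M-1}\lambda^{M-1-k}\mu^k=:S_M$ and dominate it by the corresponding binomial sum, $S_M\leq\sum_{k=0}^{M-1}\binom{M-1}{k}\lambda^{M-1-k}\mu^k=(\lambda+\mu)^{M-1}\leq 1$, the last step using $\lambda+\mu\leq 1$. Monotonicity follows from the recursion $S_{M+1}=\lambda S_M+\mu^M$: since $S_M\geq\lambda^{M-1}$ (the $k=0$ term), one gets $S_{M+1}/S_M\leq\lambda+\mu\leq 1$, so $S_M$ is nonincreasing; the dominant term $\lambda^{M-1}$ with $\lambda<1$ whenever $q^l<1$ supplies the asymptotically exponential decay.

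Finally, the saturation is pinned down by evaluating at $n=1$, $M=2$, where $2^n-1=1$ forces $\lambda+\mu=\lambda+(2^n-1)\mu=1$, hence $S_2=\lambda+\mu=1$ and $\Gamma(1,2,\cdot)=1$ for every noise level. The one delicate bookkeeping point is the fully depolarized case $\lambda=\mu$ (i.e.\ $q^l=0$), where $S_M$ must be read as the limit $M\lambda^{M-1}$; both the binomial bound and the recursion pass to this limit without change, so no separate argument is needed.
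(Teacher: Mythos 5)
Your proposal is correct, and its first half — diagonalizing $\widetilde{\rho}$ into the top eigenvalue $\lambda=q^l+(1-q^l)/2^n$ and the degenerate eigenvalue $\mu=(1-q^l)/2^n$, expanding $\widetilde{\rho}^M$, observing that $\Tr[\widetilde{\rho}^M]$ and $\lambda$ are state-independent so that every cost difference is proportional to $\Delta C$, and then cancelling against the variance bounds $\gamma^{(A)}\geq 1/(\Tr[\widetilde{\rho}^M])^2$ and $\gamma^{(B)}=1/\lambda^{2M}$ — is exactly the paper's route to $\rchi^{(A)}\leq\rchi^{(B)}=\Gamma=(\lambda^M-\mu^M)^2/q^{2l}$. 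Where you genuinely diverge is in the analytic core. The paper treats $M$ as a continuous variable and differentiates: it writes $\Gamma=(\Gamma^{1/2})^2$, bounds $\partial\Gamma^{1/2}/\partial M$ using $\ln(q^l+\delta)-\ln\delta\leq q^l/\delta$ and $\ln(q^l+\delta)\leq q^l+\delta-1$, and after several further estimates concludes $\partial\Gamma/\partial M\leq0$ on the quadrant, establishing $\Gamma\leq1$ by combining this with an explicit check at $M=2$. You instead factor $\sqrt{\Gamma}=(\lambda^M-\mu^M)/(\lambda-\mu)=S_M=\sum_{k=0}^{M-1}\lambda^{M-1-k}\mu^k$, dominate it by the binomial sum to get $S_M\leq(\lambda+\mu)^{M-1}\leq1$ (using the trace constraint $\lambda+(2^n-1)\mu=1$), and obtain monotonicity from the recursion $S_{M+1}=\lambda S_M+\mu^M\leq(\lambda+\mu)S_M$. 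This is purely algebraic, avoids the logarithmic estimates entirely, delivers $\Gamma\leq1$ and the decay in one stroke, and handles the integer-valued nature of $M$ directly rather than via a continuous extension; the paper's calculus approach, in exchange, gives sign information on the derivative for all real $M\geq2$, but at the cost of a considerably longer chain of inequalities. Your treatment of the saturation at $(n,M)=(1,2)$ and of the degenerate limit $\lambda=\mu$ is also sound.
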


\begin{proof}
In this proof we consider arbitrary cost function differences, that is, given two arbitrarily chosen points in parameter space $\thv_1$ and $\thv_2$, we consider
\begin{equation}
    \Delta C = C(\thv_1) - C(\thv_2)\,,
\end{equation}
and the respective differences for the noisy cost $\widetilde{C}(\thv)$ and the mitigated costs $C^{(A)}_m(\thv)$, $C^{(B)}_m(\thv)$. In order to evaluate $\rchi_A$ and $\rchi_B$ we need to first evaluate the following quantities: 
\begin{equation}
    \Delta \widetilde{C}\,,\ \Delta C_m^{(A)} \,,\ \Delta C_m^{(B)} \,,\  \gamma^{(A),(B)}
\end{equation}
that is, the noisy cost function difference between two points, the difference between the virtual distillation estimators for the same points for both protocols, and the error mitigation rate for both protocols. The noisy cost function difference under global depolarizing noise is simply related the noiseless difference as
\begin{equation}
    \Delta \widetilde{C} = q^{l} \Delta C\,. \label{eq:VD_deltaC}
\end{equation}
To evaluate the other quantities we note that 

\begin{align}
    \widetilde{\rho} &= \Big[q^l + \frac{1}{2^n}(1-q^l)\Big] \rho + \Big[\frac{1}{2^n}(1-q^l) \Big] (\id - \rho)\,, \\
    \widetilde{\rho}^M &= \left[ \Big[q^l + \frac{1}{2^n}(1-q^l)\Big]^M - \Big[\frac{1}{2^{n}}(1-q^l)\Big]^M \right]\rho + \frac{2^n}{2^{nM}}(1-q^l)^M \frac{\id}{2^n}\,, \\
    \Tr[\widetilde{\rho}^M] &= \Big[q^l + \frac{1}{2^n}(1-q^l)\Big]^M + \frac{2^n-1}{2^{nM}}(1-q^l)^M \,,\quad \label{eq:stateindependent} \\
    \Tr[\widetilde{\rho}^M O] &= \left[ \Big[q^l + \frac{1}{2^n}(1-q^l)\Big]^M - \Big[\frac{1}{2^{n}}(1-q^l)\Big]^M \right]\Tr[\rho O] + \frac{1}{2^{nM}}(1-q^l)^M \Tr[O]\,.
\end{align}
In particular, we highlight that the expression for $\Tr[\widetilde{\rho}^M]$ {is independent of the noise-free output state $\rho$}. As the dominant noisy eigenvalue $\lambda$ and $\Tr[\widetilde{\rho}^M]$ are state independent we have 
\begin{align}
    \Delta C_m^{(A)} &= \frac{1}{\Tr[\widetilde{\rho}^M]} \left[ \Big[q^l + \frac{1}{2^n}(1-q^l)\Big]^M - \Big[\frac{1}{2^{n}}(1-q^l)\Big]^M \right] \Delta C\,, \label{eq:VD_deltaEA} \\
    \Delta C_m^{(B)} &= \frac{1}{\lambda^M} \left[ \Big[q^l + \frac{1}{2^n}(1-q^l)\Big]^M - \Big[\frac{1}{2^{n}}(1-q^l)\Big]^M \right] \Delta C \,, \label{eq:VD_deltaEB}
\end{align}
where the choice of $\widetilde{\rho}$ is arbitrary. 

Now, using Definition \ref{def:resolvability} and combining \eqref{eq:VD_deltaEA}, \eqref{eq:VD_deltaEB}, \eqref{eq:VD_deltaC} along with the result of Lemma \ref{lem:VD_EM_cost}, we have
\begin{equation}
    \chi^{(A)} \leq \chi^{(B)} = \Gamma(n,M,q^l) \,, \label{eq:VD_SM}
\end{equation}
where we define the function
\begin{equation}\label{eq:appdx_Gamma}
    \Gamma(n,M,q^l) = \frac{1}{q^{2l}} \left[ \Big[q^l + \frac{1}{2^n}(1-q^l)\Big]^M - \Big[\frac{1}{2^{n}}(1-q^l)\Big]^M \right]^2\,.
\end{equation}
First, note that for $M=2$
\begin{align}
    \Gamma(q^l,n,2) &= \frac{1}{q^{2l}} \left[ \Big[q^l + \frac{1}{2^n}(1-q^l)\Big]^2 - \Big[\frac{1}{2^{n}}(1-q^l)\Big]^2 \right]^2 \\
    &= \left( q^l + \frac{2}{2^n} (1-q^l) \right)^2\,.
\end{align}
For $n=1$, $\Gamma(n,2,q^l)=1$. For all $n>1$, $\Gamma(n,2,q^l)<1$ as $(1-q^l)>0$. Thus,
\begin{equation}\label{eq:gamma(M=2)}
    \Gamma(n,2,q^l)\leq1\quad \forall\ n\geq1 \,.
\end{equation}
We complete the proof by showing that $\Gamma(n,M,q^l)$ monotonically decreases with $M$ for all $n\geq1$, $M\geq2$. This can be seen by inspecting the partial derivative (making the decomposition $\Gamma = (\Gamma^{1/2})^2$ due to the square in \eqref{eq:appdx_Gamma})
\begin{equation}
    \frac{\partial \Gamma}{\partial M} = 2 \Gamma^{1/2} \frac{\partial \Gamma^{1/2}}{\partial M}\,.
\end{equation}
We investigate when this quantity is negative. As $\Gamma^{1/2}$ is always positive, negativity is determined by the sign of $\frac{\partial \Gamma^{1/2}}{\partial M}$. Denoting $\delta = \frac{1}{2^{n}}(1-q^l)$, we have 
\begin{align}
    \frac{\partial \Gamma^{1/2}}{\partial M} &= \frac{1}{q^l} \left[ (q^l + \delta)^M \ln (q^l + \delta) - \delta^M \ln \delta \right] \\
    &= \frac{1}{q^l} \left[  \delta^M\left(\ln (q^l+\delta) - \ln \delta \right) + \ln(q^l + \delta)\left( (q^l + \delta)^M - \delta^M \right) \right] \\
    &\leq \frac{1}{q^l}\left[ q^l\delta^{M-1} + (q^l + \delta -1)\left( (q^l+\delta)^M - \delta^M \right) \right] \\
    &=\frac{1}{q^l}\left[ q^l\delta^{M-1} - (2^n-1)\delta\left( (q^l+\delta)^M - \delta^M \right) \right] \\
    &\leq \frac{1}{q^l}\left[ q^l\delta^{M-1} - (2^n-1)\delta\left( Mq^l\delta^{M-1} + \frac{1}{2}M(M-1)q^{2l}\delta^{M-2} \right) \right] \\
    &= \frac{1}{q^l}\left[ q^l\delta^{M-1} - (2^n-1)\delta\left( \frac{1}{2^n}M(1-q^l)q^l\delta^{M-2} + \frac{1}{2}M(M-1)q^{2l}\delta^{M-2} \right) \right] \\
    &=  \delta^{M-1}\left[ 1 - \left(1-\frac{1}{2^n}\right)M  - \frac{1}{2}(2^n-1)M\left(M-1-\frac{2}{2^n}\right)q^{l} \right] \\ 
    &\leq \delta^{M-1}\left[ 1 - \frac{1}{2}M  - \frac{1}{2}M\left(M-2\right)q^{l} \right]\quad \forall\ n\geq1 \,,
\end{align}
where in order to obtain the first inequality we use the inequalities $\ln(q^l+\delta) - \ln \delta \leq q^l/\delta$ and $\ln(q^l+\delta) \leq q^l+\delta-1$. The second inequality comes from observing that the expansion of $\left( (q^l+\delta)^M - \delta^M \right)$ is a sum of positive terms, and considering only two such terms. The above implies that 
\begin{equation}
    \frac{\partial \Gamma}{\partial M} \leq 0 \quad \forall\ n\geq1, M\geq2\,,
\end{equation}
that is, $\Gamma$ is monotonically decreasing with $M$ in the quadrant $n\geq1$ $M\geq2$. Combined with \eqref{eq:gamma(M=2)}, we have the proof as required.
\end{proof}

\subsubsection{Average relative resolvability}\label{sec:appdx_VD_av}

Here we present a proof of Proposition \ref{prop:VD_av}, in which we upper bound the 2-design-averaged relative resolvability for Virtual Distillation.

\begin{proposition}[Average relative resolvability of Virtual Distillation]
Consider an error mitigation protocol that prepares estimator $C_m(\boldsymbol{\theta}_i) = \Tr[\widetilde{\rho}_i^M O]/\Tr[\widetilde{\rho}_i^M]$ from some noisy parameterized quantum state $\widetilde{\rho}_i \equiv \widetilde{\rho}(\boldsymbol{\theta}_i)$. Consider the average relative resolvability for noisy states of some spectrum $\boldsymbol{\lambda}$ with purity $P_{\boldsymbol{\lambda}}$ as defined in Definition \ref{def:av_resolvability}. We have
\begin{equation}\label{eq:appdx_vd_avbound}
    \dbloverline{\rchi}_{\boldsymbol{\lambda}}\leq G(n,M,P)\,\leq\,1\,,
\end{equation}
where $G(n,M,P)$ is a monotonically decreasing function in $M$ (with asymptotically exponential decay) for all $n\geq 1$, $M\geq 2$.
Within this region the bound is saturated as $G(1,2,P)=1$ for all $P$ and $G(n,M,1)=1$ for all $n\geq 1, M\geq 2$. Explicitly, we have for $n=1$
\begin{equation}
    {G(n=1,M,P)}= \frac{1}{2^{2M}} \frac{\left[(1+\sqrt{2P-1})^M-(1-\sqrt{2P-1})^M\right]^2}{2P-1}\,.
\end{equation}
For $n\geq2$ and $M=2$ 
\begin{align}
    G(n\geq2,M=2,P) = \min\left(
    \frac{4}{2^{2n}} + \frac{4}{2^{n/2}}g_2\sqrt{P-\frac{1}{2^n}} + 2^n g^2_2\left(P-\frac{1}{2}\right),\ \frac{P^2}{P-\frac{1}{2^n}}\left(1-\frac{1}{2^n}\right)\right)\,,
\end{align}
where we denote $g_k=\left(\frac{2^n-1}{2^n}\right)^k+\left( \frac{1}{2^n} \right)^k$. Further, for $n\geq2$ and $M\geq3$ we have 
\begin{align}
    G(n\geq2,M\geq3,P) = \min\left(
    \frac{2^{n}}{4} \frac{\left[ \left(\sqrt{2\left(P-\frac{1}{2^n}\right)}+\frac{1}{2^n}\right)^M - \left(\frac{1}{2^n}\right)^M \right]^2}{P-\frac{1}{2^n}},\ \frac{P^M}{P-\frac{1}{2^n}}\left(1-\frac{1}{2^n}\right) \right)\,.
\end{align}
\end{proposition}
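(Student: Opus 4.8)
The plan is to reduce the averaged quantity $\dbloverline{\rchi}_{\boldsymbol{\lambda}}$ to a ratio of purities and then optimise over all spectra of fixed purity. The key observation is that the distillation map $\mathcal{M}(\rho)=\rho^M/\Tr[\rho^M]$ has a \emph{state-independent} normalisation once the spectrum is fixed: since $\Tr[(U_i\rho_{\boldsymbol{\lambda}}U_i^\dagger)^M]=\Tr[\rho_{\boldsymbol{\lambda}}^M]=:\mu_M$ is unitarily invariant, the mitigated cost is $C_m(\rho_{\boldsymbol{\lambda}},U_i)=\Tr[U_i\,\rho_{\boldsymbol{\lambda}}^M\,U_i^\dagger O]/\mu_M=\Tr[U_i\rho_m U_i^\dagger O]$, where $\rho_m:=\rho_{\boldsymbol{\lambda}}^M/\mu_M$ is a genuine density operator. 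Thus both numerator and denominator in Definition~\ref{def:av_resolvabilityII} are variances of cost functions of the form handled by Lemma~\ref{lem:expvalues}. Applying that lemma, each mean equals $\Tr[O]/2^n$ (so the subtracted fixed point is exactly the $2$-design mean) and the $O$-dependent prefactor $\Tr[O^2]-2^{-n}\Tr[O]^2$ cancels in the ratio, leaving the purity ratio $(\Tr[\rho_m^2]-2^{-n})/(P-2^{-n})$ with $\Tr[\rho_m^2]=\mu_{2M}/\mu_M^2$. Combining with the bound $1/\gamma\le\mu_M^2$ from Lemma~\ref{lem:VD_EM_cost} gives
\begin{equation}
\dbloverline{\rchi}_{\boldsymbol{\lambda}}=\frac{1}{\gamma}\,\frac{\mu_{2M}/\mu_M^2-2^{-n}}{P-2^{-n}}\le\frac{\mu_{2M}-\mu_M^2/2^n}{P-2^{-n}},
\end{equation}
where $\mu_k=\sum_j\lambda_j^k$ and $P=\mu_2$.

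The remaining problem is a moment-optimisation of the right-hand side over spectra with $\mu_1=1$, $\mu_2=P$, $\lambda_j\ge0$; I would produce two independent upper bounds and take their minimum. For the first ($\mathrm{arg}_1$) I would not relax but keep $\dbloverline{\rchi}_{\boldsymbol{\lambda}}=\gamma^{-1}(\Tr[\rho_m^2]-2^{-n})/(P-2^{-n})$ and use two elementary facts: $\Tr[\rho_m^2]\le1$ because $\rho_m$ is a state, and $1/\gamma\le\mu_M^2\le P^M$ because $\|\boldsymbol{\lambda}\|_M\le\|\boldsymbol{\lambda}\|_2$ for $M\ge2$. This immediately yields $\dbloverline{\rchi}_{\boldsymbol{\lambda}}\le P^M(1-2^{-n})/(P-2^{-n})$. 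For the second ($\mathrm{arg}_2$) I would rewrite the numerator through the identity $2^n\mu_{2M}-\mu_M^2=\tfrac12\sum_{j,k}(\lambda_j^M-\lambda_k^M)^2$ and control it via the largest eigenvalue, using $\lambda_{\max}\le 2^{-n}+\sqrt{2(P-2^{-n})}$ (which follows from $\sum_j(\lambda_j-2^{-n})^2=P-2^{-n}$) together with the extremal two-valued structure of the maximiser; for $M=2$ the numerator is a low-degree polynomial in the moments that I would expand directly, producing the three-term expression in terms of $g_k=((2^n-1)/2^n)^k+(1/2^n)^k$.

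For $n=1$ no optimisation is needed: the two eigenvalues are forced to be $(1\pm\sqrt{2P-1})/2$ by $\mu_1=1$, $\mu_2=P$, so the bound is exact. Substituting these and using $A^2+B^2-\tfrac12(A+B)^2=\tfrac12(A-B)^2$ with $A=(1+\sqrt{2P-1})^M$, $B=(1-\sqrt{2P-1})^M$ gives the stated closed form, and in particular $G(1,2,P)=1$. The purity-one case is immediate, since $\rho_{\boldsymbol{\lambda}}$ pure forces $\rho_{\boldsymbol{\lambda}}^M=\rho_{\boldsymbol{\lambda}}$, $\mu_M=\mu_{2M}=1$ and $\gamma=1$, whence $G(n,M,1)=1$. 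Finally I would establish monotonic (asymptotically exponential) decay in $M$ by differentiating $G$ in $M$ — essentially the calculation already used in the proof of Proposition~\ref{prop:VD_depol}, where the sign of $\partial_M G^{1/2}$ is controlled using $\ln(x+\delta)-\ln\delta\le x/\delta$ and by dropping positive terms in the binomial expansion of $(x+\delta)^M-\delta^M$.

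The hard part will be $\mathrm{arg}_2$: proving that a clean closed form in $\sqrt{2(P-2^{-n})}$ is genuinely an upper bound for the constrained maximum of $\mu_{2M}-\mu_M^2/2^n$ over all admissible spectra, i.e.\ identifying the extremal (two-level) configurations and showing the relaxation stays valid, while keeping the bound tight enough that $G\le1$ holds throughout $n\ge1$, $M\ge2$. Verifying $G\le1$ and the monotonicity for the $M=2$, $n\ge2$ branch, with its less structured three-term form, is the most delicate book-keeping.
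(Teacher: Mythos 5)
Your reduction is correct and matches the paper's: writing $C_m(\rho_{\boldsymbol{\lambda}},U_i)=\Tr[U_i\rho_m U_i^\dagger O]$ with $\rho_m=\rho_{\boldsymbol{\lambda}}^M/\mu_M$, applying Lemma~\ref{lem:expvalues} so that the $O$-dependent prefactor cancels, and invoking $\gamma\geq 1/\mu_M^2$ yields exactly the paper's intermediate bound $\dbloverline{\rchi}_{\boldsymbol{\lambda}}\leq(\mu_{2M}-\mu_M^2/2^n)/(P-2^{-n})$. Your derivation of the first branch of the min via $\Tr[\rho_m^2]\leq1$ and $\mu_M^2\leq P^M$ (monotonicity of $\ell_p$ norms) is in fact cleaner than the paper's cross-term manipulation and gives the same expression; the $n=1$ closed form and the $P=1$ saturation are also handled correctly.

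The gap is the second branch of the min, which is where essentially all of the work lies, and which you yourself flag as unproven. The route you sketch --- the identity $2^n\mu_{2M}-\mu_M^2=\tfrac12\sum_{j,k}(\lambda_j^M-\lambda_k^M)^2$ plus control by $\lambda_{\max}$ and an asserted two-valued extremal structure --- does not visibly produce the stated expressions: for instance, the natural estimate $|\lambda_j^M-\lambda_k^M|\leq M\lambda_{\max}^{M-1}|\lambda_j-\lambda_k|$ gives the bound $M^2\lambda_{\max}^{2(M-1)}$, which is a different form and exceeds $1$ at high purity, and the claim that the constrained maximiser is two-valued is asserted rather than proved. The paper's actual mechanism is different: it treats $X$ as the uniform random variable on the spectrum, expands $X^M=\mu^M+\sum_{k=1}^M Y_k$ with $Y_k=\binom{M}{k}\mu^{M-k}(X-\mu)^k$, bounds $\Var[X^M]\leq\big(\sum_k\sqrt{\Var[Y_k]}\big)^2$ by Cauchy--Schwarz on the covariances, controls each $\Var[(X-\mu)^k]$ with Popoviciu's inequality, and then maximises $|\hat\lambda_{\max}|^k+|\hat\lambda_{\min}|^k$ subject to the trace and spread constraints; that last majorization step is precisely where the factors $g_k$ and the quantity $\sqrt{2(P-2^{-n})}$ enter. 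Retaining the $g_2$ factors is also what makes the separate $M=2$ expression necessary and sufficient, since the paper shows the generic $M\geq3$ form can exceed $1$ near $P=1/(2^n-1)$ when $M=2$. Without this machinery (or a fully worked-out substitute), the claims $G\leq1$ and the monotone, asymptotically exponential decay in $M$ for the $n\geq2$ branches remain unestablished.
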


\begin{proof}
From Definition \ref{def:av_resolvabilityII} we have 
\begin{equation}\label{eq:avshotmitigationIIcopy}
    \dbloverline{\rchi}_{\boldsymbol{\lambda}} = \frac{1}{\gamma(\boldsymbol{\lambda})} \frac{\big\langle\big(C_m(\rho_{\boldsymbol{\lambda}},U_i) - \Tr[O]/2^n \big)^2\big\rangle_{U_i}}{\big\langle\big(\widetilde{C}(\rho_{\boldsymbol{\lambda}},U_i) - \Tr[O]/2^n \big)^2\big\rangle_{U_i} }\,.
\end{equation}
Let us first evaluate the required averages over unitary 2-designs. The relevant first moments for virtual distillation are given by
\begin{align}
    \langle \Tr[U\rho_{\boldsymbol{\lambda}} U^\dag O] \rangle_U &= \Tr[O]/2^n\,, \\
    \langle \Tr[U\rho_{\boldsymbol{\lambda}}^M U^\dag O] \rangle_U &= \Tr[\rho_{\boldsymbol{\lambda}}^M]\Tr[O]/2^n\,,
\end{align}
where we have used Lemma \ref{lem:expvalues}. Thus we can see that the numerator and denominator of \eqref{eq:avshotmitigationIIcopy} correspond to variances which we now evaluate. Again, utilizing Lemma \ref{lem:expvalues}, the second moments are given by
\begin{align}
    \big\langle\big(\widetilde{C}(U_i) - \langle \widetilde{C}(U_j) \rangle_{U_j} \big)^2\big\rangle_{U_i} &= \langle (\Tr[U\rho_{\boldsymbol{\lambda}} U^\dag O])^2 \rangle_U - (\Tr[O]/2^n)^2 \\
    &= \frac{\left(\Tr[O^2]- \frac{1}{2^n}\Tr[O]^2\right)  \left(\Tr[\rho_{\boldsymbol{\lambda}}^2]-\frac{1}{2^n}\Tr[\rho_{\boldsymbol{\lambda}}]^2 \right)}{2^{2n}-1} \\[1em]
    \big\langle\big(C_m(U_i) - \langle C_m(U_j) \rangle_{U_j} \big)^2\big\rangle_{U_i} &= \bigg\langle \left(\frac{\Tr[U\rho_{\boldsymbol{\lambda}}^M U^\dag O]}{\Tr[\rho_{\boldsymbol{\lambda}}^M]}\right)^2 \bigg\rangle_U - (\Tr[O]/2^n)^2 \\
    &=\frac{\left(\Tr[O^2]- \frac{1}{2^n}\Tr[O]^2\right)  \left(\frac{\Tr[\rho_{\boldsymbol{\lambda}}^{2M}]}{\Tr[\rho_{\boldsymbol{\lambda}}^M]^2}-\frac{1}{2^n} \right)}{2^{2n}-1}\,,
\end{align}
where in the final equality we have used the fact that $\Tr\left[\frac{\rho_{\boldsymbol{\lambda}}^{M}}{\Tr[\rho_{\boldsymbol{\lambda}}^M]}\right]=1 ${.}
Using the definition of the basis-averaged relative resolvability (Definition \ref{def:av_resolvabilityII}), we can arrive at a bound written explicitly in terms of $\rho_{\boldsymbol{\lambda}}$ as
\begin{equation} \label{eq:appdx_VD_avbound1}
    \dbloverline{\chi}_{\boldsymbol{\lambda}} = \frac{1}{\gamma} \frac{\big\langle\big(C_m(U_i) - \langle C_m(U_j) \rangle_{U_j} \big)^2\big\rangle_{U_i}}{\big\langle\big(\widetilde{C}(U_i) - \langle \widetilde{C}(U_j) \rangle_{U_j} \big)^2\big\rangle_{U_i}} \leq  \frac{\Tr[\rho_{\boldsymbol{\lambda}}^{2M}]-\frac{1}{2^n}\Tr[\rho_{\boldsymbol{\lambda}}^M]^2}{\Tr[\rho_{\boldsymbol{\lambda}}^2]-\frac{1}{2^n}\Tr[\rho_{\boldsymbol{\lambda}}]^2}\,,
\end{equation}
where we have used the fact that the error mitigation cost $\gamma \geq 1/(\Tr[\rho_{\boldsymbol{\lambda}}^M])^2$.

The goal is to now investigate whether or not $f(M) = \Tr[\rho_{\boldsymbol{\lambda}}^{2M}]-\frac{1}{2^n}\Tr[\rho_{\boldsymbol{\lambda}}^M]^2$ is monotonically decreasing for $M\in \mathbb{N}_+$. This quantity has two interpretations. First, it can be seen to be a Hilbert Schmidt distance between $\rho_{\boldsymbol{\lambda}}^M$ and $\Tr[\rho_{\boldsymbol{\lambda}}^M]\frac{\id}{2^n}$. Second, by considering the eigenvalue decomposition of $\rho$, it can be seen to be proportional to the population variance of the distribution $\{\lambda^M_i \}$, where $\lambda_i$ are the eigenvalues of $\rho_{\boldsymbol{\lambda}}$, that is,
\begin{equation}
    f(M) = 2^n \Var[\boldsymbol{\lambda}^{(M)}] = \sum_i \lambda_i^{2M} - \frac{1}{2^n}\left(\sum_i \lambda_i^M\right)^2\,,
\end{equation}
where here $\Var[(.)]$ denotes the population variance of the contained vector. Thus, we can rewrite Eq.~\eqref{eq:appdx_VD_avbound1} as
\begin{equation}\label{eq:VD_chi_av2}
    \dbloverline{\chi}_{\boldsymbol{\lambda}}(M)\leq \frac{f(M)}{f(1)} = \frac{\Var[\boldsymbol{\lambda}^{(M)}]}{\Var[\boldsymbol{\lambda}^{(1)}]}\,.
\end{equation}

 Let us first treat the qubit setting of $n=1$. Consider eigenvalue decomposition $\rho_{\boldsymbol{\lambda}} = \lambda\ket{\psi}\bra{\psi} + (1-\lambda)\ket{\psi_\perp}\bra{\psi_\perp}$, where we have defined $\lambda_1 = 1-\lambda$, $\lambda_2 = \lambda$ and without loss of generality we fix $1-\lambda \geq \lambda$. We define $G(1,M,P)=f(M)/f(1)$ and will determine $f(M)$ exactly for single-qubit states. For generic $M$ we have 
\begin{align}
    f(M) &= \lambda^{2M} + (1-\lambda)^{2M} - \frac{1}{2}\left(\lambda^M + (1-\lambda)^M \right)^2\\
    &= \frac{1}{2}((1-\lambda)^M-\lambda^{M})^2 \\
    &= \frac{1}{2^{2M+1}} \left[(1+\sqrt{2P-1})^M-(1-\sqrt{2P-1})^M\right]^2\,,
\end{align}
where in the final equality we have used the fact that for single-qubit states $\lambda = \frac{1}{2}(1-\sqrt{2P-1})$. Further, using $f(1)=P-\frac{1}{2}$ we have the bound as required.

Now let us consider the setting of $n\geq2$. We will construct two bounds, for the respective high purity and low purity limits. We start with the bound for high purity states. 
We can write the right hand side of Eq.~\eqref{eq:VD_chi_av2} explicitly as
\begin{align}
    \frac{\Var[\boldsymbol{\lambda}^{(M)}]}{\Var[\boldsymbol{\lambda}^{(1)}]} &= \frac{\frac{1}{2^n}\sum_i\lambda_i^{2M}-(\frac{1}{2^n}\sum_i\lambda_i^M)^2}{\frac{1}{2^n}\sum_i\lambda_i^2-\frac{1}{2^{2n}}} \\
    &= \frac{\frac{1}{2^n}\sum_i\lambda_i^{2M}-\frac{1}{2^{2n}}\sum_{i}\lambda_i^{2M}-\frac{1}{2^{2n}}\sum_{i\neq j}\lambda_i^M\lambda_j^M}{\frac{1}{2^n}\sum_i\lambda_i^2-\frac{1}{2^{2n}}} \\
    &\leq \frac{(2^n-1)(\sum_i\lambda_i^{2M})}{2^n\sum_i\lambda_i^{2}-1} \\
    &\leq \frac{(2^n-1)(\sum_i\lambda_i^{2})^M}{2^n\sum_i\lambda_i^{2}-1} \\
    &= \frac{P^M}{P-\frac{1}{2^n}}\left(1-\frac{1}{2^n}\right) \,,
\end{align}
where in order to obtain the first inequality we have dropped the cross terms $\frac{1}{2^{2n}}\sum_{i\neq j}\lambda_i^M\lambda_j^M$, and in the second inequality we have introduced new cross terms. The final equality comes by substituting in the definition of the purity $P$. We note this first bound is upper-bounded by $1$ for all $P\geq\frac{1}{2^n-1}$. Thus, we seek a tighter bound for $P\leq\frac{1}{2^n-1}$. 

We can now construct our second bound for strongly mixed states {(those states with purity close to $1/2^n$)}. We will consider bounds on $\Var[X^M]$ where a random variable $X$ when it is known that it takes values close to its mean $\mu$. We consider the decomposition 
\begin{align}
    X^M &= \left((X-\mu)-\mu\right)^M \\
    &= \mu^M + \sum^M_{k=1}Y_k  
\end{align}
where we have defined the random variables $Y_k=\binom{M}{k} \mu^{M-k}(X-\mu)^k$. Further, we can write
\begin{align}
    \Var[X^M] &= \Var\bigg[\sum^M_{k=1}Y_k \bigg] \\
    &= \mathbb{E}\left[ \bigg(\sum_k Y_k - \mathbb{E}\Big[ \sum_k Y_k \Big] \bigg) \bigg(\sum_j Y_j - \mathbb{E}\Big[ \sum_j Y_j \Big] \bigg) \right] \\
    &= \sum_{k,j}\mathbb{E}\left[ \Big( Y_k - \mathbb{E}[  Y_k ] \Big) \Big( Y_j - \mathbb{E}[ Y_j ] \Big) \right] \\
    &=\sum_{k,j} \Cov \big[ Y_k, Y_j \big] \\
    &\leq \sum_{k,j} \sqrt{\Var\big[Y_k\big]\Var\big[Y_j\big]}\,,\label{eq:VD_highlvl-varbound}
\end{align}
where the inequality is due to Cauchy-Schwarz. We now take $X$ to be the random variable which takes values $\{\lambda_i\}_i$ with uniform probability and mean $\mu=\frac{1}{2^n}$. We will bound $\Var\big[Y_k\big]$ under the assumption that $\{\lambda_i\}_i$ are close in value to the maximally mixed value $\frac{1}{2^n}$. 

First, we note that each $Y_k$ is a function of $(X-\mu)^k$, and so we must investigate the shifted spectrum which we denote $\hat{\boldsymbol{\lambda}}$ where $\hat{\lambda}_i = \lambda_i - \frac{1}{2^n}$ for all $i$. Using Popoviciu's inequality, we have the bound 
\begin{equation}\label{eq:VD_Popoviciu}
    \Var\left[(X-\mu)^k\right] \leq \frac{1}{4}\left( \hat{\lambda}^k_{max}-\hat{\lambda}^k_{min} \right)^2\,.
\end{equation}
Now suppose that we have the constraint
\begin{equation}\label{eq:VD_lam-constraint}
    \lambda_{max}-\lambda_{min}=2b
\end{equation}
for some $b\geq0$. For any $k$, we have 
\begin{equation}\label{eq:VD_lambda-hat}
    \hat{\lambda}^k_{max}-\hat{\lambda}^k_{min} \leq \big|\hat{\lambda}_{max}\big|^k+\big|\hat{\lambda}_{min}\big|^k\,.
\end{equation}
Let us now bound the quantity on the right by considering its maximum value over all spectra with constraint \eqref{eq:VD_lam-constraint}. The quantity on the right is maximized by the choice of vector $\big(\big|\hat{\lambda}_{max}\big|,\big|\hat{\lambda}_{min}\big|\big)$ that majorizes all others, given some fixed value of $\big|\hat{\lambda}_{max}\big|+\big|\hat{\lambda}_{min}\big|$. Indeed, $\big|\hat{\lambda}_{max}\big|+\big|\hat{\lambda}_{min}\big|=b$ is fixed by our constraint \eqref{eq:VD_lam-constraint} ($\hat{\lambda}_{min}$ must be negative in order to preserve trace). Thus the quantity on the right hand side of \eqref{eq:VD_lambda-hat} can be bounded by maximizing $\hat{\lambda}_{max}$ and minimizing $\big|\hat{\lambda}_{min}\big|$. This is achieved by setting all other $\hat{\lambda}_i$ equal to $\hat{\lambda}_{min}$. We then have pair of constraints 
\begin{align}
    &\hat{\lambda}_{max} + (2^n-1)\hat{\lambda}_{min} = 0 \,,\\
    &\hat{\lambda}_{max} - \hat{\lambda}_{min} = 2b\,,
\end{align}
where the first constraint comes from preservation of trace, and the second is our original constraint. This is a linear system of equations with solution
\begin{equation}
    \hat{\lambda}^*_{max} = 2b\frac{2^n-1}{2^n}\,,\quad \hat{\lambda}^*_{min} = -2b\frac{1}{2^n}
\end{equation}
substituting these values into \eqref{eq:VD_lambda-hat} we have the bound 
\begin{align}
    \hat{\lambda}^k_{max}-\hat{\lambda}^k_{min} &\leq (2b)^k\left( \left( \frac{2^n-1}{2^n}\right)^k + \left( \frac{1}{2^n}\right)^k \right)\label{eq:VD_itm-bound1}\\
    &\leq (2b)^k\label{eq:VD_itm-bound2}\,.
\end{align}
We will find it necessary to use the tighter bound \eqref{eq:VD_itm-bound1} in the case of $M=2$, but the looser bound \eqref{eq:VD_itm-bound2} will enable us to write a bound with a more compact form for $M\geq3$.

We now relate $b$ to the purity. We can write a general spectrum that satisfies the constraint in \eqref{eq:VD_lam-constraint} as $\boldsymbol{\lambda}_{b,c,\boldsymbol{a}}= (\frac{1}{2^n}+b+c,\,\frac{1}{2^n}-b+c,\,\frac{1}{2^n}-a_1,\,...,\,\frac{1}{2^n}-a_{d-2})$, for some $c$ and set $\{a_i\}_i$ that satisfy $\sum^{d-2}_i a_i = 2c$ (in order to preserve trace). The purity that corresponds to this spectrum is given by
\begin{align}
    P(\boldsymbol{\lambda}_{b,c,\boldsymbol{a}}) &= \left(\frac{1}{2^n}+b+c \right)^2 + \left(\frac{1}{2^n}-b+c \right)^2 + \sum_{i=1}^{d-2}\left(\frac{1}{2^n}-a_i \right)^2 \\
    &= \frac{1}{2^n} + 2b^2 + c^2 + \sum_i a^2_i + \frac{2}{2^n}\bigg[ 2c - \sum^{d-2}_{i=1}a_i \bigg]\\
    &\geq \frac{1}{2^n} + 2b^2\,.
\end{align}
Moreover, this purity bound is achievable by the spectrum $\boldsymbol{\lambda}_{b,0,\boldsymbol{0}}= (\frac{1}{2^n}+b,\frac{1}{2^n}-b,\frac{1}{2^n},...,\frac{1}{2^n})$ if we have $b \leq \frac{1}{2^n}$. We conclude that for any spectrum $\boldsymbol{\lambda}_b$ that satisfies the constraint \eqref{eq:VD_lam-constraint}, we have 
\begin{equation}\label{eq:VD_b-bound}
    b\leq \sqrt{\frac{1}{2}\left(P(\boldsymbol{\lambda}_b)-\frac{1}{2^n}\right)}\,.
\end{equation}
And we now have all the tools to bound $\Var\big[Y_k\big]$ for all $k$ and subsequently $\Var[X^M]$

By combining the bounds \eqref{eq:VD_Popoviciu} and \eqref{eq:VD_itm-bound1} we have
\begin{equation}\label{eq:VD_varXk-bound}
    \Var\left[(X-\mu)^k\right] \leq \frac{1}{4} (2b)^{2k} g_k^2
\end{equation}
where we have denoted $g_k= \left( \frac{2^n-1}{2^n}\right)^k + \left( \frac{1}{2^n}\right)^k \leq 1$. This allows us to bound $\Var[Y_k]$ by writing
\begin{align}\label{eq:VD_var-gk}
    \sqrt{\Var[Y_k]} &= \binom{M}{k} \mu^{M-k} \sqrt{\Var\left[(X-\mu)^k\right]}\\
    &\leq \frac{1}{2}\binom{M}{k} \mu^{M-k}(2b)^{k} g_k\,.
\end{align}
We first pursue a bound for general $M\in\mathbb{N}$ and replace each $g_k$ with $1$. We observe that the quantities $\{\binom{M}{k} \mu^{M-k}(2b)^{k}\}^M_{k=1}$ are simply the terms in the expansion of $(2b-\mu)^M - \mu^M$, that is,
\begin{equation}
    \sum_k \sqrt{\Var[Y_k]} \leq \frac{1}{2}\left((2b-\mu)^M - \mu^M \right)\,.
\end{equation}
Returning to \eqref{eq:VD_highlvl-varbound}, we have
\begin{align}
    \Var[X^M]&\leq\frac{1}{4}\left((2b-\mu)^M - \mu^M \right)^2\\
    &\leq \frac{1}{4}\left[\left(2\sqrt{\frac{1}{2}\left(P-\frac{1}{2^n}\right)}-\mu\right)^M - \mu^M \right]^2
\end{align}
where in order to obtain the second inequality we have used \eqref{eq:VD_b-bound} to substitute $b$ with its bound in terms of the purity. We further note that $\Var[X] = \frac{1}{2^n}(P-\frac{1}{2^n})$, and dividing the two quantities we obtain
\begin{equation}
     \dbloverline{\chi}\leq\frac{2^{n}}{4} \frac{\left[ \left(\sqrt{2\left(P-\frac{1}{2^n}\right)}+\frac{1}{2^n}\right)^M - \left(\frac{1}{2^n}\right)^M \right]^2}{P-\frac{1}{2^n}}
\end{equation}
as required. To summarize, combining the two bounds for high purity and low purity, so far we have 
\begin{align}\label{eq:VD_prelim-bound}
    G'(n\geq2,M\geq2,P) = \min\left( 
    \frac{2^{n}}{4} \frac{\left[ \left(\sqrt{2\left(P-\frac{1}{2^n}\right)}+\frac{1}{2^n}\right)^M - \left(\frac{1}{2^n}\right)^M \right]^2}{P-\frac{1}{2^n}},\ \frac{P^M}{P-\frac{1}{2^n}}\left(1-\frac{1}{2^n}\right) \right)\,.
\end{align}

Now we discuss the magnitude of our bound obtained thus far, as well as its monotonicity with respect to $M$. In particular, we will show that its value can exceed 1 for $M=2$, and so we will pursue a tighter bound for $M=2$. We can evaluate $G'(n\geq2,M\geq2,P)$ explicitly at $P=\frac{1}{2^n-1}$ as
\begin{align}\label{eq:VD_prelim-bound2}
    G'\Big(n\geq2,M\geq2,P=\frac{1}{2^n-1}\Big) = \min\left(
    \frac{2^{2n}(2^n-1)}{4} \left[ \left(\sqrt{\frac{2}{2^n(2^n-1)}}+\frac{1}{2^n}\right)^M - \left(\frac{1}{2^n}\right)^M \right]^2,\ \frac{(2^n-1)^2}{(2^n-1)^M} \right)\,.
\end{align}
Firstly, by inspection this is a decreasing function in $n$ for all $M\geq 2$, so in order to bound its magnitude we can consider $n=2$. At $M=2$ we have 
\begin{align}
    G'\Big(2,2,\frac{1}{2^2-1}\Big) = \min\left(1,\ 
    \frac{5+2\sqrt{6}}{6} \right)= 1\,,
\end{align}
where we note $\frac{5+2\sqrt{6}}{6}\geq 1$. As the first function in the minimization of \eqref{eq:VD_prelim-bound} has negative gradient for $P<\frac{1}{2^n-1}$ for $n\geq2, M=2$, this implies that {there} exists a set of values $P=\frac{1}{2^n-1}-\delta$, where $\delta>0$ is small, such that the first function has value exceeding 1. The second function also has value exceeding $1$ in such a region as it is continuous. Thus, there exist values of $P$ for which the bound $G'>1$ at $M=2$. Moving on to $M=3$, we can numerically verify that $ G'\Big(2,3,\frac{1}{2^n-1}\Big)\leq1$ with both functions in the minimization having value below $1$. As functions of the form $f(x) = a^x - b^x$ where $b\leq a \leq 1$ only have one stationary point which is a maximum, this implies that $G'\Big(2,M,\frac{1}{2^n-1}\Big)$ is decreasing for all $M\geq 3$ and thus $G'\Big(2,M\geq3,\frac{1}{2^n-1}\Big)\leq 1$. 

We will replace $G'\Big(n\geq2,2,P\Big)$ with a tighter bound that is less than 1 for all $n\geq2$. We return to \eqref{eq:VD_var-gk} and now explicitly consider the $g_k$ terms. Substituting this into \eqref{eq:VD_highlvl-varbound} for $M=2$ we have 
\begin{align}
    \Var[X^2] &\leq \Var[Y_1] + \Var[Y_2] + 2\sqrt{\Var[Y_1]\Var[Y_2]} \\
    &= (2\mu)^2\Var[X-\mu] + \Var\left[\left(X-\mu \right)^2\right] + {4\mu}\sqrt{\Var\left[\left(X-\mu \right)^2\right]\Var[X-\mu]}\\
    &\leq (2\mu)^2\Var[X] + \frac{1}{4} (2b)^{4} g_2^2 + {4\mu}\sqrt{\frac{1}{4} (2b)^{4} g_2^2 \Var[X]} \\
    &\leq \frac{4}{2^{2n}}\Var[X] + \left(P-\frac{1}{2^n} \right)^2g_2^2 + \frac{4}{2^n}g_2\left({P-\frac{1}{2^n}}\right)  \sqrt{\Var[X]}
\end{align}
where in the first equality we use the definition of $Y_k$ for $M=2$, in the first inequality we use \eqref{eq:VD_varXk-bound} along with the fact that $g_1 =1$, and in the final inequality we use \eqref{eq:VD_b-bound}. Finally, dividing by $\Var[X]$ we have 
\begin{align}
    \dbloverline{\rchi}(M=2) &\leq \frac{4}{2^{2n}} + \frac{1}{\Var[X]}\left(P-\frac{1}{2^n} \right)^2g_2^2 + \frac{4}{2^n}g_2\sqrt{P-\frac{1}{2^n}}  \sqrt{\frac{1}{\Var[X]}} \\
    &= \frac{4}{2^{2n}} + 2^n g^2_2\left(P-\frac{1}{2^n}\right) + \frac{4}{2^{n/2}}g_2\sqrt{P-\frac{1}{2^n}}\,,
\end{align}
where we have used $\Var[X] = \frac{1}{2^n}(P-\frac{1}{2^n})$.
\end{proof}

We note in the following remark that outside of the low purity regime, as purity decreases, our bounds monotonically decrease. 

\begin{remark}
The bounds in Proposition \ref{prop:VD_av} are monotonically increasing with purity $P$ for all $P\geq \frac{1}{2^n} \max (\frac{2^n}{2^n-1}, \frac{M}{M-1})$. 
\end{remark}
\begin{proof}
For $n\geq 2$, this can be seen by inspecting the partial derivative of the high purity bound, which is given by
\begin{equation}
    \frac{\partial G(n\geq 2,M,P\geq \frac{1}{2^n-1})}{\partial P} =  \frac{(M+1)P^M-\frac{1}{2^n}MP^{M-1}}{(P-\frac{1}{2^n})^2}\left(1-\frac{1}{2^n}\right)
\end{equation}
which is positive for all $P>\frac{M}{M-1}\frac{1}{2^n}$. Similarly, for $n=1$ the bound can be shown to be monotonically increasing in $P$ for all physically allowable values of $P$. We note that the bound for $n=1$ satisfies
\begin{equation}\label{eq:appdx_VD_derivativeP}
    \sqrt{2^{2M}G(P,n=1,M)}=  \frac{(1+x)^M-(1-x)^M}{x}
\end{equation}
where we have denoted $x=\sqrt{2P-1}$. The derivative of the numerator with respect to $x$ takes the value
\begin{align}
    \frac{d\left( (1+x)^M-(1-x)^M\right)}{dx} &= M\left((1+x)^{M-1}+(1-x)^{M-1}\right)\\
    &\geq M(1+(M-1)x + 1-(M-1)x)\\
    &=2M
\end{align}
where in order to obtain the inequality we have used the standard inequality $(1+x)^n \geq 1+nx,\; \forall x\geq -1, n>1$. Thus, as we only consider $M\geq 2$, the numerator of \eqref{eq:appdx_VD_derivativeP} increases at a faster rate than the denominator. Moreover, the second derivative of the numerator is positive, and both the numerator and denominator of \eqref{eq:appdx_VD_derivativeP} take value $0$ at $x=0$ ($P=1/2^n$). Thus, \eqref{eq:appdx_VD_derivativeP} is an increasing function in the purity $P$.
\end{proof}

We plot the bounds obtained in Proposition \ref{prop:VD_av} on the 2-design-averaged resolvability in Fig.~\ref{fig:vdbound}. First, in the left figure we plot the intermediate bound $\frac{\Var[\boldsymbol{\lambda}^{(M)}]}{\Var[\boldsymbol{\lambda}^{(1)}]}$ in \eqref{eq:VD_chi_av2} for states with 100 randomly generated spectra for increasing number of qubits $n$ and number of state copies $M$. {We see that all values lie below 1. Moreover, this plot visualizes the exponential scaling with $M$ for fixed $n$ and we observe that broadly, the bound is decreasing with increasing number of qubits $n$ for fixed $M$}. Further, as expected, the bound is always less than or equal to 1. Second, in order to demonstrate the behaviour of our final upper bound \eqref{eq:appdx_vd_avbound} we plot increasing number of state copies $M$ ranging from $2$ to $4$ for $n=2$. For each $M$, we randomly generate 10000 states and plot $\frac{\Var[\boldsymbol{\lambda}^{(M)}]}{\Var[\boldsymbol{\lambda}^{(1)}]}$ against the purity of the state as separate points. The final upper bound is then plotted as a line for each value of $M$.

\begin{figure}
    \centering
    {{\includegraphics[width=0.45\columnwidth]{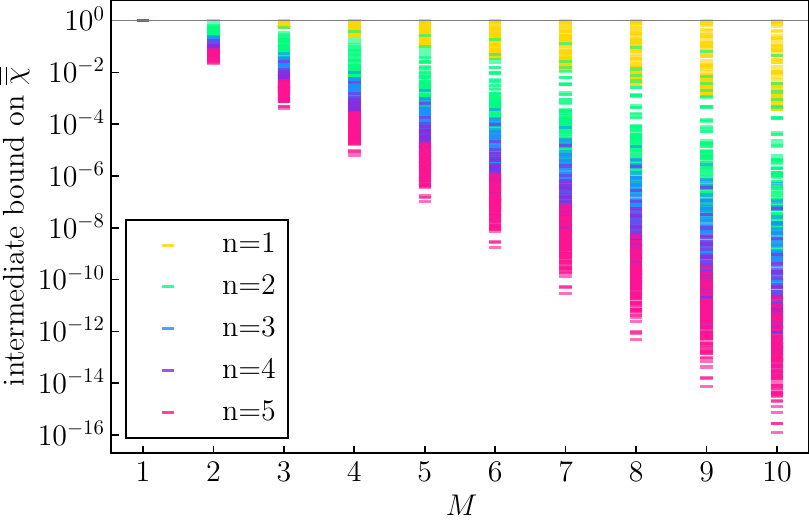} }}
    \qquad
    {{\includegraphics[width=0.45\columnwidth]{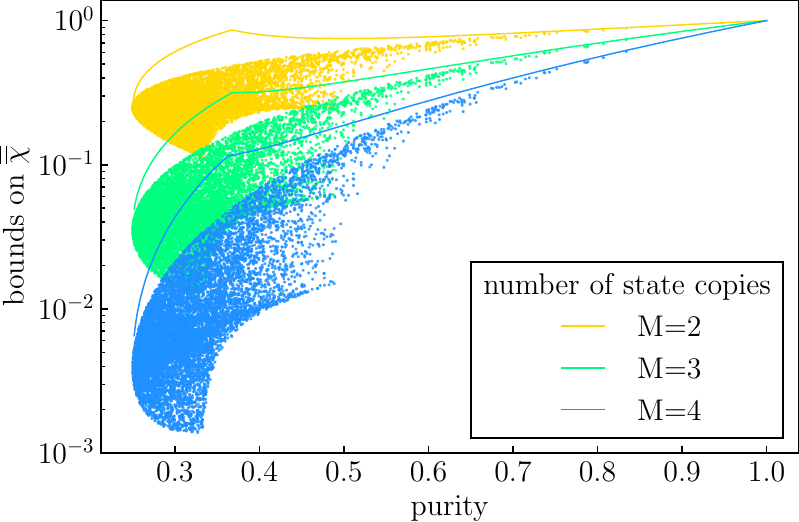} }}
    \caption{\textbf{Bounds on $\dbloverline{\rchi}$ for VD.} (Left): We plot the intermediate upper bound $\frac{\Var[\boldsymbol{\lambda}^{(M)}]}{\Var[\boldsymbol{\lambda}^{(1)}]}$ for randomly generated states with increasing number of qubits $n$ and number of state copies $M$. (Right): We plot as points the intermediate upper bound $\frac{\Var[\boldsymbol{\lambda}^{(M)}]}{\Var[\boldsymbol{\lambda}^{(1)}]}$ against purity for randomly generated states at different values of $M$ at $n=2$. We also plot the final bound  \eqref{eq:appdx_vd_avbound}, which is a function of purity, as a line.}
    \label{fig:vdbound}
\end{figure}

\subsection{Probabilistic Error Cancellation}\label{sec:appdx_QP}

\subsubsection{Error mitigation of multiple gates}\label{sec:appdx_QP_gammatot}
We first consider the error mitigation cost of mitigating multiple noise channels. Suppose we have two noisy gates which we represent as the channel
\begin{equation}
    \mathcal{N}'\circ\mathcal{U}'\circ\mathcal{N}\circ\mathcal{U}
\end{equation}
where $\{\mathcal{U}',\mathcal{U}\}$ are channels that represent the ideal gates and $\{\mathcal{N}',\mathcal{N}\}$ are noise channels. Note that this framework also includes as a special case the scenario where two gates act in parallel on different subsystems. Given {a} set of basis gates $\{\mathcal{B}_\alpha\}_\alpha$, we can construct a quasiprobability distribution for the ideal channel as
\begin{equation}\label{eq:appdx_doubleQPdistribution}
    \mathcal{U}'\circ\mathcal{U} = \sum_{\alpha,\beta} k_\alpha k_\beta\, \mathcal{B}_\alpha\circ\mathcal{N}'\circ\mathcal{U}'\circ\mathcal{B}_\beta\circ\mathcal{N}\circ\mathcal{U}\,.
\end{equation}
where we have used \eqref{eq:appdx_QuasiNoiselessgate}.
From \eqref{eq:appdx_doubleQPdistribution} we see that the error mitigation cost is 
\begin{equation}
    \gamma_{tot} = \sum_{\alpha,\beta} k^2_\alpha k^2_\beta = \gamma \gamma'
\end{equation}
where $\gamma, \gamma'$ are the individual error mitigation costs for each gate. We can see the above reasoning can be extended inductively to show that the error mitgation cost of a collection of gates with the probabilistic error cancellation is equal to the product of the individual error mitigation costs.

\subsubsection{Global depolarizing noise}

\begin{proposition}[Relative resolvability of Probabilistic Error Cancellation for global depolarizing noise]
Consider a quasi-probability method that corrects global depolarizing noise of the form \eqref{eq:gdepol_noise}. For any pair of states corresponding to points on the cost function landscape, the optimal quasiprobability scheme gives
\begin{equation}
    \rchi_{depol} = \frac{2^{2n}}{2^{2n}-p(2-p)} \geq 1 \,,
\end{equation}
for all $n\,\geq1,\,  p\in[0,1]$, which is achieved with access to noisy Pauli gates.           
\end{proposition}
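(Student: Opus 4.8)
The plan is to evaluate the relative resolvability directly from Definition \ref{def:resolvability}, which factorizes $\rchi_{depol}$ into the squared ratio of mitigated to unmitigated cost differences and the inverse of the error mitigation cost $\gamma$. I would first dispose of the cost-difference ratio. Because Probabilistic Error Cancellation is built to return an unbiased estimator of the noise-free cost, the mitigated difference is exactly the noise-free one, $\Delta C_m = \Delta C$. On the other hand, a single global depolarizing channel of the form \eqref{eq:gdepol_noise} shifts every cost value by the same state-independent amount $p\,\Tr[O]/2^n$ and rescales the state-dependent part by $(1-p)$, so that $\Delta\widetilde{C} = (1-p)\Delta C$ for any pair of points. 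Hence $(\Delta C_m/\Delta\widetilde{C})^2 = 1/(1-p)^2$, independently of $\boldsymbol{\theta}_1,\boldsymbol{\theta}_2$ and of $O$.

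The substantive work is computing $\gamma$ for the optimal scheme. The map to simulate is the inverse channel $\mathcal{D}^{-1}$, which is diagonal in the Pauli(-Liouville) basis with eigenvalue $1$ on the identity and $1/(1-p)$ on each of the $2^{2n}-1$ non-identity Paulis. I would express it in the Pauli-channel basis $\{\mathcal{P}_\alpha\}_{\alpha=1}^{2^{2n}}$ as $\mathcal{D}^{-1} = \frac{1}{1-p}\mathcal{I} - \frac{p}{1-p}\Phi$, where $\Phi(\cdot)=\Tr[\cdot]\,\id/2^n = 2^{-2n}\sum_\alpha \mathcal{P}_\alpha$ is the completely depolarizing channel realized as a uniform Pauli twirl. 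Reading off coefficients gives $q_1 = (2^{2n}-p)/((1-p)2^{2n})$ for the identity term and $q_{\alpha} = -p/((1-p)2^{2n})$ for each of the remaining $2^{2n}-1$ Paulis. Following the convention of Appendix \ref{sec:appdx_QP_gammatot}, in which each basis circuit is estimated separately and the Pauli-conjugated states all share the same single-shot variance, the error mitigation cost is $\gamma = \sum_\alpha q_\alpha^2$. After expanding the sum the $p^2$ terms cancel and the numerator factors as $2^{2n}\bigl(2^{2n}-p(2-p)\bigr)$, yielding $\gamma = (2^{2n}-p(2-p))/(2^{2n}(1-p)^2)$.

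Combining the two ingredients through Definition \ref{def:resolvability} gives $\rchi_{depol} = \frac{1}{\gamma}\frac{1}{(1-p)^2} = \frac{2^{2n}}{2^{2n}-p(2-p)}$, where the factors of $(1-p)^2$ cancel. The bound $\rchi_{depol}\geq 1$ then follows at once from the identity $p(2-p) = 1-(1-p)^2 \geq 0$ on $p\in[0,1]$, so the denominator never exceeds $2^{2n}$.

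The step I expect to be the main obstacle is pinning down $\gamma$ rigorously. I must justify that the symmetric Pauli decomposition is the \emph{optimal} quasiprobability scheme (invoking Ref.~\cite{takagi2020optimal}) and, more delicately, that the relevant error-mitigation cost reduces to $\sum_\alpha q_\alpha^2$. The latter rests on the observation that under global depolarizing noise every Pauli-conjugated circuit carries an identical single-shot measurement variance, so that the variance of the linearly combined estimator is exactly $\bigl(\sum_\alpha q_\alpha^2\bigr)\Var[\widetilde{C}]$; care is also needed to state precisely in what sense the correction is implemented ``with access to noisy Pauli gates'' so that the mitigation stays unbiased while the cost takes the claimed value.
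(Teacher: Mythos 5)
Your proposal is correct and follows essentially the same route as the paper: it uses the Takagi-et-al.\ optimal Pauli quasiprobability decomposition of $\mathcal{D}^{-1}$ (whose coefficients you re-derive rather than quote, but they agree), computes $\gamma=\sum_\alpha q_\alpha^2=\frac{2^{2n}-2p+p^2}{2^{2n}(1-p)^2}$ under the same equal-variance convention the paper adopts in its appendix, and combines this with $\Delta C_m=\Delta C$ and $\Delta\widetilde{C}=(1-p)\Delta C$ to get the stated $\rchi_{depol}$. The optimality concern you flag is resolved in the paper exactly as you anticipate, by citing Ref.~\cite{takagi2020optimal}.
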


\begin{proof}
Ref.~\cite{takagi2020optimal} gives the optimal quasi-probability decomposition for the inverse noise channel as 
\begin{equation}
    \mathcal{D}^{-1} =  \left(1 + \frac{(2^{2n}-1)p}{2^{2n}(1-p)} \right)\mathcal{I} - \sum_{i=1}^{2^{2n}-1} \frac{p}{2^{2n}(1-p)}\mathcal{P}_i\,,
\end{equation}
where $\mathcal{I}$ is the identity channel and $\mathcal{P}_i$ is the Pauli channel corresponding to the $i$th Pauli tensor product. This has corresponding error mitigation cost 
\begin{equation}\label{eq:QP_gamma-global}
    \gamma = 
    \frac{2^{2n}-2p+p^2}{2^{2n}(1-p)^2}.
\end{equation}
Assuming perfect correction we have $\Delta \widetilde{C} = (1-p)\Delta C$ which implies
\begin{equation}
     \rchi_{depol} = \frac{2^{2n}}{2^{2n}-2p+p^2}\,,
\end{equation}
which is greater than or equal to 1 as $-2p + p^2 \leq0$ for all $0\leq p\leq1$.
\end{proof}

\subsubsection{Local depolarizing noise}

Here we consider a model of cost concentration due to a single instance of local depolarizing noise in a circuit. We presume that the concentration follows a similar form of scaling to global depolarizing noise and a tensor product of local depolarizing noise (see Eq.~\eqref{eq:costconcentration}). We show that, under this assumption, the relative resolvability has regimes of being greater than 1 or less than 1, depending on the strength of the cost concentration.

\begin{supproposition}[Relative resolvability of Probabilistic Error Cancellation with one instance of local depolarizing noise]\label{prop:QP_local-depol}
Consider a single instance of local depolarizing noise occurring with error probability $p$ acting at an arbitrary point in the parameterized circuit. Suppose that due to this noise channel we have
\begin{equation}\label{eq:QP_local-assumption}
    \langle\Delta \widetilde{C}(\boldsymbol{\theta}_{i,*})\rangle_i \geq (1-b_\alpha p) \langle\Delta C(\boldsymbol{\theta}_{i,*})\rangle_i
\end{equation}
for all $p\in [0,1]$ where $\langle\cdot\rangle_i$ denotes an average over all avaliable parameters and $b_\alpha$ where is some positive constant. Then the optimal quasiprobability scheme gives:
\begin{itemize}[leftmargin=*]
    \item for $b_\alpha\leq \frac{3}{4}$,
\begin{equation}
    \overline{\rchi}\leq 1\,,\quad \forall p \in [0,1],
\end{equation}
    \item for $\frac{3}{4}<b_\alpha\leq1$,
\begin{align}
    &\overline{\rchi}\leq 1 + \frac{1}{4}p(2-p)+\mathcal{O}(p^2)\,,\quad \forall p \in [0,1]\,, \\
    &\overline{\rchi}>1\,,\quad \forall p \in \Big(0,\,1-\sqrt[3]{3(b^{-1}-1)}\Big]\,,
\end{align}
    \item for $b_\alpha>1$,
\begin{equation}
    \overline{\rchi}> 1+\frac{p(2-p)}{4-p(2-p)}\,, \quad \forall p \in \big(0,\,1/b_\alpha\big]\,.
\end{equation}
\end{itemize}
\end{supproposition}

\begin{proof}
From Eq.~\eqref{eq:QP_gamma-global}, we can write the optimal error mitigation cost for one instance of local depolarizing noise acting on one qubit as
\begin{equation}\label{eq:QP_gamma-local}
    \gamma = 
    \frac{4-2p+p^2}{4(1-p)^2}\,.
\end{equation}
Now, due to our assumption \eqref{eq:QP_local-assumption} and assuming perfect implementation of the basis of noisy gates (leading to perfect correction of the noise) we have 
\begin{equation}
     \overline{\rchi} \leq \frac{4(1-p)^2}{(4-2p+p^2)(1-bp)^2}\,,
\end{equation}
and we denote the quantity on the right hand side as $h(p)$. Note that for any value of $b$, $h(p=0)=1$ and $h(p=1)=0$. The partial derivative can be found to satisfy
\begin{equation}\label{eq:QP_derivative}
    \frac{\partial h}{\partial p} \propto \left(1-p\right)\left(1-bp\right)\left(-p^3 + 3p^2 - 3p + \frac{1}{b} -4\left(\frac{1}{b}-1\right)\right)\,,
\end{equation}
where the proportionality factor we omit is positive for all $b\geq0$ and $p\in[0,1]$. The third bracket is a cubic form with discriminant 
\begin{equation}
    \Delta = \frac{108}{b^2}\left(-8b^2 + 11b- 4\right)\,,
\end{equation}
which is strictly negative for all $b$. Thus, the cubic form only has one real root and, inspecting its behaviour for large $p$, we can conclude it has negative gradient for all $p$. 
The cubic form has root at $p=0$ when $b=3/4$. More generally, the root can be found to take the form
\begin{equation}
    p'=1+\sqrt[3]{3(1-b^{-1})}\,.
\end{equation}
By evaluating the second derivative of $h(p)$, this root can be seen to correspond to a local maximum of $h(p)$. We now find the maximum value of $h(p)$ over the interval $p\in [0,1]$ for different regimes of cost concentration strength $b$.

First, we inspect the regime where $b\leq3/4$. In this case $p'\leq0$ and thus $\frac{\partial h}{\partial p}\leq 0$ for $p\in[0,1]$. Thus the maximum value of $h(p)$ on the interval $p\in[0,1]$ is $h(0)=1$. We can then conclude that $\overline{\rchi}\leq1$ with bound saturated in the limit of zero error probability.

Now consider the regime $3/4<b\leq1$. In this case $0< p' \leq1$ and $\frac{\partial h}{\partial p}> 0$ for small values of $p$. Specifically, it is clear that $\overline{\rchi} >1$ for $0<p\leq 1+\sqrt[3]{3(1-b^{-1})}$. The upper limit on $p$ can be raised, however, the exact interval is obtained by solving a quartic equation which we omit here as it is not very insightful. Moreover, the upper limit is tight in the limit $b\rightarrow 1$ and we obtain the result that when $b=1$, $\overline{\rchi} >0$ for all $p \in (0,1)$.    

Finally, consider the regime $b\geq1$. Now $\frac{\partial h}{\partial p}$ has a different root $p''=1/b$ due to the second bracket in \eqref{eq:QP_derivative}. Again, this can be shown to correspond to a maximum of $\overline{\rchi}$ and we can write $\overline{\rchi}>1$ for $0<p\leq1/b$.
\end{proof}

\begin{proposition}[Scaling of Probabilistic Error Cancellation with local depolarizing noise]
Consider tensor-product local depolarizing noise with local depolarizing probability $p$ acting in $L$ instances through a depth $L$ circuit as in Eq.~\eqref{eq:noisystate}. Suppose that the effect of this noise is to cause cost concentration 
\begin{equation}\label{eq:appdx_QP_localscaling-assumption}
    \langle\Delta \widetilde{C}(\boldsymbol{\theta}_{i,*})\rangle_i = Aq^L \langle\Delta C(\boldsymbol{\theta}_{i,*})\rangle_i\,,
\end{equation}
for some constant $A$ and noise parameter $q\in [0,1)$. The optimal quasiprobability method to mitigate the depolarizing noise in the circuit yields 
\begin{equation}
    \overline{\rchi} = \frac{1}{A^2q^{2L}}\left(Q(p) \right)^{nL} \,,
\end{equation}
where $0\leq Q(p)\leq 1$ for all $p$. Thus, the average relative resolvability has unfavourable scaling with system size.
\end{proposition}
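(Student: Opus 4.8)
The plan is to assemble the result from two ingredients already established: the optimal single-qubit depolarizing error mitigation cost and the multiplicativity of the error mitigation cost across independent noisy channels. First I would recall from Eq.~\eqref{eq:QP_gamma-local} (equivalently, setting $n=1$ in Eq.~\eqref{eq:QP_gamma-global}) that the optimal quasiprobability decomposition of Ref.~\cite{takagi2020optimal} corrects a single-qubit depolarizing channel of strength $p$ at error mitigation cost
\begin{equation}
    \gamma = \frac{4-2p+p^2}{4(1-p)^2}\,.
\end{equation}
Since the depth-$L$ circuit carries $L$ layers of the local depolarizing channel $\NC=\bigotimes_{i=1}^n\DC_i$ from Eq.~\eqref{eq:noisystate}, there are $nL$ independent single-qubit depolarizing channels to mitigate. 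By the multiplicativity of the error mitigation cost established in Appendix~\ref{sec:appdx_QP_gammatot}, the total cost is the product of the individual costs, so $\gamma_{tot} = \gamma^{nL}$.

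Next I would invoke the standing assumption of perfect noise inversion (as in the preceding propositions) so that the mitigated cost differences recover the noise-free values, $\langle\Delta C_m(\boldsymbol{\theta}_{i,*})\rangle_i = \langle\Delta C(\boldsymbol{\theta}_{i,*})\rangle_i$. Combined with the cost-concentration assumption \eqref{eq:appdx_QP_localscaling-assumption}, this gives
\begin{equation}
    \frac{\langle\Delta C_m(\boldsymbol{\theta}_{i,*})\rangle_i}{\langle\Delta \widetilde{C}(\boldsymbol{\theta}_{i,*})\rangle_i} = \frac{1}{Aq^L}\,.
\end{equation}
Inserting $\gamma_{tot}$ and this ratio into the definition of the average relative resolvability (Definition~\ref{def:av_resolvability}) yields
\begin{equation}
    \overline{\rchi} = \frac{1}{\gamma_{tot}}\left(\frac{\langle\Delta C_m\rangle_i}{\langle\Delta\widetilde{C}\rangle_i}\right)^2 = \frac{1}{A^2 q^{2L}}\left(\frac{4(1-p)^2}{4-2p+p^2}\right)^{nL}\,,
\end{equation}
so that identifying $Q(p)=4(1-p)^2/(4-2p+p^2)$ gives the claimed form. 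I would close by verifying the range of $Q$: rewriting $Q(p)=1-3p(2-p)/(4-p(2-p))$ and noting that for $p\in(0,1]$ both $3p(2-p)$ and $4-p(2-p)$ are strictly positive (the quadratic $4-p(2-p)$ has negative discriminant), so $0\le Q(p)<1$.

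The main subtlety is conceptual rather than computational: justifying that mitigating each local channel independently with its optimal single-qubit decomposition is the relevant ``optimal'' scheme, and that the resulting costs multiply. This leans entirely on the inductive argument in Appendix~\ref{sec:appdx_QP_gammatot} and on treating the $nL$ depolarizing channels as independent instances of noise. A secondary point to handle carefully is the bookkeeping of the number of channels---$nL$, one per qubit per layer---which fixes the exponent and hence the exponential degradation in system size that is the whole point of the proposition; since $Q(p)\in[0,1)$ for any fixed $p>0$, the factor $(Q(p))^{nL}$ decays exponentially in $n$, overwhelming the prefactor $A^{-2}q^{-2L}$ and driving $\overline{\rchi}$ to unfavourable values at scale.
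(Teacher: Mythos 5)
Your proposal is correct and follows essentially the same route as the paper's proof: multiplicativity of the error mitigation cost over the $nL$ single-qubit depolarizing channels (Appendix~\ref{sec:appdx_QP_gammatot}), the optimal single-qubit cost from Eq.~\eqref{eq:QP_gamma-local}, perfect noise inversion so that $\Delta C_m = \Delta C$, and substitution into the resolvability definition together with the concentration assumption. Your identification $Q(p)=4(1-p)^2/(4-2p+p^2)=1-3p(2-p)/(4-p(2-p))$ is the correct one (the paper's proof contains a typographical inversion of this fraction in one place, which your write-up avoids).
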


\begin{proof}
As shown in Section \ref{sec:appdx_QP_gammatot}, error mitigation cost of multiple gates with probabilistic error cancellation is the product of the individual error mitigation costs. Thus, for the collection of gates considered, we have total error mitigation cost
\begin{equation}
    \gamma_{tot} = \left(\frac{4(1-p)^2}{4-2p+p^2} \right)^{nL}\,,
\end{equation}
where we have used Eq.~\eqref{eq:QP_gamma-local}. We suppose that mitigation perfectly corrects the error, such that $\Delta C_m(\thv_{i,*}) = \Delta C(\thv_{i,*})$. Combining this with our assumption \eqref{eq:appdx_QP_localscaling-assumption} we obtain the desired result, where we denote
\begin{equation}
    Q(p)= \frac{4-2p+p^2}{4(1-p)^2} = 1 - \frac{3p(2-p)}{4-p(2-p)}\,,
\end{equation}
which clearly satisfies $0\leq Q(p) \leq 1$.
\end{proof}

\subsection{Linear Ansatz Methods}

\subsubsection{Global depolarizing noise is exactly correctable}

Consider the linear ansatz
\begin{equation}
    C_m(\boldsymbol{a}) = a_1\widetilde{C} + a_2\,,
\end{equation}
where we denote $\boldsymbol{a} = (a_1,a_2)$. As shown in \cite{czarnik2020error}, this ansatz is particularly suited to global depolarizing noise and the ansatz can correct the noise exactly. Namely, the $n$-qubit noise channel
\begin{equation}
    \rho \xrightarrow[]{\mathcal{D}} (1-p)^L\rho + (1-(1-p)^L)\frac{\id}{2^n}
\end{equation}
can be exactly corrected by using
\begin{equation}
    a_1 = \frac{1}{(1-p)^L}\,,\quad a_2=-\frac{(1-(1-p)^L)}{(1-p)^L}\Tr[O]/2^n\,.
\end{equation}
As correction is exact, $\Delta C_m =\Delta C$. It can also be seen that $\Delta \widetilde{C} = (1-p)^L\Delta C$ and the error mitigation cost is $\gamma = 1/(1-p)^{2L}$. This gives $\rchi =1$ for any pair of cost function points. 

Note in this discussion we have neglected the shot burden of training. Whilst this may be significant and difficult to quantify for other noise channels, in the case of global depolarizing noise this is minimal as only two training data points are required and the ansatz is universal for any state.

\subsubsection{Relative resolvability between two points with same ansatz applied}

\begin{proposition}[Linear ansatz methods]
Consider any error mitigation strategy that mitigates noisy cost function value $\widetilde{C}(\boldsymbol{\theta})$ by constructing an estimator $C_m(\boldsymbol{\theta})$ of the form
\eqref{eq:CDR_ansatz}. For any two noisy cost function points to which the same ansatz is applied, we have
\begin{equation}
    \rchi = 1\,,
\end{equation}
for any noise process.
\end{proposition}

\begin{proof}
By applying the same ansatz of the form \eqref{eq:CDR_ansatz} to two noisy cost function points corresponding to parameter sets $\thv_1, \thv_2$, one can write
\begin{align}
    C_m(\thv_1,\boldsymbol{a}) = a_1\widetilde{C}(\thv_1) + a_2\,, \\
    C_m(\thv_2,\boldsymbol{a}) = a_1\widetilde{C}(\thv_2) + a_2\,.
\end{align}
This gives $\gamma = a_1$ and $\Delta C_m = a_1\Delta\widetilde{C} $. Thus, substituting these quantities into Definition \ref{def:resolvability} one obtains $\rchi=1$ as required.
\end{proof}

\section{Numerical simulations - implementation details  }\label{appendix:numerics}

\begin{figure}[t]
\centering
    \includegraphics[width=0.8\columnwidth]{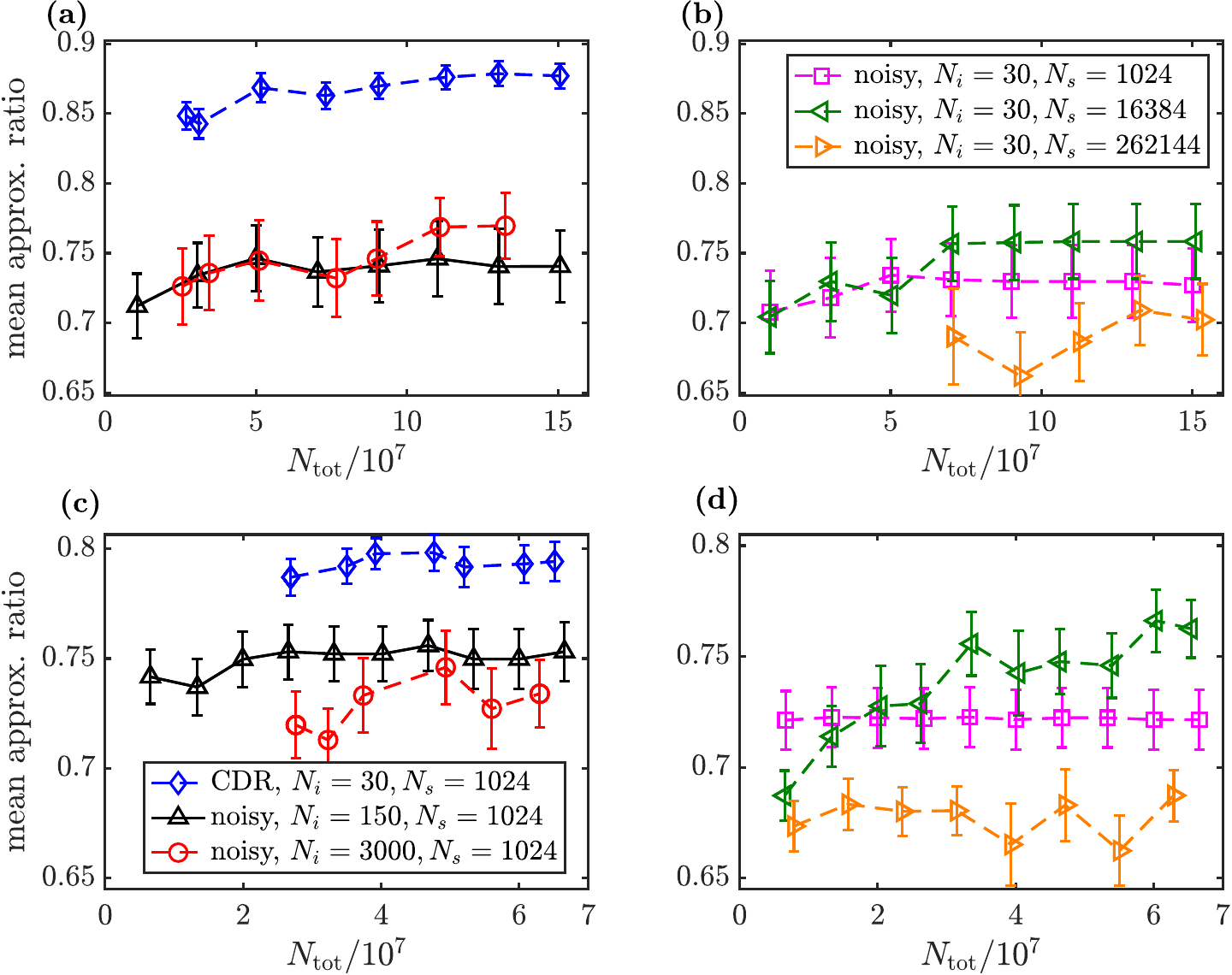}
    \caption{\textbf{Benchmarking various implementations of the noisy optimization for {5-qubit and 8-qubit} MaxCut QAOA {with $\boldsymbol{p=4}$}.} In \textbf{(a,b)}  we plot the approximation ratio averaged over $36$   $n=5$  Max-Cut graphs  chosen randomly from  the Erd\"os-R\'enyi ensemble  as a function of $N_{\rm tot}$. The error bars are computed as in Fig.~\ref{fig:CDR-numerics}. {In \textbf{(c,d)} we show in a similar manner the results for  $30$   $n=8$ random Max-Cut Erd\"os-R\'enyi graphs.}   We compare the results for various numbers $N_i$ of optimization instances initialized randomly and various numbers of shots $N_s$ per cost function evaluation. As a reference, we show the results of CDR optimization for $p=4$. {For the 5-qubit (8-qubit) case we have  {$N_{\rm tot}=10^7$ to $1.5\times10^8$  ( $N_{\rm tot}=1\times10^7$ to $7\times10^7$)} as in Fig.~\ref{fig:CDR-numerics} (Fig.~\ref{fig:CDR-numerics-n8}).}    Additionally, as in Fig.~\ref{fig:CDR-numerics} we use the approximation ratio computed with the exact energy to benchmark the optimization, and in the case of $N_i>1$ we choose as the result of optimization the best instance determined according to the optimized cost function. {The error bars are computed as in Fig.~\ref{fig:CDR-numerics}.} We consider various values of $N_s = 1024,\, 16384, \,262144$ for $N_i=30$ and various values of $N_i=30, \, 150, \, 3000$ for $N_s=1024$. For the 5-qubit case, we find that $N_i=3000, N_s=1024$ yields the best results although differences in quality between most of the noisy optimization implementations are relatively small in comparison to the CDR mitigated optimization.  {In the case of $n=8$ the best noisy results are obtained for $N_i=30, N_s=16384$, but again different choices of $N_i$ and $N_s$ lead to similar quality of the solutions.}       }
    \label{fig:num_details}
\end{figure}

We perform our optimizations using the MATLAB implementation of the Nelder-Mead algorithm~\cite{lagarias1998convergence}. For each MaxCut graph, we perform optimization independently for $N_i$ random choices of an initial simplex. We evaluate the cost function by performing perfect sampling of the simulated state with $N_s$ shots. After each iteration of the Nelder-Mead algorithm, we compute the total cost of optimization per graph $N_{\rm tot}$ by summing the shot budget spent for all $N_i$ instances of the optimization.  {To  analyze the convergence of results with $N_{\rm tot}$, as shown in Figs.~\ref{fig:CDR-numerics}, \ref{fig:num_details}, we take the optimization results after $N_{\rm tot}$ shots to be the best of $N_i$ instances  according to the optimized  cost function.} The optimization is terminated  for $n=5$ {($n=8$)} when  $N_{\rm tot}$ exceeds $1.5\times10^8$ {($7\times 10^7$)}.

\subsection{CDR-mitigated optimization }\label{appendix:CDR_numerics}

We perform CDR-mitigated optimization with $N_i=30$ and $N_s=1024$. We use training circuits constructed with a  non-Clifford gates projection algorithm of~\cite{lowe2020unified}. To construct the  training circuits we decompose $e^{i \gamma_j H_{\rm MaxCut}}$ to native gates of an IBM quantum computer using a decomposition from~\cite{abhijith2020quantum}.  In order to account for linear connectivity of the simulated device we use SWAP gates to implement $ e^{-i \gamma_j Z_k Z_l} $ for non  nearest-neighbors terms.    The training circuits contain $100$ near-Clifford circuits with at most $30$ non-Clifford gates. In the case of circuits with  fewer than $60$ non-Clifford gates, we construct training circuits with half of the non-Clifford gates replaced by Clifford gates. We evaluate the cost function for the training circuits using perfect sampling and $N_s=1024$ shots. We perform CDR mitigation for each 2-body  term of $H_{\rm MaxCut}$
independently. In general, in order to maximize the quality of the mitigation one should construct the training circuits independently for each new set of QAOA angles. Here, for the sake of shot efficiency, for each new set of parameters we compute the training set from scratch only if the 1-norm distance of its QAOA angles  $(\gamma_1,\beta_1, \gamma_2,\beta_2, \dots \gamma_p,\beta_p)$ from the closest point of a simplex is larger than $0.01$.  Otherwise, we use the CDR linear ansatz  for the closest point of the simplex.

For the noisy (unmitigated) optimization we benchmark various combinations of  
$N_i$ and $N_s$ values for {$n=5$, $p=4$}. In particular we consider increasing $N_s$ for $N_i=30$ and increasing $N_i$ for $N_s=1024$.   We gather the results in Fig.~\ref{fig:num_details}. {We  find that  for $n=5$ ($n=8$)  while using  $N_{\rm tot}$ ranging from {$10^7$ to $1.5\times10^8$ (from $10^7$ to $7\times10^7$)}, as considered in  Fig.~\ref{fig:CDR-numerics}~(\ref{fig:CDR-numerics-n8}), the best results are obtained for {$N_i=3000$, $N_s=1024$ ($N_i=30$, $N_s=16384$)}.  We use these values for the noisy optimization presented in Figs.~\ref{fig:CDR-numerics}~and~\ref{fig:CDR-numerics-n8}, respectively.} 

\begin{figure}[t]
\centering
    \includegraphics[width=0.9\columnwidth]{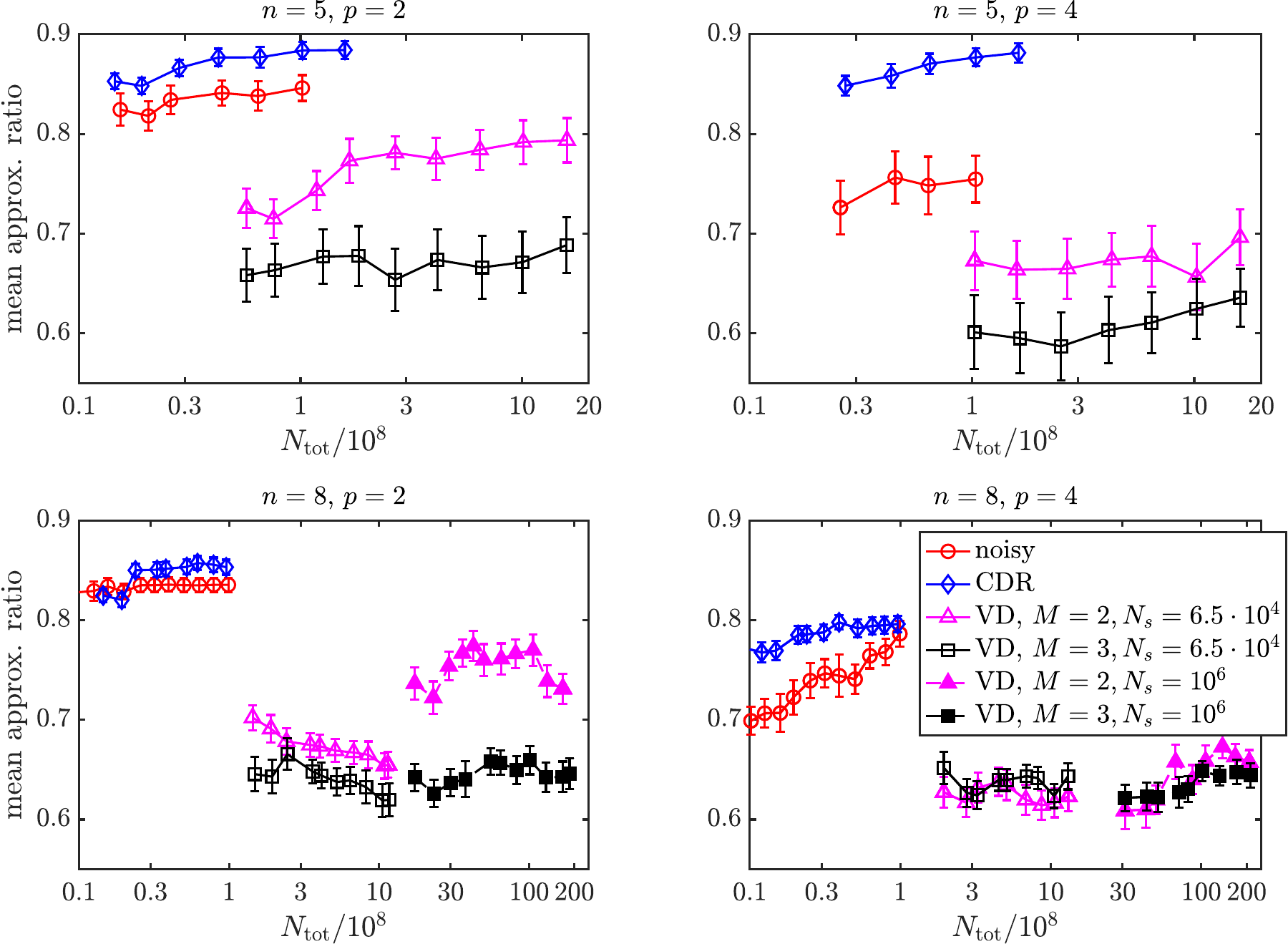}
    \caption{{\textbf{Virtual Distillation mitigated optimization for $\boldsymbol{p=2,4}$, 5-qubit  and 8-qubit MaxCut QAOA.} We plot the approximation ratio averaged over instances of Max-Cut graphs {randomly chosen from the} Erd\"os-R\'enyi ensemble  as a function of total shot number $N_{\rm tot}$. For $n=5$ ($n=8$) we choose $36$ ($30$) graphs. The results were obtained with $N_i=30$ initializations and $\widetilde{N}_s=65536$ shots per $\Tr[\widetilde{\rho}^M Z_i Z_j]$ and $\Tr[\widetilde{\rho}^M]$ estimation. For larger $n=8$ graphs we {also show} results obtained with $\widetilde{N}_s=10^6$, $N_i=30$. For reference we also present our results of CDR-mitigated and noisy optimization from {Figs.~\ref{fig:CDR-numerics}~and~\ref{fig:CDR-numerics-n8}. The error bars are computed as described in the caption of Fig.~\ref{fig:CDR-numerics}}. We observe that for this setting  the optimization with Virtual Distillation does not outperform the noisy or CDR-mitigated optimization.  }       }
    \label{fig:num_VD}
\end{figure}

\subsection{Optimization with Virtual Distillation  }\label{appendix:VD_numerics}

In this Appendix, we compare 5-qubit {and 8-qubit} MaxCut QAOA optimization of the VD-mitigated cost function with optimization of the noisy and CDR-mitigated cost function for $p=2,4$. We perform the comparison using the same randomly chosen graphs from the Erd\"os-R\'enyi ensemble as in Figs.~\ref{fig:CDR-numerics},~\ref{fig:CDR-numerics-n8}.  We perform VD mitigation for each expectation value of a 2-site term of $H_{\rm MaxCut}$ according to~\eqref{eq:VD_A}. Therefore, a key assumption is that we neglect derangement noise, which would affect realistic VD implementation on hardware~\cite{koczor2020exponential}. We use the Nelder-Mead algorithm  as described {above}.
We have $N_i=30$ as for CDR simulations from  Figs.~\ref{fig:CDR-numerics},~\ref{fig:CDR-numerics-n8} and assign {$\widetilde{N}_s=65536,10^6$} shots in order to estimate  $\Tr[\widetilde{\rho}^M Z_iZ_j]$ for each 2-site term of  $H_{\rm MaxCut}$ and  $\Tr[\widetilde{\rho}^M]$. Consequently, the total shot cost of the mitigated cost function estimation is $(n_e+1)\times \widetilde{N}_s$, where $n_e$ is the number of Max-Cut graph edges. We consider $M=2,\,3$ state copies  as  the shot cost of VD mitigation for given precision grows with increasing $M$~\cite{czarnik2021qubit} and $M=2,\,3$ was shown to be sufficient for typical applications~\cite{koczor2021dominant}. We find that {for this setup} $M=2$ gives better results than $M=3$ similar to   our analytical results.  
Here we allow for $N_{\rm tot}$  {up to $2\times10^9$ for $n=5$ and up to $2\times10^{10}$ for $n=8$, i.e.~up to $1$ and $2$ order of magnitudes more shots than considered  for CDR-mitigated and noisy optimization in Figs.~\ref{fig:CDR-numerics},~\ref{fig:CDR-numerics-n8}, respectively.    

We gather the results in Figs.~\ref{fig:num_VD} comparing them with noisy and CDR mitigated optimization from Figs.~\ref{fig:CDR-numerics},~\ref{fig:CDR-numerics-n8}.  We find that even with smaller $N_{\rm tot}$ the  noisy and CDR-mitigated optimization outperforms the VD-mitigated optimization. This example shows that even for  circuits outside of the NIBP regime there is no guarantee that using an error-mitigated cost function  leads to better performance than noisy cost function optimization. We note that  this result does not prohibit  VD-mitigated optimization advantage   for different choices of $N_{\rm tot},\, M$
or the shot number per cost function evaluation. 

\begin{figure}[t]
\centering
    \includegraphics[width=0.8\columnwidth]{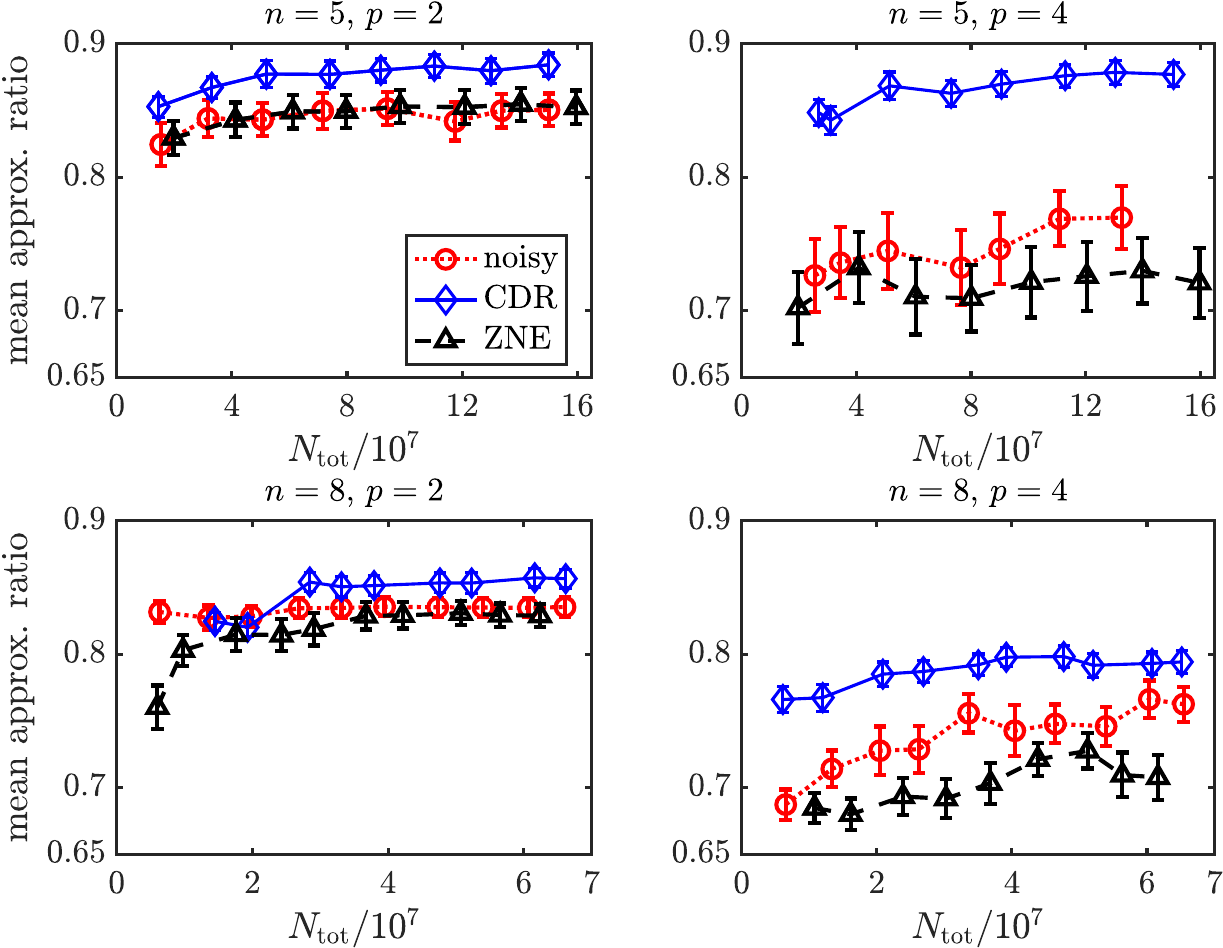}
    \caption{ {\textbf{Zero noise extrapolation mitigated optimization for $\boldsymbol{p=2,4}$, 5-qubit  and 8-qubit MaxCut QAOA.} We plot the approximation ratio averaged over instances of Max-Cut graphs {randomly chosen from the} Erd\"os-R\'enyi ensemble as a function of total shot number $N_{\rm tot}$. For $n=5$ ($n=8$), we choose $36$ ($30$) graphs. The results were obtained with $N_i=30$ initializations and $30000$ shots per ZNE-mitigated cost function evaluation. ZNE is performed by CNOT identity insertions with noise levels amplified by factors $a_0=1, a_1=3, a_2=5$, and with a linear extrapolation. The error bars are computed as in Fig.~\ref{fig:CDR-numerics}. For reference, we also present the results of CDR-mitigated and noisy optimization from Figs.~\ref{fig:CDR-numerics}, \ref{fig:CDR-numerics-n8}.  }       }
    \label{fig:num_ZNE}
\end{figure}

\subsection{Optimization with Zero-Noise Extrapolation (ZNE) }\label{appendix:ZNE_numerics}

{
Here, we analyze the optimization of ZNE-mitigated cost function for 5-qubit and 8-qubit MaxCut QAOA optimization. As in Figs.~\ref{fig:CDR-numerics},~\ref{fig:CDR-numerics-n8},~\ref{fig:num_VD} for CDR and VD,  we investigate number of rounds satisfying $p=2,4$. More precisely, we have used the same Erd\"os-R\'enyi  graphs as the ones in the benchmark simulations described above. We use the Nelder-Mead algorithm with $N_i=30$ initializations, the same as for the  CDR- and VD-mitigated optimization. For each noise level used to perform an extrapolation to the zero-noise limit, we evaluate the cost function with $N_s=10000$ shots.   Consequently, the shot cost of ZNE-mitigated cost function evaluation is $n_l N_s$ where $n_l$ is the number of noise levels. Here we investigate $N_{\rm tot}$ similar to $N_{\rm tot}$ for the noisy and CDR-mitigated optimization, i.e., $N_{\rm tot} = (2-16) \times 10^7$ for $n=5$, and $N_{\rm tot} = (1-7)\times 10^7$ for $n=8$. We have found that for considered here $N_{\rm tot}$ and $n=5$ values, our choice of $N_s$ leads to a better quality of the ZNE-mitigated optimization than  $N_s=1000$ and $N_s=100000$.
}

{
ZNE is performed by CNOT identity insertions, and we consider noise levels amplified by factors $a_0=1, a_1=3, a_2=5$, i.e., a CNOT gate in the original circuit is replaced by $a_i$ CNOTs~\cite{giurgica2020digital}. We use linear extrapolation to obtain the ZNE-mitigated expectation values for each term in the Hamiltonian. Such an extrapolation in the presence of more than two noise levels has been proposed to improve the robustness of ZNE results~\cite{giurgica2020digital} for realistic noise whose strength is challenging to scale accurately and has been applied in real-hardware ZNE implementations~\cite{kim2021scalable}. Furthermore, we find that for $n=5$ and $N_{\rm tot} = (1-10) \times 10^7$, using all three values of $a_l$ for linear extrpolation outperforms ZNE-mitigated optimization with  $a_0=1,\, a_1=3$. We also find that this choice outperforms an approach with  $a_0=1,\, a_1=3, \,a_2=5$ using quadratic extrapolation for the considered problem parameters.
We note that a detailed characterization of the effects of choice of the noise levels on performance of ZNE-mitigated optimization is beyond the scope of this work. However, we explore a range a hyperparameters in order to quickly gauge the power of a relatively simple ZNE approach in comparison to CDR and VD approaches analyzed above.    
}

{
We gather the results in  Fig.~\ref{fig:num_ZNE} plotting the approximation ratio averaged over graph instances versus $N_{\rm tot}$, the same as in Figs.~\ref{fig:CDR-numerics},~\ref{fig:CDR-numerics-n8},~\ref{fig:num_VD}, and comparing it to the noisy and CDR-mitigated results. For $p=2$, the ZNE-mitigated optimization gives results similar to the noisy one. For $p=4$, the ZNE-mitigated approximation ratios are slightly worse than the ones obtained by the noisy optimization. 
}

\end{document}